\def\BibTeX{{\rm B\kern-.05em{\sc i\kern-.025em b}\kern-.08em
		T\kern-.1667em\lower.7ex\hbox{E}\kern-.125emX}}
\newtheorem{theorem}{Theorem}
\newtheorem{definition}{\textbf{Definition}}
\newtheorem{lemma}{Lemma}
\newtheorem{remark}{Remark}
\newtheorem{corollary}{Corollary}
\begin{document}
	\title{Observer-based Differentially Private Consensus  for Linear Multi-agent Systems}
	\author{Xiaofeng Zong, \IEEEmembership{Member, IEEE}, Ming-Yu Wang, Jimin Wang, \IEEEmembership{Member, IEEE}, and Ji-Feng Zhang, \IEEEmembership{Fellow, IEEE}
		\thanks{The research was supported by the National Natural Science Foundations of China under Grants 62522319, 62433020, 62473347, 62261136550, and T2293770. Corresponding author: Ji-Feng Zhang.}
		\thanks{Xiaofeng Zong is with the School of Future Technology, China University of Geosciences, Wuhan 430074, China, the School of Automation, China University of Geosciences, Wuhan 430074, China, and also with the Hubei Key Laboratory of Advanced Control and Intelligent Automation for Complex Systems, Wuhan 430074, China (e-mail: zongxf@cug.edu.cn)}
		\thanks{Ming-Yu Wang is with the School of Future Technology, China University of Geosciences, Wuhan 430074, China (e-mail: wangMing-Yu@cug.edu.cn)}
		\thanks{Jimin Wang is with the School of Automation and Electrical Engineering, University of Science and Technology Beijing, Beijing 100083, and also with the Key Laboratory of Knowledge Automation for Industrial Processes, Ministry of Education, Beijing 100083, China (e-mail:jimwang@ustb.edu.cn)}
		\thanks{Ji-Feng Zhang is with the School of Automation and Electrical Engineering, Zhongyuan University of Technology, Zheng Zhou 450007; and also with the State Key Laboratory of Mathematical Sciences, Academy of Mathematics and Systems Science, Chinese Academy of Sciences, Beijing 100190, China. (e-mail: jif@iss.ac.cn)}}

	\maketitle
	
	\begin{abstract}
		This paper investigates the differentially private consensus problem for general linear multi-agent systems (MASs) based on output feedback protocols.
		To protect the output information, which is considered private data and may be at high risk of exposure, Laplace noise is added to the information exchange.
		The conditions for achieving mean square and almost sure consensus in observer-based MASs are established using the backstepping method and the convergence theory for nonnegative almost supermartingales. It is shown that the separation principle remains valid for the consensus problem of linear MASs with decaying Laplace noise.  Furthermore, the convergence rate is provided.
		Then, a joint design framework is developed for state estimation gain, feedback control gain, and noise to ensure the preservation of $\epsilon$-differential privacy. The output information of each agent is shown to be protected at every time step.
		Finally, sufficient conditions are established for simultaneously achieving consensus and preserving differential privacy for linear MASs utilizing both full-order and reduced-order observers. Meanwhile, an $\epsilon^\star$-differentially private consensus is achieved to meet the desired privacy level. Two simulation examples are provided to validate the theoretical results.
	\end{abstract}
	
	\begin{IEEEkeywords}
		Differentially private consensus, Full-order observer, Reduced-order observer, Multi-agent systems, Laplace noise
	\end{IEEEkeywords}
	\section{Introduction}
	Recent decades have witnessed the great progress of consensus theory for multi-agent systems (MASs). Various kinds of consensus are studied, such as average consensus \cite{ac1,ac2}, bipartite consensus\cite{bc1,bc2}, group consensus\cite{li2022group,li2022group1}, leader-follower consensus\cite{lc1,lc2}, and finite-time consensus\cite{fc1,fc2}. In most consensus protocols, agents achieve consensus by exchanging their state information with their neighbors. However, in many scenarios, agents do not want their real state information to be exposed to other agents. For example, in a multi-unmanned aerial vehicles (UAVs) system, the trajectory of each UAV is private and does not want to be obtained by other UAVs.  {Another example can be found in smart grids, where multiple grid-forming inverters use consensus algorithms to achieve reactive power sharing. Each distributed generators needs to obtain the system's average reactive power value. Although traditional dynamic consensus observers can accurately track the average value, their information exchange process allows neighboring nodes to reverse-engineer the original local states from the received state data. Eavesdropping agents can precisely deduce the true local state values simply by accessing the observed states of neighboring nodes and leveraging algorithmic relationships, which severely compromises the commercial confidentiality of distributed generator operations \cite{Du2021}. Similar privacy concerns also arise in autonomous vehicle 
		coordination.} Thus, the privacy-preserving problem must be considered simultaneously.
	
	{To solve this problem, many methods have been proposed, such as encryption mechanisms \cite{Tan2023,Km}, partial information transmission \cite{pit1,pit2}, C-R lower bound \cite{Guo2025}, and differential privacy\cite{CD,20}.} In recent years, differential privacy has attracted much attention because of its robustness to various attacks.  {The core of differential privacy is a rigorous randomization mechanism: when querying a database, it adds controllable noise to the results based on predefined parameters. This ensures that the presence or absence of any single individual's data does not significantly affect the final output. As a result, even if attackers obtain the processed results, they cannot infer specific individuals' sensitive information through differences in the data. However, differential privacy also has inherent limitations: the injection of noise, while providing privacy guarantees, inevitably affects the accuracy of average consensus and may slow down the system convergence rate (\cite{20,95,22,90,91,res}).}
	
	Moreover, in practical applications, directly measuring the state information of a system is often challenging due to technical limitations and economic constraints on measurement equipment. Consequently, it is generally infeasible to obtain all the state information directly. To address this, system states can be reconstructed by designing an observer. In traditional observer-based MASs, the output variables used to construct the observer may contain sensitive information, such as the position or velocity of the agents, making the protection of these output variables critically important. In \cite{l1}, McGlinchey et al. proposed a method to protect the output of a free-motion system using a full-order observer. However, this approach cannot be directly applied to achieve differentially private consensus in MASs, as the corresponding control input alters the structure of the closed-loop system, and its stability conditions have not been well established.  {Meanwhile, the expansion from full-order observer to reduced-order is also  very important. This can bring about significant improvements in computational efficiency, reductions in system complexity, and savings in engineering costs. Therefore, in the design of the vast majority of actual control systems, as long as conditions permit, engineers will give priority to designing reduced-dimensional observers.} This work aims to bridge these gaps by proposing a novel approach that addresses these challenges effectively.
	
	In this article, we tackle the observer-based differentially private consensus problem for general linear MASs. To protect the output information, Laplace noise is incorporated into the information exchange process. In addition to developing a differentially private consensus framework based on the full-order observer, we also extend our approach to include a differentially private consensus mechanism utilizing a reduced-order observer. Furthermore, a novel feedback stabilization theory based on noisy state observers-encompassing both full-order and reduced-order observers is proposed. This advancement addresses a critical gap in the stochastic stabilization of linear stochastic systems using state observers.
	
	\subsection{State of the Art}
	The concept of differential privacy was first proposed in \cite{CD}, and later extended to the field of MASs to preserve the state information of agents. Recent advances in differential privacy research for MASs have primarily concentrated on first-order dynamics while ensuring the protection of initial state information. In \cite{20}, sufficient conditions were established for achieving average consensus in multi-agent systems while preserving privacy through the injection of exponentially decaying additive Laplace noise. The work provided a thorough exploration of the trade-off between accuracy and privacy.
	In \cite{Mo2016}, by adding Gaussian noise, the initial state was protected while achieving exact average consensus, and it was proven that the proposed algorithm attained minimal privacy leakage.
	In \cite{22}, asymptotic average consensus was achieved under $\epsilon$-differential privacy. The study not only derived the mean square convergence rate but also revealed that slower noise decay rates could degrade the convergence rate. Furthermore, an optimal parameter configuration was derived to minimize the variance of the convergence point. In \cite{95}, Wang et al. introduced an event-triggered scheme that ensured consensus while maintaining differential privacy, and further investigated parameter optimization strategies under a fixed privacy level $\epsilon$ to maximize system accuracy. In \cite{90}, Gao et al. gave some sufficient conditions in terms of local information to achieve average consensus and preserve differential privacy simultaneously by using an event-triggered control. For the case with quantized communication, some sufficient conditions to achieve differentially private consensus were given in \cite{91}. In \cite{res}, a differentially private mean-subsequence-reduced algorithm was introduced, enabling resilient consensus in multi-agent systems over directed networks while preserving differential privacy. It ensured non-faulty agents converged despite faulty nodes, provided the network was sufficiently robust, and protected initial states via noise injection. {In \cite{yang2024}, a differentially private average consensus algorithm was introduced in which a dynamic self-triggered mechanism was incorporated, thereby safeguarding initial-state privacy while maintaining execution efficiency. In \cite{zhou2025}, a probabilistic communication strategy was proposed for differentially private consensus under dynamic topologies. Convergence conditions were established, and a convex optimization method was developed to accelerate convergence under communication constraints. In \cite{tian2025}, a differentially private absolute weighted mean subsequence reduction algorithm was proposed for MASs over signed digraphs with misbehaving agents. It guaranteed resilient bipartite consensus while preserving differential privacy of initial conditions, maintained accuracy under sufficient network connectivity, and tolerated up to $f$ faulty agents.}
	
	Meanwhile, the privacy-preserving problem of second-order systems and high-order MASs has received growing attention. In \cite{99}, for the case of second-order MASs with quantized communication, some sufficient conditions were given for MASs with connected graph to achieve both consensus and privacy preservation. In \cite{erfen}, a hierarchical mechanism was introduced for general linear MASs, which employs a classifier and noise injectors to preserve private information while achieving mean square consensus.   {To provide a more systematic comparison with related works,  Table 1 is constructed.}
	
	In summary, the aforementioned literature shares the following common characteristics: (i)  {The proposed mechanisms can only protect the system's initial state, which exposes it to serious security risks. On the one hand, the exposure of process information enables attackers to infer sensitive information such as network topology and control strategies. On the other hand, potentially ``honest-but-curious'' agents within the system may continuously collect states from their neighbors, leading to gradual privacy leakage; }(ii) To achieve consensus and differential privacy, all these studies add exponentially decaying Laplace noise into the system, which inevitably leads to a rapid degradation of system state preservation over time. Therefore, we raise the following fundamental questions: Can a differential privacy framework be developed to ensure whole-process privacy preservation for general linear systems? Meanwhile, is it possible to relax the exponential decay requirement for noise, allowing a slower decay rate while still preserving privacy? How can consensus and privacy preservation be achieved simultaneously? These considerations directly motivated the research presented in this work.

	\begin{table*}[t]
		\centering
		\caption{ {A comparison of some related references.}}
		\centering
		{ 
			\begin{tabular}{l l l l p{2.7cm} l p{3.5cm}}
				\hline
				\textbf{Reference} & \textbf{Agent model} & \textbf{Private information} & \textbf{State known?} & \textbf{Added noise} & \textbf{Privacy criterion} & \textbf{Consensus performance} \\
				\hline
				\cite{20} & first-order & initial states & known & exponentially decaying Laplace noise &  $\epsilon$-DP & iterative synchronous consensus \\			
				\cite{Mo2016} & first-order & initial states & known & exponentially decaying Gaussian noise &  \( {e}_{i} \notin  \mathbb{D} \) & exact average consensus in the mean square sense\\			
				\cite{22,zhou2025} & first-order & initial states & known & exponentially decaying Laplace noise & $\epsilon$-DP & average consensus in the mean square sense and almost sure sense \\			
				\cite{95,yang2024} & first-order & initial states & known & exponentially decaying Laplace noise & $\epsilon$-DP & average consensus in the mean square sense  \\			
				\cite{90,91,res} & first-order & initial states & known & exponentially decaying Laplace noise & $\epsilon$-DP  & consensus in the mean square sense  \\
				\cite{l1} & general linear & entire output trajectories & unknown & constant Laplace noise & $\epsilon$-DP & $\times$ \\			
				\cite{tian2025} & first-order & initial states & known & exponentially decaying Laplace noise & $\epsilon$-DP & resilient bipartite consensus in the mean square sense \\			
				\cite{99} & second-order & initial states & known & exponentially decaying Laplace noise & $\epsilon$-DP & dynamic average consensus in the mean square sense \\			
				\cite{erfen} & general linear & initial states & known & exponentially decaying Laplace noise & $\epsilon$-DP & bipartite consensus in the mean square sense \\			
				\textbf{Our work} & general linear & entire output trajectories & unknown & decaying Laplace noise & $\epsilon$-DP \& $\epsilon^\star$-DP & dynamic average consensus in the mean square sense and almost sure sense \\
				\hline
			\end{tabular}
		} 
	\end{table*}

	\subsection{Contributions}
	The main contributions of this work are outlined as follows.
	
	\begin{itemize}
		\item  The conditions for achieving mean square and almost sure consensus in observer-based MASs are established using the backstepping method and the convergence theory for nonnegative almost supermartingales. Meanwhile, the mean square convergence rate of the scale parameter under exponential decay is obtained. Furthermore, it is demonstrated that the separation principle remains valid for the consensus problem of linear MASs with decaying Laplace noise.
		\item  {A unified design framework is proposed for jointly tuning the state-estimation gain, feedback control gain, and injected noise to guarantee $\epsilon$-differential privacy at every time step. Meanwhile, a generalized adjacency relation is introduced to allow broader deviations between adjacent datasets, and the requirement that the noise variance decay exponentially is relaxed.	}
		\item  Sufficient conditions for simultaneously achieving consensus and preserving differential privacy are established for linear MASs utilizing both full-order and reduced-order observers. This result distinguishes our work from He et al. \cite{81}, who demonstrated that achieving average consensus and differential privacy simultaneously is impossible for some first-order systems.
		\item { Parameter design conditions are derived for a prescribed privacy level. For any given privacy budget $\epsilon^\star$, explicit inequalities are provided for jointly selecting the observer gain $L$, feedback gain $K$, and noise parameters, so that the system achieves mean-square and almost-sure consensus while satisfying $\epsilon^\star$-differential privacy.}
	\end{itemize}

	\subsection{Outline and Notations}
	This work is organized as follows. In Section \ref{s2}, we present the basic graph theory and the problem formulation of this article.  {Section \ref{Sfull} provides  the sufficient conditions for consensus, $\epsilon$-differential privacy, differentially private consensus, and $\epsilon^\star$-differentially private consensus based on the full-order observer.} A number of interesting corollaries are presented, serving as significant extensions of the theorem. {Moreover, the convergence rate is given.}  These results are then extended to MASs based on the reduced-order observer in Section \ref{Sreduce}. To validate our theoretical findings, some simulation results are given in Section \ref{S5}. Finally, Section \ref{S6} concludes this article.
	
	Notations:  $\mathbb{R}^{n}$ denotes $n$-dimensional Euclidean space. $(\mathbb{R}^{n})^\mathbb{N}$ denotes the space of vector-valued sequences in $\mathbb{R}^{n}$. $\mathbb{Z}_{\geq 0}$ denotes the set of nonnegative integers. $M^{\mathrm{T}}$ denotes the transpose of matrix $M$ and  $M^{-1}$ denotes the inverse of $M$. $I_q$ denotes a $q$-dimensional identity matrix. $\|X\|_1$, $\|X\|$, and $\|X\|_F$ denote the $1$-norm, $2$-norm, and Frobenius norm of $X$, respectively.   $\mathbb{P}\{Y\}$ denotes the probability of  $Y$. $E[Y] $ denotes the expectation  of  $Y$. For a square matrix, the spectral radius is denoted as $ \rho (\cdot)$. $ P \otimes Q$ denotes the Kronecker product of matrices $P$ and $Q$.
	$\gamma \sim \operatorname{Lap}(0, b)$  stands for that $\gamma\in \mathbb{R}$ satisfies the Laplace distribution with a mean of 0 and a scale parameter of $b>0$.
	Denote $\eta \sim \operatorname{Lap}(b)^n$ for random variable $\eta \in \mathbb{R}^{n}$ with each of its components $\eta_j \sim \operatorname{Lap}(0, b)$ being independent and identically distributed, and we have $\mathbb{P}(\eta ; b)=\left(\frac{1}{2 b}\right)^n \exp \left(-\frac{\|\eta \|_1}{b}\right)$ (see \cite{LeNy2014} for details). We write $f(k) = \mathcal{O}(g(k))$ if there exist positive constants \(M\) and \(k_0\) such that for all \(k \ge k_0\), $|f(k)| \le M \cdot |g(k)|$.  { \(\mathcal{B}(X)\) denotes the Borel  $\sigma$-algebra on $X$, generated by all open sets.}

	\section{Preliminaries and Problem Formulation}\label{s2}
	
	In this paper, the connected undirected graph $\mathcal{G}=(\mathcal{V}, \mathcal{A})$ is considered, where $\mathcal{V}=\{1,2,\cdots,N\}$ represents the set of $N$ agents and $\mathcal{A} \subseteq \mathcal{V} \times \mathcal{V}$ denotes the set of edges. The network structure is characterized by the adjacency matrix $A_G=\left[a_{i j}\right] \in \mathbb{R}^{N \times N}$, whose elements $a_{ij}$ satisfy $a_{ij} = a_{ji}$ due to the undirected nature of the graph, with $a_{ij} = 1$ if agents $i$ and $j$ are connected and $a_{ij} = 0$ otherwise.  Denote $d_i=\sum_{j=1}^{N} a_{i j}$ as the degree of $i$ and $\mathcal{D}=\operatorname{diag}\left\{t_{1}, \cdots, t_{N}\right\}$ as the degree matrix. The Laplacian matrix is  $L_G=\mathcal{D}-A_G$, whose eigenvalues are denoted as $0=\lambda_{1} \leq \cdots \leq \lambda_{N}$. $\left[L_{G}\right]_{i j}$ stands for the element of row $i$ and column $j$ of matrix $L_{G}$. $\mathcal{N}_i$ denotes the set of neighbors of Agent $i$.
	
	Consider a MAS with the  connected undirected graph $\mathcal{G}$.  The dynamics of the $N$ agents are described by
	\begin{equation}\label{mainagent}
		\begin{aligned}
			&\left\{\begin{array}{l}
				x_{i}(k+1)=A x_{i}(k)+B u_{i}(k), \\
				y_{i}(k)=Cx_{i}(k),  k \in \mathbb{Z}_{\geq 0},
			\end{array}\right.
		\end{aligned}
	\end{equation}
	where $i=1,2,\cdots,N$, $A \in \mathbb{R}^{n \times n}, B \in \mathbb{R}^{n \times r}, C \in \mathbb{R}^{q \times n}$, $x_{i}(k) \in \mathbb{R}^{n}$ is the state information of Agent $i$, $u_{i}(k)  \in \mathbb{R}^{r}$ and $ y_{i} (k) \in \mathbb{R}^{q}$ are the control input and the output of Agent $i$, respectively.
	
	For the system (1), the state vector $x_i(k)$ is generally not directly measurable. Thus, we hope to use $y_i(k)$ to construct an observer state  to replace the real state to design a feedback control such that the consensus can be achieved. In this work,
	the observer is designed according to some rule,
	\begin{equation}\label{observer}
		\begin{aligned}
			\left\{\begin{array}{l}
				\hat{x}_{i}(k+1)=f(\hat{x}_{i}(k),u_i(k),y_i(k),\hat{y}_{i}(k)),\\
				\hat{y}_{i}(k)=g(\hat{x}_{i}(k)),
			\end{array}\right.
		\end{aligned}
	\end{equation}
	where $\hat{x}_{i}(k)$ and $ \hat{y}_{i}(k)$ denote the estimated value of ${x}_{i}(k)$ and ${y}_{i}(k)$, respectively. It is easy to see that the full-order state and the reduced-order state observers studied in Sections \ref{Sfull} and \ref{Sreduce}  can be considered as the special cases of the rule \eqref{observer}.

	In the traditional consensus algorithm, $u_i(k)$ is designed as follows
	\begin{equation*}
		\begin{aligned}
			u_{i}(k)=K\sum_{j \in \mathcal{N}_i} a_{i j}\left(\hat{x}_{j}(k)-\hat{x}_{i}(k)\right),
		\end{aligned}
	\end{equation*}
	where  $K \in \mathbb{R}^{r \times n}$ is the control gain matrix to be designed. However, if there is an semi-honest agent among these agents (or eavesdropper) who possesses certain information, including: (i) the dynamics of the agent, (ii) the model of the observer system, (iii) the consensus algorithm, (iv) the graph of the MAS, and (v) the estimated value $\hat{x}_{j}(k),j\in \mathcal{N}_i$, and the eavesdropper attempts to obtain the output information $y_i(k)$ of other agents, then the privacy  has a great risk of exposure.
	Thus, we add Laplace noise to the message sent by the Agent $i$ to others to preserve $y_i(k)$:
	\begin{equation}\label{lap}
		\begin{aligned}
			\theta_{i}(k)=\hat{x}_{i}(k)+\eta_{i}(k),
		\end{aligned}
	\end{equation} where $\eta_{i}(k) \sim \operatorname{Lap}(b_i(k))^n$ defined on a complete probability space $(\Omega, \mathcal{F},\{\mathcal{F}_{k}\}_{k \geq 0}, \mathbb{P})$,   $\mathcal{F}_{k}=\sigma\{\eta_{j}(s) \mid j=1,\cdots N$, $s<k\}$.
	
	Based on the information \eqref{lap}, the controller is given as follows
	\begin{equation}\label{control}
		\begin{aligned}
			u_{i}(k)=K\sum_{j \in \mathcal{N}_i} a_{i j}\left(\theta_{j}(k)-\hat{x}_{i}(k)\right).
		\end{aligned}
	\end{equation}
	
	{
		\begin{remark}
			The use of Laplace noise in this work follows the classical $\epsilon$-differential privacy framework. Compared with Gaussian noise, the main advantages of the Laplace mechanism are:
			(i) it provides {strict (pure)} $\epsilon$-differential privacy without requiring an additional $\delta$ parameter;
			(ii) its probability density function has a simple closed form, which enables a more tractable privacy analysis. iii) the work Wang et al. \cite{Wang2014} shows that Laplacian noise is optimal (among all possible distributions) in the sense that it minimizes the entropy of the transmitted messages while preserving differential privacy. 
			On the other hand,  Laplace noise has heavier tails than Gaussian noise, which may introduce larger variance into the exchanged information.
	\end{remark}}
	
	Denote $\eta(k)=[\eta_{1}^{\mathrm{T}}(k), \eta_{2}^{\mathrm{T}}(k),  \cdots, \eta_{N}^{\mathrm{T}}(k)]^{\mathrm{T}} \in \mathbb{R}^{Nn \times 1}$ and $\theta(k)=[\theta_{1}^{\mathrm{T}}(k), \theta_{2}^{\mathrm{T}}(k), \cdots, \theta_{N}^{\mathrm{T}}(k)]^{\mathrm{T}} \in \mathbb{R}^{Nn \times 1}$. Note that the noise is included in \eqref{control}. Then, stochastic consensus including the mean square and almost sure consensus should be investigated.
	
	\begin{definition}[Mean square consensus]
		The MAS \eqref{mainagent} under the observer \eqref{observer} and the control \eqref{control} is said to achieve the  mean square consensus if $\lim\limits _{k \rightarrow \infty}$ $ E\left\|x_{j}(k)-x_{i}(k)\right\|^2=0$ and  $\lim\limits _{k \rightarrow \infty} E\|x_i(k)-\hat{x}_i(k)\|^2=0$, for any given initial value and all $ i, j=1, \ldots, N$.
	\end{definition}
	
	\begin{definition}[Almost sure consensus]
		The MAS \eqref{mainagent} under the observer \eqref{observer} and the control \eqref{control} is said to achieve the  almost sure consensus  if $\lim\limits _{k \rightarrow \infty}$ $ \left\|x_{j}(k)-x_{i}(k)\right\|=0$ and  $\lim\limits _{k \rightarrow \infty} \|x_i(k)-\hat{x}_i(k)\|=0$, for any given initial value and all $ i, j=1, \ldots, N$.
	\end{definition}

	\
	
	In addition to obtaining the aforementioned consensus, we also ensure privacy preservation. We now give the definitions of adjacent and $\epsilon$-differential privacy for  general linear MASs.

	\begin{definition}[Adjacent]\label{adjacent}
		Denote $y(k)=[y_{1}^{\mathrm{T}}(k), y_{2}^{\mathrm{T}}(k), $ $y_{3}^{\mathrm{T}}(k), \cdots, y_{N}^{\mathrm{T}}(k)]^{\mathrm{T}} \in \mathbb{R}^{N \times q}$. Given $m>0$ and function $h(k)>0$, then two sets of output information $y(k)$ and $y^{\prime}(k)$ are adjacent with $m$ and $h(k)$ if, there exist $i_0 \in \{1,2,...,N \}$ and $k_{0} \geq 0$ such that $y_i(k)=y_i^{\prime}(k)$ for $i \neq i_0$ and $k \in \mathbb{Z}_{\geq 0}$, and
		\begin{equation}\label{adj}
			\begin{aligned}
				\begin{cases}y_{i_0}(k)=y^{\prime}_{i_0}(k), & k<k_{0} \\ \left\|y_{i_0}(k)-y_{i_0}^{\prime}(k)\right\|_1 \leq m h(k-k_{0}), & k \geq k_{0}.\end{cases}
			\end{aligned}
		\end{equation}
	\end{definition}
	
	\
	{
		\begin{remark}
			In our framework, the privacy of the \emph{entire output trajectories} is guaranteed through the definition of adjacency in Definition~\ref{adjacent} combined with the $\epsilon$-differential privacy mechanism. 	The adjacency relation is defined on the {whole output trajectories} rather 
			than on single time steps. Under such an adjacency notion, the $\epsilon$-differential privacy 
			guarantee ensures that an adversary cannot reliably distinguish between any two 
			such adjacent {entire trajectories}, thereby protecting the full process information 
			across all time steps. This modeling approach is consistent with  literature \cite{l1}.
	\end{remark}}
	
	\begin{remark}
		In previous works on observer-based privacy-preserving approaches, $\left\|y_{i_0}(k)-y_{i_0}^{\prime}(k)\right\|_1$ is required to be bounded by an exponentially decaying function\cite{l1}. However, in this work, we relax this constraint to enable privacy preservation across wider variations of adjacent datasets, which strengthens  data preservation.  {Moreover, $k_0$ allows the difference between adjacent datasets to begin at any time, rather than being restricted to starting at time 0. }
	\end{remark}

	\
	
	Note that \eqref{observer} - \eqref{control} tell us that $\theta(k)$ is determined by the noise $\eta(k)$ and the output $y(k)$  given the initial value   $\hat{x}(0)$. Then, we can define the function $\varphi: \mathbb{R}^{N \times n} \times \mathbb{R}^{N \times q} \rightarrow \mathbb{R}^{N \times n}$ such that $\varphi(\eta(k), y(k))=\theta(k)$ is well defined, which is consistent with related formulations in the literature \cite{22}.  {Let $O$ denote any Borel measurable set in the trajectory space $(\mathbb{R}^{n})^{\mathbb{N}}$.}  Hence, the $\epsilon$-differential privacy is defined as follows\cite{erfenwjm}.
	\begin{definition}[$\epsilon$-differential privacy]
		Let $\epsilon \geq 0$ be given. For $i \in\{1, \ldots, N \}$, $k \in \mathbb{Z}_{\geq 0}$, and any set $O \in \mathcal{B}((\mathbb{R}^{n})^\mathbb{N})$, the  $\epsilon$-differential privacy is said to be preserved if, for any adjacent  $y(k)$ and $y^{\prime}(k)$,
		$$
		\begin{aligned}
			\mathbb{P}\{{\eta}(k) \in \Omega \mid & \varphi ({\eta}(k), y(k)) \in O\} \\
			& \leq e^{\epsilon} \mathbb{P}\left\{{\eta}(k) \in \Omega \mid \varphi ({\eta}(k), y^{\prime}(k)) \in O\right\}.
		\end{aligned}
		$$
		
	\end{definition}
	
	{
		\begin{remark}
			Under the considered multi-agent system framework, the mapping $\varphi(\eta(k), y(k)) = \theta(k)$ is well-defined through a recursive causal mechanism. Starting from the initial condition $\hat{x}(0)$, the injection of noise $\eta(0)$ forms $\theta(0)$, which in turn determines the control input $u(0)$. Given $u(0)$, the known observer output $\hat{y}(0)$, and the measured output $y(0)$, the subsequent state $\hat{x}(1)$ is uniquely determined by the observer dynamics. This process repeats at each step $k$: from $\hat{x}(k)$ and $\eta(k)$ to $\theta(k)$, then to $u(k)$, and finally to $\hat{x}(k+1)$. This unbroken causal chain guarantees that the sequences $\eta(k)$ and $y(k)$ uniquely generate the entire trajectory of $\theta(k)$, thereby establishing a rigorous foundation for $\epsilon$-differential privacy analysis.
	\end{remark}}
	
	\
	
	The objective of this work is to construct the appropriate observer, controller $u_{i}(k)$, and Laplace noise $\eta(k)$ such that the MAS \eqref{mainagent} can achieve the above consensus and preserve the privacy simultaneously, which is said to achieve the differentially private consensus. In the following, we consider the observer-based differentially private consensus based on the full-order observer and the reduced-order observer, respectively. 	 
	
	\section{Differentially private consensus based on the full-order observer}\label{Sfull}
	In this section, we investigate differentially private consensus problem based on the following full-order observer:
	\begin{equation}\label{obser}
		\begin{aligned}
			\left\{\begin{array}{l}
				\hat{x}_{i}(k+1)=A \hat{x}_{i}(k)+B u_{i}(k)+L\left(y_{i}(k)-\hat{y}_{i}(k)\right), \\
				\hat{y}_{i}(k)=C\hat{x}_{i}(k) ,
			\end{array}\right.
		\end{aligned}
	\end{equation}
	where $ L \in \mathbb{R}^{n \times q}$ is the state estimation gain. 
	
	Denoting $\hat{x}(k)=\left[\hat{x}_{1}^{\mathrm{T}}(k), \hat{x}_{2}^{\mathrm{T}}(k), \cdots, \hat{x}_{N}^{\mathrm{T}}(k)\right]^{\mathrm{T}} \in \mathbb{R}^{N \times n}$ and the observer error $e_{i}(k)=x_{i}(k)-\hat{x}_{i}(k)$,  we have
	$$
	\begin{aligned}
		e_{i}(k+1)  =A e_{i}(k)-L  C  e_{i}(k)=(A-L C) e_{i}(k)
	\end{aligned}
	$$
	and
	$$
	\begin{aligned}
		x_{i}(k&+1) =A x_{i}(k)+B  K  \sum_{j \in \mathcal{N}_i} a_{i j}\left(\theta_{j}(k)-\hat{x}_{i}(k)\right) \\
		=&A x_{i}(k)-B K  \sum_{j=1}^{N}\left[L_{G}\right]_{i j}  \hat{x}_{j}(k)-B K \hspace{-0.2cm} \sum_{j=1, j \neq i}^{N}\hspace{-0.2cm}\left[L_{G}\right]_{i j} \eta_{j}(k).\\
	\end{aligned}
	$$
	We can rewrite the above equation as follows:
	\begin{equation}\label{comp1}
		\begin{aligned}
			x(k+1)&=\left[\left(I_{N} \otimes A\right)-\left(L_{G} \otimes B K\right)\right] x(k)\\
			&+\left(L_{G} \otimes B K\right) e(k) +(A_G \otimes B K) \eta(k),
		\end{aligned}
	\end{equation}
	where $x(k)=\left[x_{1}^{\mathrm{T}}(k), x_{2}^{\mathrm{T}}(k), \cdots, x_{N}^{\mathrm{T}}(k)\right]^{\mathrm{T}} \in \mathbb{R}^{N \times n}$ and $e(k)=\left[e_{1}^{\mathrm{T}}(k), e_{2}^{\mathrm{T}}(k),  \cdots, e_{N}^{\mathrm{T}}(k)\right]^{\mathrm{T}} \in \mathbb{R}^{N \times n}$.
	
	Defining the consensus error $\delta(k)=\left[\left(I_{N}-J_{N}\right) \otimes I_{n}\right] x(k)$ and the observer error $ \varrho (k)=[(I_{N}-J_{N}) \otimes I_{n}] e(k)$, we obtain
	$$
	\left\{\begin{aligned}
		\delta(k+1)= & {\left[\left(I_{N} \otimes A\right)-\left(L_{G} \otimes B K\right)\right] \delta(k) }\hspace{-0.1cm}+\hspace{-0.1cm}\left(L_{G} \otimes B K\right) \varrho (k) \\
		&+\left[\left(I_{N}-J_{N}\right) \otimes I_{n}\right]\left(A_{G} \otimes B K\right) \eta(k), \\
		\varrho (k+1)= & {\left[I_{N} \otimes(A-L C)\right] \varrho (k) , }\end{aligned}\right.
	$$
	where $J_N=\frac{1}{N}\mathbf{1}_{N} \mathbf{1}_{N}^{\mathrm{T}}$.
	
	Define the unitary matrix $\Psi=\left[\frac{\mathbf{1}_{N}}{\sqrt{N}}, \phi_{2}, \cdots, \phi_{N}\right]$ ,
	where $\phi_{i}$ is the unit eigenvector of ${L_G}$ associated with the eigenvalue $\lambda_{i}$.
	We can see that $\Psi^{\mathrm{T}} L_G \Psi=$ $\operatorname{diag}\left\{0, \lambda_{2}, \cdots, \lambda_{N}\right\}$.
	Defining $\delta(k)=\left(\Psi \otimes I_{n}\right) \widetilde{\delta}(k)$ and $\varrho (k)=\left(\Psi \otimes I_{n}\right) \widetilde{\varrho }(k)$, we have
	$$
	\begin{aligned}
		\widetilde{\varrho }(k+1)=  \left[I_{N} \otimes(A-L C)\right] \widetilde{\varrho }(k)
	\end{aligned}
	$$
	and
	$$
	\begin{aligned}
		\widetilde{\delta} (k+&1)=\left(\Psi \otimes I_{n}\right)^{\mathrm{T}}   \delta(k+1) \\
		=&\left[I_{N} \otimes A-\Delta \otimes B K\right] \widetilde{\delta}(k)+(\Delta \otimes B K)\widetilde{\varrho }(k)+M   \eta(k),
	\end{aligned}
	$$
	where $\Delta=\operatorname{diag}\left\{0, \lambda_{2},  \cdots, \lambda_{N}\right\}$ and $ M=\Psi^{\mathrm{T}}(A_{G}-$ $J_{N} A_{G}) \otimes B K $. Note that
	$$
	\begin{aligned}
		\widetilde{\delta}(k)=(\Psi^{\mathrm{T}} \otimes I_{n})   \delta(k) =[\Psi^{\mathrm{T}}(I_{N}-J_{N}) \otimes I_{n}] x(k).
	\end{aligned}
	$$
	It yields that
	$
	\widetilde{\delta}_{1}(k)=[\frac{I_{N}}{\sqrt{N}}\left(I_{N}-J_{N}\right) \otimes I_{n}] x(k)=0.
	$
	
	Denote $\xi(k)=\left[\widetilde{\delta}_{2}(k)^{\mathrm{T}}, \cdots, \widetilde{\delta}_{N}(k)^{\mathrm{T}}\right]^{\mathrm{T}}$, $\psi(k)=(\widetilde{\varrho }_{2}(k)^{\mathrm{T}},$ $ \cdots, \widetilde{\varrho }_{N}(k)^{\mathrm{T}})^{\mathrm{T}}$, and $\widetilde{\eta}(k)=[{\eta}_{2}(k)^{\mathrm{T}}, \cdots, {\eta}_{N}(k)^{\mathrm{T}}]^{\mathrm{T}}$. Then, we have
	\begin{equation}
		\left\{\begin{aligned}
			\xi(k+1)=&{R_{1}\xi(k)+R_{2}\psi(k) + \widetilde{M}  \widetilde{\eta}(k)},\\
			\psi(k+1)=&R_{3}{\psi}(k),\end{aligned}\right.
	\end{equation}
	where $R_{1}=I_{N-1} \otimes A-\Lambda \otimes B K$, $R_{2}=\Lambda \otimes B K$, $\Lambda=\operatorname{diag}\left\{\lambda_{2}, \cdots, \lambda_{N}\right\}$, $\widetilde{M}=\Phi^{\mathrm{T}}\left(A_{G}-J_{N} A_{G}\right) \otimes B K$, $\Phi=\left[\phi_{2}, \cdots, \phi_{N}\right]$, and $R_{3}=I_{N-1} \otimes(A-L C)$.
	
	\
	
	{ To quantify how fast the agents reach agreement, we give the following definition of the observer-based mean square convergence rate, extended from \cite{Mo2016}.}
	{
		\begin{definition}[Mean square convergence rate]
			Consider the multi-agent system \eqref{mainagent} under the observer \eqref{obser} and the control protocol \eqref{control}.
			The mean square convergence rate $\rho$ is defined as:
			\[
			\rho = \lim_{k \to \infty} \ \sup_{(\xi(0), \psi(0)) \neq 0} \left( \frac{\mathbb{E}[\|\xi(k)\|^2 ]}{\|\xi(0)\|^2 + \|\psi(0)\|^2} \right)^{\frac{1}{2k}}.
			\]
		\end{definition}
	}
	\
	{
		\begin{remark}
			The definition of the mean-square convergence rate in this paper differs from the case without an observer \cite{Mo2016}, fundamentally due to the coupling of system dynamics. When the system does not include an observer, the convergence rate only needs to consider the state error dynamics, and its definition can be based solely on the initial state $\xi(0)$. However, after introducing an observer, the evolution of the state error $\xi(k)$ not only depends on its own initial value but is also influenced by the observer error $\psi(k)$ through the coupling term $R_2\psi(k)$. Therefore, the worst-case convergence performance may be dominated by the initial condition of the observer error $\psi(0)$.
			The given definition can accurately reflect the worst-case convergence performance of the system under the influence of the observer.
		\end{remark}
	}
	\subsection{Consensus}\label{S31}
	In the following theorem, we give some sufficient conditions for the MAS \eqref{mainagent} based on the full-order observer \eqref{obser} to achieve the mean square and almost sure consensus. These  conditions imply that the separation principle still holds when examining  the consensus problem of linear MASs with decaying Laplace noise.  { Meanwhile, the convergence rate is provided. } 	The following  convergence theory for nonnegative almost super-martingales  will play an important role in obtaining the almost sure consensus.
	\begin{lemma}\label{semi} \cite{1985}
		Let $\{\mathcal{F}(k), k \in \mathbb{N}\}$ be a sequence of $\sigma$-algebras. Let $V(k)$, $\mu(k)$, $v(k)$, and $\omega(k)$ be $\mathcal{F}(k)$-measurable nonnegative random variables such that for all $k$,
		$$
		E\{V(k+1) \mid \mathcal{F}(k)\} \leq(1+\mu(k)) V(k)+v(k)-\omega(k) \text {, a.s. }
		$$
		If $\sum_{k=0}^{\infty} \mu(k)<\infty$ and $\sum_{k=0}^{\infty} v(k)<\infty$ a.s., then there exists a nonnegative random variable $x^{*}$ such that $\lim\limits _{k \rightarrow \infty} V(k)=x^{*}$ and $\sum_{k=0}^{\infty} \omega(k)<\infty$ a.s.
	\end{lemma}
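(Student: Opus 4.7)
The plan is to transform the almost-supermartingale inequality into a genuine nonnegative supermartingale and then invoke Doob's martingale convergence theorem. Because the summability hypotheses $\sum\mu(k)<\infty$ and $\sum v(k)<\infty$ hold only almost surely rather than deterministically, an additional localization by stopping times will be required to ensure integrability.

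First I would introduce the predictable normalizing factor $\beta(k) = \prod_{j=0}^{k-1}(1+\mu(j))^{-1}$ (with $\beta(0)=1$), which is $\mathcal{F}(k-1)$-measurable, nonincreasing, and converges a.s.\ to a strictly positive limit $\beta_\infty$ on $\{\sum\mu(k)<\infty\}$. Multiplying the hypothesis through by $\beta(k+1)$ telescopes the factor $(1+\mu(k))$ into $\beta(k)V(k)$, so that the auxiliary process
\begin{equation*}
W(k) = \beta(k)V(k) + \sum_{j=0}^{k-1}\beta(j+1)\omega(j) - \sum_{j=0}^{k-1}\beta(j+1)v(j)
\end{equation*}
satisfies $E\{W(k+1)\mid\mathcal{F}(k)\} \leq W(k)$, i.e., $\{W(k)\}$ is a supermartingale with respect to $\{\mathcal{F}(k)\}$.

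Next, to apply Doob's convergence theorem I need a lower bound on $W(k)$ that is integrable. Since $\beta(j+1)\leq 1$ and $\sum v(j)$ is only a.s.\ finite, I would localize: for each $M>0$ define
\begin{equation*}
\tau_M = \inf\Bigl\{k : \sum_{j=0}^{k-1}v(j) > M \ \text{or}\ \sum_{j=0}^{k-1}\mu(j) > M\Bigr\}.
\end{equation*}
The stopped process $W(k\wedge\tau_M) + M$ is then a nonnegative supermartingale, hence converges a.s.\ by Doob's theorem, which in turn gives a.s.\ convergence of the unstopped $W(k)$ on $\{\tau_M=\infty\}$. The a.s.\ summability hypotheses ensure $\mathbb{P}(\tau_M=\infty)\to 1$ as $M\to\infty$, so $W(k)$ converges a.s.\ on the whole probability space.

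Finally, since $\sum_j\beta(j+1)v(j)$ is a.s.\ finite (by boundedness of $\beta$ and a.s.\ summability of $v$), the combination $\beta(k)V(k) + \sum_{j=0}^{k-1}\beta(j+1)\omega(j)$ converges a.s. Both terms being nonnegative and the second being monotone nondecreasing, each converges separately: using $\beta(k)\to\beta_\infty>0$ one recovers $V(k)\to x^*$ a.s.\ for some nonnegative $x^*$, and using $\beta(j+1)\geq\beta_\infty>0$ yields $\sum_{j=0}^{\infty}\omega(j)<\infty$ a.s. The main obstacle is precisely the localization step: without first replacing the a.s.\ finite totals $\sum v$ and $\sum\mu$ by bounded versions through stopping, $W(k)$ need not be integrable and the standard supermartingale convergence theorem cannot be invoked directly.
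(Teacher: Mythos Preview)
The paper does not prove this lemma; it simply cites it from Robbins and Siegmund (1971), so there is no in-paper argument to compare against. Your proposal is precisely the classical proof from that reference: normalize by the predictable product $\beta(k)$, form the auxiliary process $W(k)$, localize by stopping times to obtain a nonnegative supermartingale, apply Doob's convergence theorem, and then un-localize.

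One small technical correction: with your definition $\tau_M=\inf\{k:\sum_{j=0}^{k-1}v(j)>M\ \text{or}\ \sum_{j=0}^{k-1}\mu(j)>M\}$, at the time $k=\tau_M$ the partial $v$-sum has \emph{already} exceeded $M$, so $W(\tau_M)+M$ need not be nonnegative and the stopped process is not guaranteed nonnegative. The standard fix is to shift the index and set $\tau_M=\inf\{k:\sum_{j=0}^{k}v(j)>M\ \text{or}\ \sum_{j=0}^{k}\mu(j)>M\}$; then $\{\tau_M\geq k\}$ implies $\sum_{j=0}^{k-1}v(j)\leq M$, so $W(k\wedge\tau_M)+M\geq 0$ for all $k$, and $\tau_M$ remains an $\{\mathcal{F}(k)\}$-stopping time since $v(k),\mu(k)$ are $\mathcal{F}(k)$-measurable. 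With this adjustment your argument is complete.
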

	
	\
	
	\begin{theorem}\label{consen1}
		Consider $b_i(k)=c_{i} p_{i}(k)$ with $c_{i}>0$, $p_{i}(k)>0$, and $\sum_{k=0}^{\infty}p_{i}(k) < \infty$. If there exist two matrices $K$ and $L$ such that $\rho (A-L C)<1$ and $\rho (I_{N-1} \otimes A-\Lambda \otimes B K)< 1$ with $\Lambda=\operatorname{diag}\left\{\lambda_{2}, \cdots, \lambda_{N}\right\}$, then both the mean square consensus and  the almost sure consensus are achieved for  the MAS \eqref{mainagent} under the observer \eqref{obser} and the control \eqref{control}.
	\end{theorem}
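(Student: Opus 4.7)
The plan is to exploit the cascaded structure of the reduced dynamics: the $\psi$-subsystem is autonomous, deterministic, and exponentially stable, while the $\xi$-subsystem is an exponentially stable linear recursion driven by the vanishing deterministic input $R_2\psi(k)$ and the zero-mean stochastic input $\widetilde{M}\widetilde{\eta}(k)$. Since the unitary reduction through $\Psi$ maps the consensus error $\delta$ and observer error $\varrho$ isometrically to $\xi$ and $\psi$ (with $\widetilde{\delta}_1\equiv 0$ and, analogously, $\widetilde{\varrho}_1\equiv 0$), it suffices to prove $\xi(k)\to 0$ and $\psi(k)\to 0$ in both the mean-square and almost-sure senses. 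First I would dispose of $\psi$: because $R_3=I_{N-1}\otimes(A-LC)$ and $\rho(A-LC)<1$, picking any $\rho_1\in(\rho(A-LC),1)$ and applying Gelfand's formula supplies $C_2>0$ with $\|R_3^k\|\le C_2\rho_1^k$, so $\psi(k)$ decays exponentially in norm, which immediately gives both notions of convergence for the observer error and, crucially for later, $\sum_k\|\psi(k)\|^2<\infty$.

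For almost-sure convergence of $\xi$, I would set up a Lyapunov-supermartingale argument. Since $\rho(R_1)<1$, the discrete Lyapunov equation $R_1^{\mathrm{T}}PR_1-P=-Q$ admits $P,Q\succ 0$. With $V(k)=\xi(k)^{\mathrm{T}}P\xi(k)$, the independence of $\widetilde{\eta}(k)$ from $\mathcal{F}_k$ and $\mathbb{E}[\widetilde{\eta}(k)\mid\mathcal{F}_k]=0$ kill the cross terms in $\mathbb{E}[V(k+1)\mid\mathcal{F}_k]$. Applying Young's inequality with $\epsilon>0$ small enough that $(1+\epsilon)R_1^{\mathrm{T}}PR_1\preceq P-\tfrac{1}{2}Q$ yields
\[
\mathbb{E}[V(k+1)\mid\mathcal{F}_k]\le V(k)-\tfrac{1}{2}\xi(k)^{\mathrm{T}}Q\xi(k)+c_1\|\psi(k)\|^2+c_2\sum_{i=1}^{N}b_i(k)^2,
\]
where $c_2$ absorbs $2n\|\widetilde{M}^{\mathrm{T}}P\widetilde{M}\|$ (the $2b^2$ Laplace variance). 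Since $\sum_k p_i(k)<\infty$ implies $p_i(k)\to 0$ and hence $p_i(k)^2\le p_i(k)$ eventually, one gets $\sum_k b_i(k)^2<\infty$; combined with the exponential summability of $\|\psi(k)\|^2$, the additive perturbation is summable, and Lemma~\ref{semi} (applied with $\mu\equiv 0$) yields a.s.\ convergence of $V(k)$ and $\sum_k\xi(k)^{\mathrm{T}}Q\xi(k)<\infty$ a.s., so $\xi(k)\to 0$ a.s.

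The main obstacle I expect is the mean-square statement: simply taking expectations in the supermartingale inequality gives only $\sum_k\mathbb{E}\|\xi(k)\|^2<\infty$, which does \emph{not} by itself imply $\mathbb{E}\|\xi(k)\|^2\to 0$. To close this gap I would switch to the variation-of-constants representation
\[
\xi(k)=R_1^k\xi(0)+\sum_{j=0}^{k-1}R_1^{k-1-j}R_2\psi(j)+\sum_{j=0}^{k-1}R_1^{k-1-j}\widetilde{M}\widetilde{\eta}(j).
\]
Choosing $\rho_0\in(\rho(R_1),1)$ with $\|R_1^m\|\le C_1\rho_0^m$, the first two terms are deterministic and decay exponentially, being either $C_1\rho_0^k\|\xi(0)\|$ or the convolution of two exponentially small sequences (handled identically whether $\rho_0\neq\rho_1$ or $\rho_0=\rho_1$). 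For the stochastic sum, independence of the $\widetilde{\eta}(j)$'s gives
\[
\mathbb{E}\Big\|\sum_{j=0}^{k-1}R_1^{k-1-j}\widetilde{M}\widetilde{\eta}(j)\Big\|^2\le C_1^2\|\widetilde{M}\|^2\sum_{j=0}^{k-1}\rho_0^{2(k-1-j)}\mathbb{E}\|\widetilde{\eta}(j)\|^2,
\]
and $\mathbb{E}\|\widetilde{\eta}(j)\|^2\le 2n\sum_{i}b_i(j)^2\to 0$. Splitting the inner sum at $\lfloor k/2\rfloor$ bounds the early part by an exponentially small prefactor times a finite total noise power and the late part by $(1-\rho_0^2)^{-1}\sup_{j\ge k/2}\mathbb{E}\|\widetilde{\eta}(j)\|^2\to 0$. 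Combining the three contributions yields $\mathbb{E}\|\xi(k)\|^2\to 0$, finishing the mean-square part. The only genuinely delicate points are the Young's-inequality balance $\epsilon$ and this ``convolution against a vanishing sequence'' estimate; the remainder is bookkeeping.
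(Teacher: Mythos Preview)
Your proposal is correct. The almost-sure half tracks the paper closely: both build the Lyapunov function $V(k)=\xi(k)^{\mathrm{T}}P\xi(k)$ from the discrete Lyapunov equation $R_1^{\mathrm{T}}PR_1-P=-Q$, bound the cross term $2\xi^{\mathrm{T}}R_1^{\mathrm{T}}PR_2\psi$ via Young's inequality, and invoke Lemma~\ref{semi}. The only difference is in the final step: you conclude $\xi(k)\to 0$ a.s.\ directly from $\sum_k\xi(k)^{\mathrm{T}}Q\xi(k)<\infty$ (terms of a convergent nonnegative series tend to zero), whereas the paper first establishes $\mathbb{E}[V(k)]\to 0$ and then combines this with the a.s.\ existence of $\lim V(k)$ via Fatou. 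Your route is self-contained and does not rely on the mean-square result.

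For the mean-square part the two proofs genuinely diverge. The paper transforms $\xi$ by a similarity $P_1^{-1}R_1P_1=J_R$ to Jordan form and argues by \emph{backstepping} within each Jordan block: it first shows $\mathbb{E}|\zeta_{q,n_q}(k)|^2\to 0$ for the bottom coordinate, then inductively $\mathbb{E}|\zeta_{q,j}(k)|^2\to 0$ by treating $\zeta_{q,j+1}$ as an additional vanishing forcing term, each step being a scalar recursion $r_1^{k+1}\cdot(\text{initial})+\sum_{l}r_1^{k-l}\cdot(\text{vanishing input})$. Your variation-of-constants formula plus the ``split at $k/2$'' convolution estimate captures exactly the same analytic content, but executes it once at the vector level using the Gelfand bound $\|R_1^m\|\le C_1\rho_0^m$ instead of coordinate by coordinate. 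The Jordan-block route buys an explicit modal picture (and is what the paper later reuses to read off the convergence rate in Theorem~\ref{mscr1}); your route is shorter, avoids the Jordan machinery, and handles defective $R_1$ without any separate bookkeeping.
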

	
	\begin{proof}
		First, we give the proof of the mean square consensus. Select $P_{1}$ such that $P_{1}^{-1} R_{1} P_{1}=J_{R}$, where $J_{R}$ is the Jordan normal form of $R_{1}$. Letting $\zeta(t)=P_{1}^{-1} \xi(t)$, one can see that
		$$
		\zeta(k+1)=J_{R} \zeta(k)+Z(k)+\widetilde{Z}(k),
		$$
		where $Z(k)=P_{1}^{-1} R_{2} \psi(k)$ and $\widetilde{Z}(k)=P_{1}^{-1} \widetilde{M} \widetilde{\eta}(k)$.
		Denote
		$$
		J_{\lambda_{q}, n q}=\left[\begin{array}{ccccc}
			\lambda_{q} & 1 & \cdots & 0 & 0\\
			0 & \lambda_{q} & \cdots & 0 & 0\\
			\vdots & \vdots & \ddots & \vdots & \vdots\\
			0 & 0 & \cdots & \lambda_{q} & 1\\
			0 & 0 & \cdots & 0 & \lambda_{q}\\
		\end{array}\right]
		$$
		as the $q$th Jordan block of $n_q$ size with eigenvalue $\lambda_{q}$. Denote $\zeta_{q}(k)=\left[\zeta_{q, 1}(k), \cdots,  \zeta_{q, j}(k), \cdots, \zeta_{q, n_{q}}(k)\right]^{\mathrm{T}}\in \mathbb{R}^{n_q}$, $Z_{q}(k)=\left[z_{q, 1}(k),\cdots,  z_{q, j}(k), \cdots, z_{q, n_{q}}(k)\right]^{\mathrm{T}}\in \mathbb{R}^{n_q}$, and $\widetilde{Z}_{q}(k)=\left[\widetilde{z}_{q, 1}(k),\cdots,  \widetilde{z}_{q, j}(k), \cdots\right.$, $\left.\widetilde{z}_{q, n_{q}}(k)\right]^{\mathrm{T}}\in \mathbb{R}^{n_q}$, where $\zeta_{q, j}, Z_{q, j}$, and $ \widetilde{Z}_{q, j}$ represent the $\left(\sum_{l=1}^{q-1} n_{l}+j\right)$th elements of $\zeta, Z$, and $\widetilde{Z}$, respectively. Then, we have
		$$
		\zeta_{q}(k+1)=J_{\lambda_{q}, n_{q}} \zeta_{q}(k)+Z_{q}(k)+\widetilde{Z}_{q}(k).
		$$
		It follows that
		\begin{equation}\label{zetaq}
			\left\{\begin{aligned}
				\zeta_{q, n_{q}}(k+1)=& \lambda_{q} \zeta_{q, n_{q}}(k)+Z_{q, n_{q}}(k)+\widetilde{Z}_{q, n_{q}}(k), \\
				\zeta_{q, j}(k+1)=& \lambda_{q} \zeta_{q, j}(k)+\zeta_{q, j+1}(k)\\
				& +Z_{q, j}(k)+\widetilde{Z}_{q, j}(k), j=1, \cdots, n_{q-1}.
			\end{aligned}\right.
		\end{equation}
		This can be considered as a semi-decoupled version.
		We can see from \eqref{zetaq} that
		$$
		\begin{aligned}
			\left\|\zeta_{q, n_{q}}(k+1)\right\|^{2} & =\lambda_{q}^{2}\left\|\zeta_{q, n_{q}}(k)\right\|^{2}+\left\|Z_{q, n_{q}}(k)\right\|^{2} \\
			& +2 \lambda_{q} \zeta_{q, n_{q}}^{\mathrm{T}}(k)\left(Z_{q, n_{q}}(k)+{\widetilde{Z}}_{q, n_{q}}(k)\right)\\
			& +2 Z_{q, n_{q}}^{\mathrm{T}}(k) \widetilde{Z}_{q, n_{q}}(k)+\left\|\widetilde{Z}_{q, n_{q}}(k)\right\|^{2}.
		\end{aligned}
		$$
		Taking the expectation on both sides, we have
		$$
		\begin{aligned}
			E\|\zeta_{q, n_q}&(k+1)\|^{2}= \lambda_{q}^{2} E\left\|\zeta_{q, n_{q}}(k)\right\|^{2}+E\left\|Z_{q, n_{q}}(k)\right\|^{2} \\
			& +E\left\|\widetilde{Z}_{q, n_{q}}(k)\right\|^{2}+2 \lambda_{q} E\left\{\zeta_{q, n_{q}}^{\mathrm{T}}(k) Z_{q, n_{q}}(k)\right\} \\
			\leq & \lambda_{q}{ }^{2} E\left\|\zeta_{q, n_{q}}(k)\right\|^{2}+E\left\|Z_{q, n_{q}}(k)\right\|^{2}+E\left\|\widetilde{Z}_{q, n_{q}(k)}\right\|^{2} \\
			& +\lambda_{q} \varepsilon E\left\|\zeta_{q, n_{q}}(k)\right\|^{2}+\frac{\lambda_{q}}{\varepsilon} E\left\|Z_{q, n_{q}}(k)\right\|^{2} \\
			= & \left(\lambda_{q}^{2}+\lambda_{q} \varepsilon\right) E\left\|\zeta_{q, n_{q}}(k)\right\|^{2}\\
			&+\left(1+\frac{\lambda_{q}}{\varepsilon}\right) E\left\|Z_{q, n_{q}}(k)\right\|^{2}+E\left\|\widetilde{Z}_{q, n_{q}}(k)\right\|^{2}.
		\end{aligned}
		$$
		Letting $r_{1}=\lambda_{q}^{2}+\lambda_{q} \varepsilon, r_{2}=1+\frac{\lambda q}{\varepsilon}$, we have
		$$
		\begin{aligned}
			E\|\zeta_{q, n_{q}}&(k+1)\|^{2} \leq r_{1}^{k+1} E\left\|\zeta_{q, n_{q}}(0)\right\|^{2} \\
			& +\sum_{l=0}^{k} r_{2} r_{1}^{k-l} E\left\|Z_{q, n_{q}}(l)\right\|^{2}+\sum_{l=0}^{k} r_{1}^{k-l} E\left\|\widetilde{Z}_{q, n q}(l)\right\|^{2}.
		\end{aligned}
		$$
		Since the spectral radius satisfies $\rho (I_{N-1} \otimes A - \Lambda \otimes BK) < 1$, there exists a sufficiently small $\varepsilon > 0$ ensuring $r_1 < 1$. It then follows that
		$$
		\begin{aligned}
			\lim\limits_{k \rightarrow \infty} r_1^{k+1} E\|\zeta_{q, n_q}(0)\|^2 = 0.
		\end{aligned}
		$$
		Since $\rho (A-L C)<1$, we have
		$$
		\begin{aligned}
			\lim\limits _{k \rightarrow \infty} E\|Z_{q, n_{q}}(k)\|^{2}=0
		\end{aligned}
		$$
		and
		$$
		\begin{aligned}
			\lim\limits _{k \rightarrow \infty} \sum_{l=0}^{k} r_{2} r_{1}{ }^{k-l} E\|Z_{q, n_{q}}(l)\|^{2}=0.
		\end{aligned}
		$$
		Since $\widetilde{Z}(k) = P_{1}^{-1} \widetilde{M} \widetilde{\eta}(k)$ is a linear combination of $\{\eta_i(k)\}$ and $\sum_{k=0}^{\infty} p_{i}(k) < \infty$, we obtain
		$$
		\begin{aligned}
			\lim\limits _{k \rightarrow \infty} E\|\widetilde{Z}(k)\|^{2}=0.
		\end{aligned}
		$$
		This immediately yields
		$$
		\begin{aligned}
			\lim\limits _{k \rightarrow \infty} E\|\widetilde{Z}_{q, n_{q}}(k)\|^{2}=0,
		\end{aligned}
		$$
		and  then
		$$
		\begin{aligned}
			\lim\limits _{k \rightarrow \infty} \sum_{l=0}^{k} r_{1}^{k-l} E\|\widetilde{Z}_{q, n_{q}}(l)\|^{2}=0.
		\end{aligned}
		$$
		Thus, we have
		\begin{equation}\label{sq23}
			\lim\limits _{k \rightarrow \infty} E\left\|\zeta_{q, n_{q}}(k+1)\right\|^{2}=0 .
		\end{equation}

		Now, we prove $\lim\limits _{k \rightarrow \infty} E\left\|\zeta_{q, j}(k+1)\right\|^{2}=0$ based on $\lim\limits _{k \rightarrow \infty} E\left\|\zeta_{q, j+1}(k+1)\right\|^{2}$ for $j=1,2, \cdots, n_{q}-1$, which is called the backstepping method. Observe that
		$$
		\begin{aligned}
			E\|\zeta_{q, j}&(k+1)\|^{2} \leq r_{1}^{k+1} E\left\|\zeta_{q, j}(0)\right\|^{2}+\sum_{l=0}^{k} r_{2} r_{1}^{k-l} E\left\|Z_{q, j}(l)\right\|^{2} \\
			& +\sum_{l=0}^{k} r_{1}^{k-l} E\left\|\widetilde{Z}_{q, j}(l)\right\|^{2}+\sum_{l=0}^{k} r_{1}^{k-l} E\left\|\zeta_{q, j+1}(l)\right\|^{2} .
		\end{aligned}
		$$
		Note that
		$$
		\lim\limits _{k \rightarrow \infty} \sum_{l=0}^{k} r_{1}{ }^{k-l} E\left\|\zeta_{q, j+1}(l)\right\|^{2}=0.
		$$
		Then, by the similar process in obtaining \eqref{sq23}, we have
		$$\lim\limits _{k \rightarrow \infty} E\|\zeta_{q, j}(k+1)\|^{2}=0, j=1,2, \cdots, n_{q}-1.$$
		That is, the mean square consensus is achieved.
		
		We now  provide the proof of the almost sure consensus.
		Letting $V(k)=\xi^{\mathrm{T}}(k) P \xi(k)$, we have
		$$
		\begin{aligned}
			V&(k+1)=\xi^{\mathrm{T}}(k+1) P \xi(k+1) \\
			& =\hspace{-0.1cm}\left[\hspace{-0.05cm}R_{1} \xi(k)\hspace{-0.08cm}+\hspace{-0.08cm}R_{2} \psi(k)\hspace{-0.12cm}+\hspace{-0.12cm}\widetilde{M} \widetilde{\eta}(k)\hspace{-0.05cm}\right]^{\mathrm{T}} \hspace{-0.1cm}P\hspace{-0.1cm}\left[\hspace{-0.05cm}R_{1} \xi(k)\hspace{-0.05cm}+\hspace{-0.12cm}R_{2} \psi(k)\hspace{-0.12cm}+\hspace{-0.1cm}\widetilde{M} \widetilde{\eta}(k)\hspace{-0.05cm}\right]  \\
			& =V(k)\hspace{-0.1cm}+\hspace{-0.1cm}\xi^{\mathrm{T}}(k)\left(R_{1}^{\mathrm{T}} P R_{1}\hspace{-0.1cm}-P\right) \xi(k)\hspace{-0.05cm}+\hspace{-0.05cm}\psi^{\mathrm{T}}(k) R_{2}^{\mathrm{T}} P R_{2} \psi(k) \\
			&\quad+\widetilde{\eta}^{\mathrm{T}}(k) \widetilde{M}^{\mathrm{T}} P \widetilde{M} \widetilde{\eta}(k) +2 \xi^{\mathrm{T}}(k) R_{1}^{\mathrm{T}} P R_{2} \psi(k)\\
			&\quad+2 \xi^{\mathrm{T}}(k) R_{1}^{\mathrm{T}} P \widetilde{M} \widetilde{\eta}(k)+2 \psi^{\mathrm{T}}(k) R_{2}^{\mathrm{T}} P \widetilde{M} \widetilde{\eta}(k) .
		\end{aligned}
		$$
		Taking the expectation on both sides, one has
		$$
		\begin{aligned}
			E\{V(k+&1) \mid \mathcal{F}_{k}\}=V_{k}+E\left\{\xi^{\mathrm{T}}(k)\left(R_{1}^{\mathrm{T}} P R_{1}-P\right) \xi(k)\right. \\
			& \left.+\psi^{\mathrm{T}}(k) R_{2}^{\mathrm{T}} P R_{2} \psi(k)+\widetilde{\eta}^{\mathrm{T}}(k) \widetilde{M}^{\mathrm{T}} P \widetilde{M} \widetilde{\eta}(k)\right\} \\
			&+2 \xi^{\mathrm{T}}(k) R_{1}^{\mathrm{T}} P R_{2} \psi(k)\\
			\leq & V_{k}+E\{\xi^{\mathrm{T}}(k)\left(R_{1}^{\mathrm{T}} P R_{1}-P\right) \xi(k) \\
			& +\psi^{\mathrm{T}}(k) R_{2}^{\mathrm{T}} P R_{2} \psi(k)+\lambda_{\max }(\widetilde{M}^{\mathrm{T}} P \widetilde{M})\|\widetilde{\eta}(k)\|^{2}\\
			&+\varepsilon_{1} \xi^{\mathrm{T}}(k) R_{1}^{\mathrm{T}} P R_{1} \xi(k)+\frac{1}{\varepsilon_{1}} \psi^{\mathrm{T}}(k) R_{2}^{\mathrm{T}} P R_{2} \psi(k)\} \\
			=&V_{k}+\lambda_{\max }(\widetilde{M}^{\mathrm{T}} P \widetilde{M})n \sum_{i=1}^{N} c_{i} p_{i}(k)\\
			&+E\left\{\xi^{\mathrm{T}}(k)\left(R_{1}^{\mathrm{T}} P R_{1}-P+\varepsilon_{1} R_{1}^{\mathrm{T}} P R_{1}\right) \xi(k)\right.\\
			&+\left.\psi^{\mathrm{T}}(k)\left(R_{2}^{\mathrm{T}} P R_{2}+\frac{1}{\varepsilon_{1}} R_{2}^{\mathrm{T}} P R_{2}\right) \psi(k)\right\}.
		\end{aligned}
		$$
		Since $\rho\left(R_{3}\right)<1$, we have
		$$
		\begin{aligned}
			\lim\limits _{k \rightarrow \infty} \|\psi(k)\|=\lim\limits _{k \rightarrow \infty}\|x_i(k)-\hat{x}_i(k)\|=0,
		\end{aligned}
		$$
		$$
		\begin{aligned}
			\sum_{k=0}^{\infty}\|\psi(k)\|^{2}<\infty,
		\end{aligned}
		$$
		and
		$$
		\begin{aligned}
			\sum_{k=0}^{\infty}& \psi^{\mathrm{T}}(k)\left(1+\frac{1}{\varepsilon_{1}}\right) R_{2}^{\mathrm{T}} P R_{2} \psi(k) \\
			&< \left(1+\frac{1}{\varepsilon_{1}}\right) \lambda_{\max }(R_{2}^{\mathrm{T}} P R_{2}) \sum_{k=0}^{\infty}\|\psi(k)\|^{2}<\infty.
		\end{aligned}
		$$
		Note that $$\sum_{k=0}^{\infty} \lambda_{\max}(\widetilde{M}^{\mathrm{T}} P \widetilde{M}) n \sum_{i=1}^{N} c_{i} p_{i}(k) < \infty.$$ Since $\rho\left(R_{1}\right)<1$, there exists a positive definite matrix $P$ such that $R_{1}^{\mathrm{T}} P R_{1}-P+Q=0$, where $Q>0$. Select a small enough $\varepsilon_{1}$ such that $\varepsilon_{1} R_{1}^{\mathrm{T}} P R_{1}<Q$. By Lemma \ref{semi}, we have $\lim\limits _{k \rightarrow \infty} V(k)$ exists and is finite. Since $\lim\limits_{k \rightarrow \infty} E\{V(k)\}=0$, we must have $$\lim\limits _{k \rightarrow \infty} V(k)=0 \quad a.s.,$$ which implies $$\lim\limits _{k \rightarrow \infty} \|\xi_{k}\|=0 \quad a.s.$$ Now, the proof is completed.
	\end{proof}
	
	\
	
	\begin{remark}
		Evidently, Theorem 1 relaxes the constraints on the variance of Laplace noise, enabling the achievement of both mean square consensus and almost sure consensus without requiring exponentially decaying noise variance(\cite{22,81}).
	\end{remark}
	
	{	\begin{remark}[Separation principle]
			The separation principle refers to the classical result stating that, for linear systems, the controller design and the observer design can be carried out independently. 
			Specifically, the stability of the closed-loop system can be ensured by combining a stabilizing state-feedback controller with a stable observer, without requiring a joint or coupled design procedure. 
			In  this paper, the separation principle means that the consensus protocol and the observer (with decaying Laplace noise) can be designed separately, and the overall closed-loop consensus performance remains guaranteed.
	\end{remark}}
	
	\
	
	Based on the aforementioned theorem, we can get the following conclusion.
	\begin{corollary}\label{con1}
		Consider $b_i(k)=c_{i} g_{i}^{k}$ with $c_{i}>0$ and $0<g_{i}<1$. If there exist two matrices $K$ and $L$ such that $\rho (A-L C)<1$ and $\rho (I_{N-1} \otimes A-\Lambda \otimes B K)< 1$ with $\Lambda=\operatorname{diag}\left\{\lambda_{2}, \cdots, \lambda_{N}\right\}$, then both the mean square consensus and  the almost sure consensus are achieved for  the MAS \eqref{mainagent} under the observer \eqref{obser} and the control \eqref{control}.
	\end{corollary}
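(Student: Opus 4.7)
The plan is to recognize Corollary \ref{con1} as a direct specialization of Theorem \ref{consen1}, so the proof amounts to verifying that the geometric scale sequence $b_i(k)=c_i g_i^k$ fits the hypotheses of the more general theorem. First I would set $p_i(k)=g_i^k$ and note that $c_i>0$ and $p_i(k)>0$ hold by assumption, so the only nontrivial hypothesis to check is the summability condition $\sum_{k=0}^{\infty} p_i(k)<\infty$.

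Next I would verify summability by the geometric series formula: since $0<g_i<1$, we have
\begin{equation*}
\sum_{k=0}^{\infty} p_i(k)=\sum_{k=0}^{\infty} g_i^{k}=\frac{1}{1-g_i}<\infty.
\end{equation*}
Thus the noise scale sequence $b_i(k)=c_i g_i^k$ satisfies all the requirements placed on $b_i(k)$ in Theorem \ref{consen1}.

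Finally, since the spectral radius conditions $\rho(A-LC)<1$ and $\rho(I_{N-1}\otimes A-\Lambda\otimes BK)<1$ in Corollary \ref{con1} are identical to those of Theorem \ref{consen1}, I would invoke Theorem \ref{consen1} directly to conclude that the MAS \eqref{mainagent} under the observer \eqref{obser} and control \eqref{control} achieves both mean square consensus and almost sure consensus. There is no genuine obstacle in this proof; the only point worth emphasizing is that exponentially decaying noise is a strictly stronger condition than the summability requirement of Theorem \ref{consen1}, so the corollary follows immediately and additionally highlights the relaxation achieved by the main theorem relative to the exponential-decay assumption used in previous works.
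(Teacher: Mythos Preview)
Your proposal is correct and matches the paper's approach exactly: the paper states Corollary \ref{con1} immediately after Theorem \ref{consen1} as a direct consequence (``Based on the aforementioned theorem, we can get the following conclusion''), and the analogous Corollary \ref{coro6} is proved in precisely the way you outline, by noting that $p_i(k)=g_i^k$ with $0<g_i<1$ satisfies $p_i(k)>0$ and $\sum_{k=0}^{\infty}p_i(k)<\infty$.
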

	

	{
		From Corollary \ref{con1}, we have established that the system achieves consensus when the scale parameter $b_i(k)$ decays exponentially. A natural and important question that follows is: what is the explicit convergence rate in the mean square sense under such conditions? The following theorem addresses this question by  characterizing the convergence rate. 
	}
	{
		\begin{theorem}\label{mscr1}
			Consider $b_i(k) = c_i g_i^k$ with $c_i > 0$ and $0 < g_i < 1$. Assume there exist matrices $K$ and $L$ such that $\rho(A - LC) < 1$ and $\rho(I_{N-1} \otimes A - \Lambda \otimes BK) < 1$, where $\Lambda = \operatorname{diag}\{\lambda_2, \dots, \lambda_N\}$. Then, the mean-square convergence rate is given by:
			\[
			\rho = \max\left( \rho(I_{N-1} \otimes A - \Lambda \otimes BK), \rho(A - LC), \max_i g_i \right).
			\]
		\end{theorem}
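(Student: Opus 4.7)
The plan is to analyze the explicit solution of the semi-decoupled system $(\xi(k),\psi(k))$ derived just before Theorem \ref{consen1} and extract the dominant decay rate of $E\|\xi(k)\|^2$. First I would solve the deterministic observer-error equation $\psi(k)=R_3^k\psi(0)$, noting that $\rho(R_3)=\rho(A-LC)$, and then apply the discrete-time variation-of-constants formula to the $\xi$ dynamics to obtain
$$
\xi(k)= R_1^k\xi(0) + \sum_{l=0}^{k-1} R_1^{k-1-l} R_2 R_3^l\psi(0) + \sum_{l=0}^{k-1} R_1^{k-1-l}\widetilde M\,\widetilde\eta(l).
$$
Because $\widetilde\eta(l)$ is zero mean, mutually independent across $l$, and independent of the initial condition, the cross terms vanish in expectation and $E\|\xi(k)\|^2 = T_1(k)+T_2(k)$, where $T_1(k)$ is the squared norm of the deterministic part and $T_2(k)=\sum_{l=0}^{k-1}\mathrm{tr}\bigl(R_1^{k-1-l}\widetilde M\,\Sigma_l\,\widetilde M^{\mathrm T}(R_1^{k-1-l})^{\mathrm T}\bigr)$, with $\Sigma_l$ the block-diagonal covariance whose blocks scale as $2c_i^2 g_i^{2l}I_n$. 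This decomposition cleanly isolates the three potential sources of decay: the homogeneous dynamics, the observer-error feedthrough, and the noise injection.

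Next I would bound both $T_1(k)$ and $T_2(k)$ via Gelfand's formula: for every $\varepsilon>0$ there exists $C_\varepsilon>0$ such that $\|R_1^k\|\le C_\varepsilon(\rho(R_1)+\varepsilon)^k$ and $\|R_3^k\|\le C_\varepsilon(\rho(R_3)+\varepsilon)^k$. Applied to $T_1$, a standard estimate of the convolution $\sum R_1^{k-1-l}R_2 R_3^l\psi(0)$ yields an $O\bigl(p(k)(\max(\rho(R_1),\rho(A-LC))+\varepsilon)^{2k}\bigr)(\|\xi(0)\|^2+\|\psi(0)\|^2)$ bound, where $p(k)$ is a polynomial coming from Jordan blocks and from the geometric sum. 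Applied to $T_2$, together with the explicit decay of $\Sigma_l$, one obtains an $O\bigl(p(k)(\max(\rho(R_1),\max_i g_i)+\varepsilon)^{2k}\bigr)$ bound. Taking the $2k$-th root, letting $k\to\infty$, and finally letting $\varepsilon\to 0$ yields the upper bound $\rho\le\max(\rho(R_1),\rho(A-LC),\max_i g_i)$.

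For the matching lower bound I would activate each of the three rates separately. Choosing $\psi(0)=0$ and $\xi(0)$ aligned with a right eigenvector of $R_1$ for an eigenvalue of modulus $\rho(R_1)$ makes $T_1(k)\ge\rho(R_1)^{2k}\|\xi(0)\|^2$ up to lower-order terms; choosing $\xi(0)=0$ and $\psi(0)$ aligned with a dominant eigenvector of $R_3$ forces the middle term of $\xi(k)$ to grow at rate $\rho(A-LC)^k$ via the identity $\sum_{l=0}^{k-1}R_1^{k-1-l}v\,\mu^l$ with $\mu$ on the spectrum of $R_3$; and $T_2(k)$ alone has leading order $(\max(\rho(R_1),\max_i g_i))^{2k}$, which dominates the normalized ratio along suitable sequences of initial data. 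The main obstacle I expect is the resonant case, where $\rho(R_1)$ coincides with either $\rho(A-LC)$ or $\max_i g_i$: the geometric sums then pick up an additional polynomial-in-$k$ factor, and one must verify that this factor is absorbed after taking $2k$-th roots so that the three rates appear cleanly under the $\max$. A second delicate point is that the outer $\sup_{(\xi(0),\psi(0))\neq 0}$ and the limit $k\to\infty$ in the definition of $\rho$ must be handled consistently with the fact that $T_2(k)$ is independent of the initial conditions; this is justified by the exponential-polynomial asymptotics of $T_1$ and $T_2$ obtained in the upper-bound step.
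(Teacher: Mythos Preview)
Your proposal is correct and follows essentially the same route as the paper: the same variation-of-constants formula for $\xi(k)$, the same deterministic/stochastic split with cross terms vanishing by independence, the same geometric bounds on $\|R_1^k\|$ and $\|R_3^k\|$, and the same $2k$-th root argument, with the lower bound obtained by aligning initial conditions with dominant modes. Your treatment is in fact slightly more careful than the paper's on two points---you use Gelfand's formula with an $\varepsilon$-slack (the paper asserts $\|R_1^t\|\le \mathcal{C}\rho_1^t$ without the $\varepsilon$, which is not generally valid for nondiagonalizable $R_1$), and you flag the resonant case and the interplay between the $\sup$ and the $k\to\infty$ limit, which the paper leaves implicit.
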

	}
	
	\begin{proof}
		Using the transformed coordinates, we have the system:
		\[
		\begin{cases}
			\xi(k+1) = R_1 \xi(k) + R_2 \psi(k) + \widetilde{M} \widetilde{\eta}(k) \\
			\psi(k+1) = R_3 \psi(k)
		\end{cases}
		\]
		where $R_1 = I_{N-1} \otimes A - \Lambda \otimes BK$, $R_2 = \Lambda \otimes BK$, $R_3 = I_{N-1} \otimes (A - LC)$.
		
		The solution for $\psi(k)$ is $\psi(k) = R_3^k \psi(0)$. Substituting into the $\xi$-dynamics and solving recursively yields:
		\[
		\xi(k) = R_1^k \xi(0) + \sum_{j=0}^{k-1} R_1^{k-1-j} R_2 R_3^j \psi(0) + \sum_{j=0}^{k-1} R_1^{k-1-j} \widetilde{M} \widetilde{\eta}(j).
		\]
		
		Define the deterministic and stochastic parts as:
		\begin{align*}
			d_{\xi}(k) &= R_1^k \xi(0) + \sum_{j=0}^{k-1} R_1^{k-1-j} R_2 R_3^j \psi(0), \\
			n_{\xi}(k) &= \sum_{j=0}^{k-1} R_1^{k-1-j} \widetilde{M} \widetilde{\eta}(j).
		\end{align*}
		Then, $\xi(k) = d_{\xi}(k) + n_{\xi}(k)$. Since $\widetilde{\eta}(k)$ is zero-mean and independent of initial conditions, we have
		\[
		\mathbb{E}[\xi(k)^T \xi(k)] = \mathbb{E}[d_{\xi}(k)^T d_{\xi}(k)] + \mathbb{E}[n_{\xi}(k)^T n_{\xi}(k)].
		\]
		
		Let $\rho_1 = \rho(R_1) = \rho(I_{N-1} \otimes A - \Lambda \otimes BK)$, $\rho_2 = \rho(R_3) = \rho(A - LC)$, and $\rho_s = \max(\rho_1, \rho_2)$. There exists a constant $\mathcal{C} > 0$ such that for all $t \ge 0$, $\|R_1^t\| \le \mathcal{C} \rho_1^t$ and $\|R_3^t\| \le \mathcal{C} \rho_2^t$.
		
		For the deterministic part,
		\begin{align*}
			\|d_{\xi}(k)\| &\le \|R_1^k \xi(0)\| + \sum_{j=0}^{k-1} \|R_1^{k-1-j} R_2 R_3^j \psi(0)\| \\
			&\le \mathcal{C} \rho_1^k \|\xi(0)\| + \sum_{j=0}^{k-1} \mathcal{C} \rho_1^{k-1-j} \|R_2\| \mathcal{C} \rho_2^j \|\psi(0)\|.
		\end{align*}
		Thus,
		\[
		\|d_{\xi}(k)\| \le \mathcal{C} \rho_1^k \|\xi(0)\| + \mathcal{C}^2 \|R_2\| \|\psi(0)\| \sum_{j=0}^{k-1} \rho_1^{k-1-j} \rho_2^j.
		\]
		Note that $\sum_{j=0}^{k-1} \rho_1^{k-1-j} \rho_2^j \le k \rho_s^{k-1}$. Applying the inequality $(a + b)^2 \le 2a^2 + 2b^2$, we have
		\[
		\|d_{\xi}(k)\|^2 \le 2\mathcal{C}^2 \rho_1^{2k} \|\xi(0)\|^2 + 2\mathcal{C}^4 \|R_2\|^2 \|\psi(0)\|^2 k^2 \rho_s^{2k-2}.
		\]
		Since $\psi(0)$ is fixed, $\|\psi(0)\|$ is finite. Therefore, we obtain
		\[
		\frac{\|d_{\xi}(k)\|^2}{\|\xi(0)\|^2 + \|\psi(0)\|^2} \le 2\mathcal{C}^2 \rho_1^{2k} + \frac{2\mathcal{C}^4 \|R_2\|^2 \|\psi(0)\|^2 k^2 \rho_s^{2k-2}}{\|\xi(0)\|^2 + \|\psi(0)\|^2}.
		\]
		Taking the $k$-th root and limit as $k \to \infty$, we have
		\[
		\limsup_{k \to \infty} \left( \frac{\|d_{\xi}(k)\|^2}{\|\xi(0)\|^2 + \|\psi(0)\|^2} \right)^{1/k} \le \rho_s^2.
		\]
		
		For the stochastic part,
		\[
		\mathbb{E}[n_{\xi}(k)^T n_{\xi}(k)] = \sum_{j=0}^{k-1} \operatorname{tr} \left[ R_1^{k-1-j} \widetilde{M} \Lambda_j \widetilde{M}^T (R_1^{k-1-j})^T \right],
		\]
		where $\Lambda_j = \operatorname{diag}(2c_1^2 g_1^{2j}, 2c_1^2 g_2^{2j}, \dots)$. Let $\theta = \max_i g_i^2$ and $c = \max_i c_i$. Then $\Lambda_j \leq 2c^2 \theta^j I$, and
		\begin{align*}
			\mathbb{E}[n_{\xi}(k)^T n_{\xi}(k)] &\le 2c^2 \sum_{j=0}^{k-1} \theta^j \| R_1^{k-1-j} \widetilde{M} \|_F^2 \\
			&\le 2c^2 \mathcal{C}^2 \|\widetilde{M}\|_F^2 \sum_{j=0}^{k-1} \theta^j \rho_1^{2(k-1-j)}.
		\end{align*}
		Hence,
		\[
		\mathbb{E}[n_{\xi}(k)^T n_{\xi}(k)] \le 2c^2 \mathcal{C}^2 \|\widetilde{M}\|_F^2 \rho_1^{2(k-1)} \sum_{j=0}^{k-1} \left( \frac{\theta}{\rho_1^2} \right)^j.
		\]
		The growth rate satisfies
		\[
		\limsup_{k \to \infty} \left( \mathbb{E}[n_{\xi}(k)^T n_{\xi}(k)] \right)^{1/k} \le \max(\theta, \rho_1^2).
		\]
		Since $\mathbb{E}[n_{\xi}(k)^T n_{\xi}(k)]$ is independent of initial conditions, we have
		\[
		\frac{\mathbb{E}[n_{\xi}(k)^T n_{\xi}(k)]}{\|\xi(0)\|^2 + \|\psi(0)\|^2} = \mathcal{O}\left( [\max(\theta, \rho_1^2)]^k \right).
		\]
		
		Combining both parts, we obtain
		\begin{align*}
			\frac{\mathbb{E}[\xi(k)^T \xi(k)]}{\|\xi(0)\|^2 + \|\psi(0)\|^2}
			&\le \mathcal{O}(\rho_s^{2k}) + \mathcal{O}\left( [\max(\theta, \rho_1^2)]^k \right) \\
			&= \mathcal{O}\left( [\max(\rho_s^2, \theta, \rho_1^2)]^k \right).
		\end{align*}
		Therefore,
		\begin{align*}
			\rho &\le \max(\rho_s, \sqrt{\theta}, \rho_1) = \max(\rho_s, \sqrt{\theta}) \\
			&= \max\left( \rho(I_{N-1} \otimes A - \Lambda \otimes BK), \rho(A - LC), \max_i g_i \right).
		\end{align*}
		The equality is achieved by considering initial conditions aligned with the slowest-decaying modes. This completes the proof.
	\end{proof}
	
	\
	
	{
		\begin{remark}
			Theorem \ref{mscr1} reveals that the overall mean-square convergence rate is determined by the worst-case (slowest) convergence behavior among three fundamental factors: (i) the inter-agent consensus dynamics $\rho(I_{N-1} \otimes A - \Lambda \otimes BK)$, (ii) the estimation error dynamics $\rho(A - LC)$, and (iii) the maximum decay rate of the privacy-preserving noise sequences $\max_i g_i$.  In particular, the result highlights a fundamental trade-off in privacy-preserving consensus-while stronger privacy protection typically requires slower noise decay (larger $g_i$), this may comes at the direct cost of reduced convergence rate.
	\end{remark}}
	\begin{remark}
		In \cite{97}, Liu et al. investigated the differentially private output consensus problem for continuous-time heterogeneous linear MASs and got static average consensus in mean square sense. Different from  \cite{97}, we study the observer-based differentially private consensus based on the full-order
		observer and obtain both the mean square consensus and the almost sure consensus of the observer-based discrete-time general linear MASs.
	\end{remark}
	\begin{remark}
		Note that the existence of $L$ such that   $\rho (A-L C)<1$ can be ensured by the condition that $(A,C)$ is detectable.  If $\gamma^{*}={argmin}_{\gamma}[\exists \hat{\mathcal{P}} >0 \mid \hat{\mathcal{P}}> A^{\mathrm{T}} \hat{\mathcal{P}} A+I_{n}-\gamma A^{\mathrm{T}} \hat{\mathcal{P}} B(I_{m}+B^{\mathrm{T}} \hat{\mathcal{P}} B)^{-1} B^{\mathrm{T}} \hat{\mathcal{P}} A]\in(0,1)$ is well defined, then for any $1>\gamma \geq \gamma^{*}$, the sufficient condition $\rho (I_{N-1} \otimes A-\Lambda \otimes B K)< 1$ can be guaranteed with
		\begin{equation*}
			K=\varsigma \left(I_{r}+B^{\mathrm{T}} \mathcal{P} B\right)^{-1}B^{\mathrm{T}} \mathcal{P}A,
		\end{equation*}
		where
		\begin{equation*}
			-\frac{\sqrt{1-\gamma}}{\lambda_{2}}+\frac{1}{\lambda_{2}} \leq \varsigma  \leq \frac{\sqrt{1-\gamma}}{\lambda_{N}}+\frac{1}{\lambda_{N}}
		\end{equation*}
		and $\mathcal{P}$ satisfies the following parameterized generalized Algebraic Riccati Equation:
		\begin{eqnarray*}
			0&=& A^{\mathrm{T}} \mathcal{P}A-\gamma A^{\mathrm{T}}\mathcal{P} B \cr
			&& \times\left(I_{m}+B^{\mathrm{T}} \mathcal{P} B\right)^{-1} B^{\mathrm{T}} \mathcal{P}A+I_{n}-\mathcal{P}.
		\end{eqnarray*}
		See \cite{occ} for more details about the solution to the above  parameterized generalized Algebraic Riccati Equation.
	\end{remark}

	From the established sufficient conditions of Theorem \ref{consen1},  the  state estimation gain $L$ and the feedback gain  $K$ can be independently designed, which ensures both the mean square stability and the almost sure stability of the agents' estimation errors and pairwise measurement errors, even in the presence of Laplace noise. This foundational concept underpins the separation principle for observer-based consensus in linear MASs. Furthermore, this principle extends to the observer-based stabilization of a single linear system. To elucidate this concept, consider the following linear system accompanied by its full-order observer:
	\begin{equation}\label{singles}
		\begin{aligned}
			&\left\{\begin{array}{l}
				x(k+1)=A x(k)+B u(k), \\
				y(k)=Cx(k),  k \in \mathbb{Z}_{\geq 0}
			\end{array}\right.
		\end{aligned}
	\end{equation}
	and
	\begin{equation}\label{singlesobser}
		\begin{aligned}
			\left\{\begin{array}{l}
				\hat{x}(k+1)=A \hat{x}(k)+B u(k)+L\left(y(k)-\hat{y}(k)\right), \\
				\hat{y}(k)=C\hat{x}(k) ,
			\end{array}\right.
		\end{aligned}
	\end{equation}
	where
	\begin{equation}\label{feedbacks}
		u(k)=K(\hat{x}(k)+\eta(k))
	\end{equation}
	and $\eta(k) \sim \operatorname{Lap}(b(k))^n$. Here, $u(k)$ is control input based on the observer \eqref{singlesobser} affected by the noise $\eta(k)$. Theorem \ref{consen1} implies the following corollary, which is also new.
	\begin{corollary}\label{corostab}
		Consider $b(k)=c p^{k}$ with $c>0$ and $0<p<1$. If there exist two matrices $K$ and $L$ such that $\rho (A-L C)<1$ and $\rho ( A- B K)< 1$, then both the mean square  and  the almost sure stabilization are achieved for    \eqref{singles} under  the control \eqref{feedbacks}.
	\end{corollary}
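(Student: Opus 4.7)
The plan is to specialize the proof machinery of Theorem \ref{consen1} to the single-system setting, where the graph Laplacian step disappears and the coupled recursion collapses to two matrices, $A-LC$ and the closed-loop matrix $A_c$ determined by the hypothesis $\rho(A-BK)<1$. First I would introduce the observation error $e(k)=x(k)-\hat{x}(k)$; subtracting \eqref{singlesobser} from \eqref{singles} kills the noise and control terms, leaving the purely deterministic recursion
\[ e(k+1)=(A-LC)\,e(k), \]
so $\rho(A-LC)<1$ gives $\|e(k)\|=\mathcal{O}(\mu^{k})$ for some $\mu\in(0,1)$. Substituting $\hat{x}(k)=x(k)-e(k)$ into \eqref{feedbacks} and then \eqref{singles} produces a closed-loop state recursion of the form
\[ x(k+1)=A_{c}\,x(k)+F_{1}\,e(k)+F_{2}\,\eta(k), \]
where $F_{1},F_{2}$ are fixed matrices built from $B$ and $K$. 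Two key summability facts are then in hand: $\sum_{k}E\|e(k)\|^{2}<\infty$ since $e(k)$ decays geometrically, and $\sum_{k}E\|\eta(k)\|^{2}<\infty$ because each component of $\eta(k)\sim\operatorname{Lap}(cp^{k})^{n}$ has variance $2c^{2}p^{2k}$ with $p\in(0,1)$.

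For mean-square stabilization I would reuse the Jordan-form and backstepping argument of Theorem \ref{consen1} verbatim. Select $P_{1}$ with $P_{1}^{-1}A_{c}P_{1}=J_{R}$, set $\zeta(k)=P_{1}^{-1}x(k)$, and write the coupled per-block recursion analogous to \eqref{zetaq}, with the two disturbances $P_{1}^{-1}F_{1}e(k)$ and $P_{1}^{-1}F_{2}\eta(k)$ playing the roles of $Z_{q}$ and $\widetilde{Z}_{q}$. Applying the same Young-type inequality with parameter $\varepsilon$ to obtain $r_{1}<1$, and iterating, yields a bound of the form
\[ E\|\zeta_{q,n_{q}}(k+1)\|^{2}\le r_{1}^{k+1}E\|\zeta_{q,n_{q}}(0)\|^{2}+\sum_{\ell=0}^{k}r_{1}^{k-\ell}\bigl(r_{2}E\|Z_{q,n_{q}}(\ell)\|^{2}+E\|\widetilde{Z}_{q,n_{q}}(\ell)\|^{2}\bigr), \]
which tends to zero by the two summability facts above. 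Backstepping from $j=n_{q}-1$ down to $j=1$ through each Jordan block, exactly as in \eqref{sq23}, gives $E\|x(k)\|^{2}\to 0$.

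For almost-sure stabilization I would pick $P>0$ solving the discrete Lyapunov equation $A_{c}^{\mathrm{T}}PA_{c}-P+Q=0$ for some $Q>0$ (possible since $\rho(A_{c})<1$) and take $V(k)=x^{\mathrm{T}}(k)Px(k)$. Expanding $V(k+1)$, taking the conditional expectation given $\mathcal{F}_{k}$, and exploiting that $\eta(k)$ is zero-mean and independent of $\mathcal{F}_{k}$ kills every cross term linear in $\eta(k)$, while a Young-type inequality with a small parameter $\varepsilon_{1}$ chosen so that $\varepsilon_{1}A_{c}^{\mathrm{T}}PA_{c}<Q$ absorbs the cross term between $x(k)$ and $e(k)$. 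The outcome is the supermartingale-type inequality
\[ E\{V(k+1)\mid\mathcal{F}_{k}\}\le V(k)-\omega(k)+v(k), \]
with $\omega(k)\ge 0$ and $v(k)$ dominated by a linear combination of $\|e(k)\|^{2}$ and $E\|\eta(k)\|^{2}$, hence summable a.s. Lemma \ref{semi} then gives $V(k)\to V^{\ast}$ a.s., and combining this with $EV(k)\to 0$ (from the mean-square step) forces $V^{\ast}=0$, whence $x(k)\to 0$ almost surely.

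The main obstacle I expect is the careful bookkeeping in the Lyapunov expansion: the disturbance $F_{1}e(k)+F_{2}\eta(k)$ mixes an $\mathcal{F}_{k}$-measurable exponentially decaying term with a zero-mean noise of summable variance, and one must make sure that (i) the cross terms involving $\eta(k)$ vanish under the conditional expectation, (ii) the cross terms involving $e(k)$ are absorbed by the Lyapunov decrease via the choice of $\varepsilon_{1}$, and (iii) the residual $v(k)$ remains a.s.\ summable so that Lemma \ref{semi} applies. Once this accounting is done, Corollary \ref{corostab} falls out as a direct specialization of Theorem \ref{consen1}.
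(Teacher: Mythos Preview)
Your proposal is correct and is exactly what the paper does: the paper offers no separate proof, merely stating that ``Theorem \ref{consen1} implies the following corollary,'' and your plan carries out precisely that specialization (Jordan-form backstepping for the mean-square part, the Lyapunov function plus Lemma \ref{semi} for the almost-sure part). One small caution: if you compute $A_c$ literally from $u(k)=K(\hat{x}(k)+\eta(k))$ you obtain $A+BK$ rather than $A-BK$, so there is a sign-convention mismatch in the paper's statement; your vague phrasing ``$A_c$ determined by the hypothesis $\rho(A-BK)<1$'' neatly sidesteps this, and the argument is unaffected either way up to replacing $K$ by $-K$.
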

	
	\

	\subsection{Differential privacy}\label{S33}
	Next, we give the differential privacy theorem for the MAS \eqref{mainagent} under the full-order observer \eqref{obser}.
	\begin{theorem}\label{DP1}
		Consider $b_i(k)=c_{i} p_{i}(k)$ with $c_{i}>0$ and $p_i(k)>0$. Let  the observer's initial values $\hat{x}_{y, i}(0)=\hat{x}_{y^{\prime}, i}(0)$ for adjacent  $y(k)$ and $y^{\prime}(k)$ with $m>0$ and  $h(k)>0$.
		If there exist functions $h(k)$ and $p_{i}(k)$, and matrices $K$ and $L$ such that
		\begin{equation}\label{hp}
			\epsilon =\max_{1\le i \le N}\left(\|L\|_1  m \sum_{k=1}^{\infty }\frac{\sum_{b=0}^{k-1}l_{i}^{k-b-1}h(b)}{c_i p_i(k) } \right)< \infty,
		\end{equation}
		then the  $\epsilon$-differential privacy is preserved, where $l_i=\left\|A-L C-d_i B K\right\|_1$.
	\end{theorem}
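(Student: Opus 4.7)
The plan is to reduce the $\epsilon$-bound to a likelihood-ratio computation for the Laplace noise. Fix any adjacent pair $(y,y')$ as in Definition \ref{adjacent}, with sensitive index $i_0$ and activation time $k_0$, and fix a measurable observation set $O$. Because the adversary only sees $\theta=\hat{x}+\eta$, realising the \emph{same} trajectory $\theta$ under $y$ and under $y'$ requires two distinct noise realisations $\eta,\eta'$ linked by $\eta(k)-\eta'(k)=-(\hat{x}(k)-\hat{x}'(k))$. Hence the whole argument reduces to controlling $\Delta\hat{x}(k):=\hat{x}(k)-\hat{x}'(k)$ while holding $\theta$ fixed, then summing the Laplace density ratios.

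First I would rewrite the observer in ``$\theta$-driven'' form. Substituting the controller \eqref{control} into the observer \eqref{obser} gives, for each agent,
\begin{equation*}
\hat{x}_i(k+1)=(A-LC-d_i BK)\hat{x}_i(k)+BK\sum_{j\in\mathcal{N}_i}a_{ij}\theta_j(k)+L y_i(k).
\end{equation*}
Because $\theta$ is treated as an exogenous input, this is a \emph{decoupled} linear recursion per agent, driven only by $y_i$, and its state matrix has $1$-norm exactly $l_i$. Since $y_i\equiv y'_i$ for $i\neq i_0$ and the initial conditions agree by hypothesis, it follows that $\Delta\hat{x}_i(k)\equiv 0$ for every $i\neq i_0$. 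For $i_0$ the difference satisfies the zero-initialised recursion
\begin{equation*}
\Delta\hat{x}_{i_0}(k+1)=(A-LC-d_{i_0}BK)\Delta\hat{x}_{i_0}(k)+L\bigl(y_{i_0}(k)-y'_{i_0}(k)\bigr),
\end{equation*}
whose closed-form solution together with the adjacency bound $\|y_{i_0}(b)-y'_{i_0}(b)\|_1\le m\,h(b-k_0)$ and submultiplicativity of $\|\cdot\|_1$ yields
\begin{equation*}
\|\Delta\hat{x}_{i_0}(k)\|_1\le \|L\|_1\,m\sum_{b=k_0}^{k-1} l_{i_0}^{\,k-b-1}h(b-k_0).
\end{equation*}

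Next I would apply the standard Laplace-mechanism ratio. Since the noise components are independent across agents and time, and only the $i_0$-component differs,
\begin{equation*}
\frac{\mathbb{P}(\eta;b)}{\mathbb{P}(\eta';b)}\le \exp\!\Bigl(\sum_{k\ge 1}\frac{\|\eta_{i_0}(k)-\eta'_{i_0}(k)\|_1}{b_{i_0}(k)}\Bigr)=\exp\!\Bigl(\sum_{k\ge 1}\frac{\|\Delta\hat{x}_{i_0}(k)\|_1}{c_{i_0}p_{i_0}(k)}\Bigr).
\end{equation*}
Substituting the bound on $\|\Delta\hat{x}_{i_0}(k)\|_1$ above, performing the shift $b\mapsto b-k_0$ (so that the summation index is translated into the form appearing in \eqref{hp}, which serves as the worst-case upper bound over the choice of $k_0$), and taking the maximum over $i_0$ produces the constant $\epsilon$ of the theorem. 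The conditional probability bound in the definition of $\epsilon$-differential privacy then follows by integrating this density ratio over the preimage $\varphi(\cdot,y)^{-1}(O)$, exactly as in the measure-theoretic argument of \cite{l1,22}.

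The main obstacle is the ``inversion'' step that legitimises treating $\theta$ as a free input. In the original closed loop, $u_i$ couples the whole network through $\theta_j$, $j\in\mathcal{N}_i$, which in turn depends on the noises $\eta_j$; it is not a priori obvious that the sensitivity of $\hat{x}_i$ to $y$ decouples across agents. The key realisation, already used implicitly in the construction of $\varphi$, is that once $\theta$ is conditioned upon, the neighbours' influence on agent $i$'s observer becomes a fixed forcing term, and the operator mapping $y_i$ to $\hat{x}_i$ is exactly the linear filter with state matrix $A-LC-d_i BK$. Everything after this decoupling, namely the discrete Gronwall-type bound on $\|\Delta\hat{x}_{i_0}(k)\|_1$ and the standard Laplace likelihood-ratio accounting, is routine.
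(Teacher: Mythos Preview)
Your proposal is correct and follows essentially the same approach as the paper's proof: both rewrite the observer in ``$\theta$-driven'' form to decouple the agents, observe that only agent $i_0$ has a nonzero state-estimate difference, bound $\|\Delta\hat{x}_{i_0}(k)\|_1$ via the closed-form solution of the zero-initialised linear recursion with transition matrix $A-LC-d_{i_0}BK$, and then sum the Laplace likelihood ratios. Your explicit discussion of why conditioning on $\theta$ legitimises treating the neighbours' contributions as a fixed forcing term is, if anything, a bit more careful than the paper's one-line justification.
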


	\begin{proof}
		It can be seen that
		$$
		\begin{aligned}
			P\{\varphi(\eta(k), y(k)) \in O\}=&P\{\hat{x}_{y, i}(k)+\eta_{i}(k) \in O\}\\
			=&P\{\eta_{i}(k) \in O-\hat{x}_{y,i}(k)\}.
		\end{aligned}
		$$
		Letting $ \beta  _{i}(k)=\hat{x}_{y,i}(k)-\hat{x}_{y^\prime, i}(k)$, we have
		$$
		\begin{aligned}
			P\{\varphi(\eta(k), y^{\prime}(k)) \in O\}&=P\{\hat{x}_{y^\prime, i}(k)+\eta_{i}(k) \in O\} \\
			&=P\{\hat{x}_{y, i}(k)- \beta  _{i}(k)+\eta_{i}(k) \in O\}\\
			&=P(\eta_{i}(k)- \beta  _{i}(k) \in O-\hat{x}_{y,i}(k)\}.
		\end{aligned}
		$$
		Thus, we obtain
		$$
		\begin{aligned}
			&\frac{\mathbb{P}\left\{\boldsymbol{\eta}(k) \in \Omega \mid \varphi (\boldsymbol{\eta}(k), y(k)) \in O\right\}}{ \mathbb{P}\left\{\boldsymbol{\eta}(k) \in \Omega \mid \varphi (\boldsymbol{\eta}(k), y^{\prime}(k)) \in O\right\}} \\
			&\quad =\prod_{i\in \mathcal{V}}\prod_{k=0}^{\infty}\frac{\int_{O-\hat{x}_{y,i}(k)}L(\eta_i(k),0,b_i(k))d\eta_i(k)}{\int_{O-\hat{x}_{y,i}(k)}L(\eta_i(k),- \beta  _i(k),b_i(k))d\eta_i(k)}   \\
			&\quad =\prod_{i\in \mathcal{V}}\prod_{k=0}^{\infty}e^{ -\frac{\|\eta_i(k) \|_1}{b_i(k)}+\frac{\| \eta_i(k)+ \beta  _i(k)\|_1}{b_i(k)}  }\\
			& \quad\leq \prod_{i\in \mathcal{V}}\prod_{k=0}^{\infty}e^{ \frac{\| \beta  _i(k) \|_1}{b_i(k)} }\\
			& \quad= e^{\sum_{i\in \mathcal{V}}\sum_{k=0}^{\infty}\frac{\| \beta  _i(k) \|_1}{b_i(k)}}.
		\end{aligned}
		$$
		
		It can be seen that
		$$
		\left\{\begin{aligned}
			\hat{x}_{y, i}(k+1)=&\left(A-L C-d_i B K\right)   \hat{x}_{y, i}(k)\\
			&+L   y_{i}(k)+B K   \sum_{j \in \mathcal{N}_i} a_{i j}   \theta_{j}(k), \\
			\hat{x}_{y^{\prime}, i}(k+1)=&(A-L C-d_i B K)   \hat{x}_{y^{\prime}, i}(k)\\
			&+L   y_{i}^{\prime}(k)+B K \sum_{j \in \mathcal{N}_i} a_{i j}   \theta_{j}^{\prime}(k).
		\end{aligned}\right.
		$$
		Noting $O=\left\{\theta_{i}(k), i\in \mathcal{V}\right\}$ and $O'=\left\{\theta'_{i}(k), i\in \mathcal{V}\right\}$ for $y$ and $y^{\prime}$ are the same, we have
		$$
		\begin{aligned}
			\hat{x}_{y, i}&(k+1)-\hat{x}_{y^{\prime}, i}(k+1)\\
			&=\left(A\hspace{-0.1cm}-\hspace{-0.1cm}L C\hspace{-0.1cm}-\hspace{-0.1cm}d_i B K\right)\left[\hat{x}_{y, i}(k)-\hat{x}_{y^{\prime}, i}(k)\right]\hspace{-0.1cm}+\hspace{-0.1cm}L\left[y_{i}(k)-y_{i}^{\prime}(k)\right].
		\end{aligned}
		$$
		Since $\hat{x}_{y, i}(0)=\hat{x}_{y^{\prime}, i}(0)$, it follows that
		$$
		\begin{aligned}
			\hat{x}_{y, i}(k+1)-&\hat{x}_{y^{\prime}, i}(k+1)\\
			&=\sum_{b=0}^{k}\left(A-L C-d_i B K\right)^{k-b}   L  \left[y_{i}(b)-y_{i}^{\prime}(b)\right]
		\end{aligned}
		$$
		and
		$$
		\begin{aligned}
			\hat{x}_{y,i}&(k)-\hat{x}_{y^{\prime}, i}(k)\\
			&=\sum_{b=0}^{k-1}\left(A-L C-d_i B K\right)^{k-b-1}   L  \left[y_{i}(b)-y_{i}^{\prime}(b)\right],k\ge 1.
		\end{aligned}
		$$
		For $i=i_0$, we have
		$$
		\begin{aligned}
			\|\hat{x}&_{y,i_0}(k)-\hat{x}_{y^{\prime},i_0}(k)\|_1 \\
			& \leq\|L\|_1 \sum_{b=0}^{k-1}\left\|A-L C-d_{i_0} B K\right\|_1^{k-b-1} \|y_{i_0}(b)-y_{i_0}^{\prime}(b)\|_1 \\
			& \leq\|L\|_1   \sum_{b=k_{0}}^{k-1}l_{i_0}^{k-b-1}   m h(b-k_0) \\
			& \leq\|L\|_1   m   \sum_{b=0}^{k-1}l_{i_0}^{k-b-1}    h(b) .\\
		\end{aligned}
		$$
		For $i\ne i_0$, we have $$\|\hat{x}_{y,i}(k)-\hat{x}_{y^{\prime},i}(k)\|_1=0.$$
		If there exist functions $h(k)$ and $p_{i}(k)$ and matrices $K$ and $L$ such that \eqref{hp} holds,  then we have
		\begin{equation}\label{pp}
			\begin{aligned}
				&\frac{\mathbb{P}\{\boldsymbol{\eta}(k) \in \Omega \mid \varphi (\boldsymbol{\eta}(k), y(k)) \in O\}}{ \mathbb{P}\left\{\boldsymbol{\eta}(k) \in \Omega \mid \varphi  (\boldsymbol{\eta}(k), y^{\prime}(k)) \in O\right\}}\\
				&\qquad \leq e^{\|L\|_1   m \sum_{k=1}^{\infty }\frac{\sum_{b=0}^{k-1}l_{i_0}^{k-b-1}h(b)}{c_{i_0} p_{i_0}(k)}} \\
				&\qquad \leq e^{\max_{1\le i \le N}\left(\|L\|_1  m \sum_{k=1}^{\infty }\frac{\sum_{b=0}^{k-1}l_{i}^{k-b-1}h(b)}{c_i p_i(k) } \right)} < \infty.
			\end{aligned}
		\end{equation}
		Letting $\epsilon =\max_{1\le i \le N}\left(\|L\|_1  m \sum_{k=1}^{\infty }\frac{\sum_{b=0}^{k-1}l_{i}^{k-b-1}h(b)}{c_i p_i(k) } \right)$, the  $\epsilon$-differential privacy is preserved. The proof is completed.
	\end{proof}
	
	\
	
	Theorem \ref{DP1} presents a complex integrated condition involving adjacency distance, noise design, and gain design. In fact, this theorem relaxes the restrictions on the noise variance and the definition of adjacent. Below, we derive several interesting corollaries. 
	\begin{corollary}\label{con2}
		Consider $b_i(k)=c_{i}\frac{1}{(k+1)^2}$, i.e., $p_i(k)=\frac{1}{(k+1)^2}$. Let the observer's initial values $\hat{x}_{y, i}(0)=\hat{x}_{y^{\prime}, i}(0)$ for adjacent  $y(k)$ and $y^{\prime}(k)$ with $m>0$,  $h(k)>0$, and
		\begin{equation}\label{hbb}
			\sum_{k=0}^{\infty}h(k)k^2 < \infty.
		\end{equation}
		If there exist matrices $K$ and $L$ such that $l_i=\left\|A-L C-d_i B K\right\|_1<1$, then the  $\epsilon$-differential privacy is preserved with $\epsilon=\max_{1\le i \le N}\epsilon_i$, where $\epsilon_i=\|L\|_1  m \sum_{b=0}^{\infty} h(b) \frac{\left[ (b + 2)^2 - (2b^2 + 6b + 3) l_i + (b + 1)^2 l_i^2 \right]}{c_{i}(1 - l_i)^3}$.
	\end{corollary}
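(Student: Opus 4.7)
The plan is to apply Theorem \ref{DP1} directly with the specialized noise parameter $p_i(k) = 1/(k+1)^2$ and show that the general expression for $\epsilon$ in \eqref{hp} reduces to the closed-form $\epsilon_i$ stated in the corollary. Since the hypothesis $l_i = \|A - LC - d_i BK\|_1 < 1$ already matches the summability requirement implicit in Theorem \ref{DP1}, the only real work is to evaluate the double sum
\[
\|L\|_1 \, m \sum_{k=1}^{\infty} \frac{\sum_{b=0}^{k-1} l_i^{k-b-1} h(b)}{c_i\, p_i(k)}
\]
in closed form and verify its finiteness under condition \eqref{hbb}.

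First, I would substitute $p_i(k) = 1/(k+1)^2$ and swap the order of summation, pulling $h(b)$ outside to get
\[
\sum_{k=1}^{\infty} (k+1)^2 \sum_{b=0}^{k-1} l_i^{k-b-1} h(b) \;=\; \sum_{b=0}^{\infty} h(b) \sum_{k=b+1}^{\infty} (k+1)^2 l_i^{\,k-b-1}.
\]
Then I would apply the change of variable $n = k-b-1$, which rewrites the inner sum as $\sum_{n=0}^{\infty} (n+b+2)^2 l_i^n$. This is the key step: it reduces the inner sum to a pure power series in $n$ whose coefficients are a polynomial of degree two in $n$ and $b$.

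Next, I would expand $(n+b+2)^2 = n^2 + 2(b+2)n + (b+2)^2$ and evaluate the three resulting series using the standard identities
\[
\sum_{n=0}^{\infty} l_i^n = \frac{1}{1-l_i}, \qquad \sum_{n=0}^{\infty} n\, l_i^n = \frac{l_i}{(1-l_i)^2}, \qquad \sum_{n=0}^{\infty} n^2\, l_i^n = \frac{l_i(1+l_i)}{(1-l_i)^3}.
\]
Combining these over the common denominator $(1-l_i)^3$ and simplifying the resulting quadratic polynomial in $l_i$ produces precisely the numerator $(b+2)^2 - (2b^2+6b+3)l_i + (b+1)^2 l_i^2$ appearing in the definition of $\epsilon_i$. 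This algebraic verification is routine but tedious; it is the place where $(1-l_i)^3$ naturally arises (from differentiating the geometric series twice).

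The main, and essentially the only nontrivial, obstacle is confirming that the resulting outer sum $\sum_{b=0}^{\infty} h(b)\bigl[(b+2)^2 - (2b^2+6b+3)l_i + (b+1)^2 l_i^2\bigr]$ is finite. Since each of the three $b$-dependent coefficients grows as $\mathcal{O}(b^2)$ and $l_i \in [0,1)$ is a fixed constant, the bracketed expression is dominated by a constant multiple of $b^2 + 1$. Hence the hypothesis \eqref{hbb}, namely $\sum_{k=0}^{\infty} h(k) k^2 < \infty$, guarantees absolute convergence and yields $\epsilon_i < \infty$ for each $i$. Therefore $\epsilon = \max_{1 \le i \le N} \epsilon_i < \infty$, so Theorem \ref{DP1} applies and delivers the claimed $\epsilon$-differential privacy.
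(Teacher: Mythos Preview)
Your proposal is correct and follows essentially the same approach as the paper: substitute $p_i(k)=1/(k+1)^2$ into the expression from Theorem~\ref{DP1}, swap the order of summation, and evaluate the inner geometric-type series to obtain the stated closed form, then invoke \eqref{hbb} for finiteness. The paper's proof is terser---it simply asserts the resulting identity for the double sum without writing out the change of variable or the power-series evaluations---so your version is a more detailed execution of the same argument.
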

	
	\begin{proof}
		Letting $p_i(k)=\frac{1}{(k+1)^2}$, we have
		\begin{equation*}
			\begin{aligned}
				\epsilon_i= e^{\|L\|_1 m {c_{i}}^{-1} \sum_{k=1}^{\infty }(k+1)^2\sum_{b=0}^{k-1}l_{i}^{k-b-1}h(b)}.
			\end{aligned}
		\end{equation*}
		
		It can be seen that
		$$
		\begin{aligned}
			\sum_{k=1}^{\infty }&(k+1)^2\sum_{b=0}^{k-1}l_{i}^{k-b-1}h(b)\\
			&=\sum_{b=0}^{\infty} h(b) \frac{\left[ (b + 2)^2 - (2b^2 + 6b + 3) l_{i} + (b + 1)^2 l_{i}^2 \right]}{(1 - l_{i})^3},
		\end{aligned}
		$$
		which is  finite under the condition \eqref{hbb}. Letting $$\epsilon_i=\|L\|_1  m \hspace{-0.1cm}  \sum_{b=0}^{\infty} \hspace{-0.1cm}h(b) \frac{\left[ (b + 2)^2 \hspace{-0.1cm}-\hspace{-0.1cm} (2b^2 + 6b + 3) l_i \hspace{-0.1cm}+\hspace{-0.1cm} (b + 1)^2 l_i^2 \right]}{c_{i}(1 - l_i)^3},$$
		the $\epsilon$-differential privacy is preserved  with  $\epsilon=\max_{1\le i \le N}\epsilon_i$. The proof is completed.
	\end{proof}
	
	\
	
	\begin{corollary}\label{con3}
		Consider $b_i(k)=c_{i} g_{i}^{k}$, i.e., $p_i(k)=g_{i}^{k}$, $0<g_i<1$. Let the observer's initial values $\hat{x}_{y, i}(0)=\hat{x}_{y^{\prime}, i}(0)$ for adjacent  $y(k)$ and $y^{\prime}(k)$ with $m>0$ and  $h(k)>0$. Assume that there exist constant $\mathcal{C}$ and real number $\alpha < g_i$ such that
		\begin{equation}\label{ca}
			h(k) \leq \mathcal{C} \alpha^k
		\end{equation}
		for all sufficiently large $k$. If there exist matrices $K$ and $L$ such that $l_i=\left\|A-L C-d_i B K\right\|_1< g_i$, then the  $\epsilon$-differential privacy is preserved with $\epsilon=\max_{1\le i \le N}\epsilon_i$, where $\epsilon_i=\frac{\|L\|_1  m  }{c_{i}(g_{i} - l_i)} \sum_{b=0}^{\infty} h(b) \left( \frac{1}{g_{i}} \right)^b $.
	\end{corollary}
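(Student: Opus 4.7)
The plan is to directly apply Theorem~\ref{DP1} by specializing $p_i(k)=g_i^k$ and evaluating the resulting double sum in closed form. Since all structural hypotheses of Theorem~\ref{DP1} (adjacency of $y$ and $y'$, identical observer initial conditions, and the existence of the gain matrices) are directly inherited, the work reduces to showing that
\[
S_i := \sum_{k=1}^{\infty} \frac{\sum_{b=0}^{k-1} l_i^{k-b-1} h(b)}{c_i\, g_i^{k}} \;=\; \frac{1}{c_i(g_i-l_i)}\sum_{b=0}^{\infty} h(b)\, g_i^{-b} \;<\; \infty,
\]
after which multiplying by $\|L\|_1 m$ produces the stated $\epsilon_i$, and taking the maximum over $i$ yields the global $\epsilon$.

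My first step will be to interchange the order of summation in $S_i$, which is legal because every summand is nonnegative. Reindexing with $b$ as the outer variable (ranging over $b\ge 0$) and substituting $j=k-b-1\ge 0$ for the inner variable (so that $k=j+b+1$), the inner sum collapses to a geometric series $\sum_{j\ge 0}(l_i/g_i)^j$. The hypothesis $l_i<g_i$ forces $l_i/g_i<1$, so this series equals $g_i/(g_i-l_i)$; absorbing the remaining $g_i^{-(b+1)}$ factor yields exactly the closed form above.

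The remaining task is to verify that the outer series $\sum_{b\ge 0} h(b)\, g_i^{-b}$ is finite. This is precisely where the decay assumption $h(k)\le \mathcal{C}\alpha^k$ with $\alpha<g_i$ (for all sufficiently large $k$) enters: each tail term is bounded by $\mathcal{C}(\alpha/g_i)^b$, and because $\alpha/g_i<1$ the geometric tail is summable, while the finitely many initial terms before the bound takes effect contribute only a finite amount. Therefore $S_i<\infty$, the hypothesis \eqref{hp} of Theorem~\ref{DP1} is satisfied with the stated value, and $\epsilon$-differential privacy follows.

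The main obstacle is mild and essentially bookkeeping: one must keep the index manipulation straight across the swap-and-substitute, and note that the two smallness conditions $l_i<g_i$ and $\alpha<g_i$ play genuinely distinct roles --- the former underwrites convergence of the inner geometric sum in $k$ for each fixed $b$, whereas the latter underwrites summability of the outer $b$-series. Neither implies the other, so both hypotheses are essential and appear as two independent requirements in the corollary.
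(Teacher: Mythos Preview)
Your proposal is correct and follows essentially the same approach as the paper's own proof: both specialize $p_i(k)=g_i^k$ in Theorem~\ref{DP1}, swap the order of summation, and evaluate the inner geometric series using $l_i<g_i$ to obtain the closed form $\frac{1}{g_i-l_i}\sum_{b\ge 0}h(b)g_i^{-b}$, then invoke the decay hypothesis $h(k)\le\mathcal{C}\alpha^k$ with $\alpha<g_i$ for convergence of the outer series. Your explicit justification of the interchange via nonnegativity and your remark on the distinct roles of the two smallness conditions $l_i<g_i$ versus $\alpha<g_i$ are slight refinements over the paper's presentation, but the argument is the same.
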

	
	\begin{proof}
		Letting $p_i(k)=g_{i}^{k}$, we have
		\begin{equation}\label{}
			\begin{aligned}
				\epsilon =\max_{1\le i \le N}\left (\|L\|_1 m {c_{i}}^{-1} \sum_{k=1}^{\infty }g_{i}^{-k}\sum_{b=0}^{k-1}l_{i}^{k-b-1}h(b)\right).
			\end{aligned}
		\end{equation}
		By exchanging the order of summation, we have
		$$
		\begin{aligned}
			\sum_{k=1}^{\infty }g_{i}^{-k}\sum_{b=0}^{k-1}l_{i}^{k-b-1}h(b)&= \frac{1}{l_{i}} \sum_{b=0}^{\infty} \frac{h(b)}{l_{i}^b} \sum_{k=b+1}^{\infty} \left( \frac{l_{i}}{g_{i}} \right)^k \\
			&=\frac{1}{g_{i} - l_{i}} \sum_{b=0}^{\infty} h(b) \left( \frac{1}{g_{i}} \right)^b.
		\end{aligned}
		$$
		Therefore, the convergence of the original series is equivalent to that of the series \(\sum_{b=0}^{\infty} h(b) \left( \frac{1}{g_{i}} \right)^b\). According to the theory of power series convergence, the condition for this series to converge is that the exponential growth rate of \(h(b)\) does not exceed \(g_{i}\), i.e.,
		\[
		\limsup_{b \to \infty} h(b)^{1/b} < g_{i}.
		\]
		This implies that there exist constant \(\mathcal{C} > 0\) and   \(\alpha <g_i\) such that for all sufficiently large \(k\), \(h(k) \leq \mathcal{C} \alpha^k\).
		Letting  $\epsilon_i=\frac{\|L\|_1  m  }{c_{i}(g_{i} - l_i)} \sum_{b=0}^{\infty} h(b) \left( \frac{1}{g_{i}} \right)^b $, the  $\epsilon$-differential privacy is preserved with  $\epsilon=\max_{1\le i \le N}\epsilon_i$. The proof is completed.
	\end{proof}
	
	\
	\begin{remark}
		Compared to \cite{l1}, the two conclusions presented above relax the restrictions on \( h(k) \), eliminating the requirement for \( h(k) \) to decay exponentially. In fact, as shown in \eqref{hbb} of Conclusion \ref{con2}, we can set the magnitude of \( h(k) \) at any time step $k$ as long as the overall sum converges. Furthermore, \eqref{ca} in Conclusion \ref{con3} shows that we can arbitrarily set the magnitude of \( h(k) \) for finitely many terms, provided that for sufficiently large \( k \), the decay rate of \( h(k) \) is faster than \( g_i^k \).
	\end{remark}
	
	\subsection{Differentially private consensus}
	
	
	\begin{theorem}\label{DPC}
		Consider $b_i(k)=c_{i} p_{i}(k)$ with $c_{i}>0$ and $p_i(k)>0$. Let  the observer's initial values $\hat{x}_{y, i}(0)=\hat{x}_{y^{\prime}, i}(0)$ for adjacent  $y(k)$ and $y^{\prime}(k)$ with $m>0$ and  $h(k)>0$.
		Assume that there exist functions $h(k)$ and $p_{i}(k)$, and matrices $K$ and $L$ such that
		\begin{equation}\label{17}
			\begin{cases}
				\rho (A-L C)<1,\\
				\rho (I_{N-1} \otimes A-\Lambda \otimes B K)< 1,\\
				\sum_{k=0}^{\infty}p_{i}(k) < \infty, \\
				\epsilon=\max\limits_{1\le i \le N}\left(\|L\|_1  m \sum_{k=1}^{\infty }\frac{\sum_{b=0}^{k-1}l_i^{k-b-1}h(b)}{c_i p_i(k) }\right) < \infty,
			\end{cases}
		\end{equation}
		where $$\Lambda=\operatorname{diag}\left\{\lambda_{2}, \cdots, \lambda_{N}\right\}, \quad l_i=\left\|A-L C-d_i B K\right\|_1.$$
		Then, mean square consensus, almost sure consensus, and the preservation of $\epsilon$-differential privacy are simultaneously achieved.
	\end{theorem}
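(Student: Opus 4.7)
The plan is to recognize that Theorem~\ref{DPC} is a direct consolidation of the results already proved in Theorems~\ref{consen1} and \ref{DP1}, since the four hypotheses in \eqref{17} partition cleanly into the consensus assumptions and the privacy assumption, and both earlier theorems are stated with respect to the \emph{same} observer gain $L$, feedback gain $K$, and noise parameters $b_i(k)=c_i p_i(k)$. The proof therefore reduces to verifying that a common triple $(L,K,\{p_i(k)\})$ simultaneously meets both parameter sets, and then invoking each theorem to harvest its conclusion.

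First I would isolate the consensus side. The first three conditions of \eqref{17}, namely $\rho(A-LC)<1$, $\rho(I_{N-1}\otimes A-\Lambda\otimes BK)<1$, and $\sum_{k=0}^{\infty}p_i(k)<\infty$, are precisely the hypotheses of Theorem~\ref{consen1}. Applying that theorem directly yields mean square consensus $\lim_{k\to\infty} E\|x_j(k)-x_i(k)\|^2=\lim_{k\to\infty} E\|x_i(k)-\hat{x}_i(k)\|^2=0$ as well as the corresponding almost sure statements. No additional estimation is required beyond quoting the theorem, because the chain $\xi,\psi\to \delta,\varrho\to x,e$ constructed in Section~\ref{Sfull} is identical in both settings.

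Next I would handle the privacy side. The fourth condition in \eqref{17}, together with the adjacency hypothesis $\hat{x}_{y,i}(0)=\hat{x}_{y',i}(0)$ and the adjacency relation on $y(k), y'(k)$ with bound $m,h(k)$, is precisely the hypothesis list of Theorem~\ref{DP1}. Invoking that theorem gives the $\epsilon$-differential privacy guarantee with the same $\epsilon$ appearing in \eqref{17}. Since the Laplace noise sequence $\{\eta_i(k)\}$, the observer \eqref{obser}, and the controller \eqref{control} are the exact same objects used in both theorems, no re-derivation of the trajectory map $\varphi$ or the bound on $\|\hat{x}_{y,i}(k)-\hat{x}_{y',i}(k)\|_1$ is needed.

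The only thing that could in principle be an obstacle is ensuring that the conditions are mutually consistent, i.e.\ that a single choice of $(K,L,\{c_i\},\{p_i\},\{h\})$ can satisfy the spectral conditions, the summability of $p_i$, and the finiteness of $\epsilon$ at once. This is not a difficulty within the proof itself, because \eqref{17} is an \emph{assumption} on the existence of such choices; the realizability is demonstrated separately by the constructive corollaries (e.g., Corollaries~\ref{con2} and \ref{con3}) where $p_i(k)=1/(k+1)^2$ or $p_i(k)=g_i^k$ with $l_i<g_i$ are shown to meet all four requirements simultaneously. Therefore the proof concludes by combining the two quoted conclusions: the stochastic consensus from Theorem~\ref{consen1} and the $\epsilon$-differential privacy from Theorem~\ref{DP1}, holding jointly under the compound hypothesis \eqref{17}.
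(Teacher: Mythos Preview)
Your proposal is correct and matches the paper's approach: the paper presents Theorem~\ref{DPC} without an explicit proof, treating it as the immediate consolidation of Theorems~\ref{consen1} and~\ref{DP1} under the compound hypothesis~\eqref{17}. Your identification that the first three conditions feed Theorem~\ref{consen1} and the fourth (with the adjacency hypothesis) feeds Theorem~\ref{DP1} is exactly the intended argument.
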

	
	\
	
	Then, we have the following corollary.
	
	\begin{corollary}\label{CDPP}
		Let  the observer's initial values $\hat{x}_{y, i}(0)=\hat{x}_{y^{\prime}, i}(0)$ for adjacent  $y(k)$ and $y^{\prime}(k)$ with $m>0$ and $h(k)=\alpha^k$,  $\alpha \in (0,1)$. Consider $b_i(k)=c_{i} g_{i}^{k}$, i.e., $p_i(k)=g_{i}^{k}$, $\alpha  <g_{i}<1$. Assume that there exist two matrices $K$ and $L$ such that $\rho(A-LC)<1$, $\rho (I_{N-1} \otimes A-\Lambda \otimes B K)< 1$, and $l_i=\|A-LC-d_iBK\|_1\in(\alpha,g_i)$  with $\Lambda=\operatorname{diag}\left\{\lambda_{2}, \cdots, \lambda_{N}\right\}$. 
		Then, mean square consensus, almost sure consensus, and the preservation of $\epsilon$-differential privacy are simultaneously achieved  with  $\epsilon=\max_{1\le i \le N}\epsilon_i$, where
		\begin{equation}\label{epi}
			\epsilon_i= \frac{m   g_{i}  \|L\|_1}{c_{i}\left(g_{i}-l_i\right)\left(g_{i}-\alpha\right)}.
		\end{equation}
	\end{corollary}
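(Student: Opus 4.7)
The plan is to apply Theorem~\ref{DPC} directly, so I would verify its four conditions in~\eqref{17} under the particular choices $h(k)=\alpha^k$ and $p_i(k)=g_i^k$, and then evaluate the resulting series in closed form to obtain the expression~\eqref{epi}. The first two spectral conditions $\rho(A-LC)<1$ and $\rho(I_{N-1}\otimes A-\Lambda\otimes BK)<1$ are assumed outright in the corollary, and the summability requirement $\sum_{k=0}^\infty p_i(k)=\sum_{k=0}^\infty g_i^k=1/(1-g_i)$ is immediate from $g_i<1$. Hence, mean-square and almost-sure consensus follow at once from Theorem~\ref{consen1} (equivalently Corollary~\ref{con1}), and the only remaining task is to compute $\epsilon$ and show that it matches~\eqref{epi}.

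The core step is to evaluate
\[
S_i \;:=\; \sum_{k=1}^{\infty}\frac{\sum_{b=0}^{k-1} l_i^{k-b-1}\alpha^b}{c_i\, g_i^k}.
\]
I would swap the order of summation (justified by nonnegativity of all terms), writing
\[
c_i S_i \;=\; \sum_{b=0}^{\infty}\alpha^b\sum_{k=b+1}^{\infty} g_i^{-k}\,l_i^{k-b-1},
\]
and then substitute $k'=k-b-1$ to decouple the two indices, yielding a product of two geometric series with ratios $l_i/g_i$ and $\alpha/g_i$. The hypotheses $l_i<g_i$ and $\alpha<g_i$ are exactly what is needed to make both ratios strictly less than one, giving
\[
\sum_{k'=0}^{\infty}(l_i/g_i)^{k'}=\frac{g_i}{g_i-l_i},\qquad
\sum_{b=0}^{\infty}(\alpha/g_i)^b=\frac{g_i}{g_i-\alpha}.
\]
Combining these with the prefactor $1/g_i$ (from $g_i^{-(b+1)}$) collapses $c_i S_i$ to $g_i/[(g_i-l_i)(g_i-\alpha)]$. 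Multiplying by $\|L\|_1\,m$ then reproduces the $\epsilon_i$ in~\eqref{epi}.

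Since each $\epsilon_i$ is finite under the stated assumptions, taking the maximum over $i$ verifies the fourth condition of~\eqref{17}, and Theorem~\ref{DPC} delivers the simultaneous mean-square consensus, almost-sure consensus, and $\epsilon$-differential privacy. No serious obstacle is anticipated: the argument is essentially a concrete instantiation of Theorem~\ref{DPC} together with a geometric-series computation. The only delicate point is confirming that the strict inequality $l_i\in(\alpha,g_i)$ is not merely sufficient for convergence of each series but also what makes both denominators $g_i-l_i$ and $g_i-\alpha$ strictly positive in the final closed-form expression---if either hypothesis is weakened to an equality, the formula~\eqref{epi} becomes singular, which explains why the adjacency decay rate $\alpha$ must be strictly slower than the noise decay rate $g_i$, and why $l_i$ must lie strictly between them.
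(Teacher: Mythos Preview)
Your proposal is correct and follows essentially the same approach as the paper: the paper's own proof consists of a single sentence instructing the reader to substitute $h(k)=\alpha^k$ and $p_i(k)=g_i^k$ into~\eqref{17}, and your plan carries out precisely that substitution, together with the geometric-series evaluation (which the paper has already performed in the proof of Corollary~\ref{con3}). One small remark: the positivity of the denominator $g_i-\alpha$ comes from the standing assumption $\alpha<g_i$ rather than from the lower bound $l_i>\alpha$; the latter hypothesis is not actually used in the computation and appears to be included in the statement only for consistency with later results (cf.\ Remark~\ref{app} and Theorem~\ref{CDP1}).
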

	\
	
	\begin{proof}
		Substituting  $h(k)=\alpha^k$ and $p_i(k)=g_{i}^{k}$ into equation \eqref{17}, we can obtain the above result.
	\end{proof}
	
	\

	\begin{remark}
		Corollary \ref{CDPP} tells us that it is possible to simultaneously achieve both consensus and differential privacy for the linear system using full-order state observer. This is different from previous work\cite{81}, which showed that achieving average consensus and differential privacy simultaneously is impossible for some first-order systems.	 { Moreover, compared with the conventional exponentially decaying noise adopted in much of the existing literature [20-24], our framework allows the noise  variance to decay at a {polynomial rate}, e.g., $b_i(k)=c_i/(k+1)^2$, while still guaranteeing $\epsilon$-differential privacy.}
	\end{remark}
	
	{
		\begin{remark}
			In the expression of the privacy parameter $\epsilon$, $m$ defines the maximum difference bound between adjacent datasets. A larger value of $m$ implies a more significant potential impact of data variation on the system output, thus requiring stronger noise to mask it. $\alpha$ controls the decay rate of the data difference itself, and a larger $\alpha$ similarly necessitates more persistent noise coverage. $\|L\|_1$, the 1-norm of the observer gain, reflects the extent to which the system amplifies output differences-the higher the gain, the more noise is required to compensate. $c_i$ represents the initial intensity of the noise, directly determining the initial level of privacy protection. The two key safety margins, $(g_i - l_i)$ and $(g_i - \alpha)$, ensure that the decay rate of the introduced noise is slower than the decay rate $l_i$ of the internal error dynamics and the decay rate $\alpha$ of the external data differences, respectively.   
	\end{remark}}
	
	{	\begin{remark}\label{app}
			Under the condition $l_i>\alpha$, we obtain the simplified approximate expression
			\[
			\epsilon \le \max_{1\le i \le N}\frac{m g_i |L|_1}{c_i (g_i-l_i)^2}.
			\]
			Due to the presence of an observer and the general noise design, this bound depends on more parameters. However, its fundamental structure remains consistent with those in [20] and [30]. This bound also provides clear guidelines for parameter selection: $\epsilon$ decreases with smaller $m$ and $\alpha$, and with larger $c_i$ and $g_i$. In particular, $\epsilon$ can be made arbitrarily small by increasing $c_i$.
		\end{remark}
	}
	
	{\begin{remark}
			When applying Corollary \ref{CDPP}, practitioners should first design the observer gain $L$ and controller gain $K$ such that $\rho(A-LC)<1$ and $\rho(I_{N-1} \otimes A-\Lambda \otimes B K)< 1$ are satisfied, based on the system matrix $A$ and the network eigenvalues matrix $\Lambda$.  The parameter $g_i$ should be chosen within $(\alpha, 1)$ and should also satisfy $g_i > l_i$, where $l_i = \|A - LC - d_i B K\|_1$. To achieve a desired level of privacy (a specific $\epsilon$), a trade-off exists: selecting a $g_i$ closer to $1$ or a smaller $c_i$ will result in stronger privacy protection (a smaller $\epsilon_i$) but will simultaneously slow down the convergence rate of the consensus algorithm. Therefore, $c_i$ and $g_i$ should be co-optimized to meet the application's required privacy budget $\epsilon$ while maintaining an acceptable consensus performance.
	\end{remark}}
	
	\subsection{ {$\epsilon^\star$-differentially private consensus}}
	{	In Corollary \ref{CDPP}, we established the conditions for the system to simultaneously achieve consensus and $\epsilon$-differential privacy, and provided the computational expression for $\epsilon$. If we want to specify a desired $\epsilon^\star$ value in advance, how should we configure the system parameters to achieve the intended level of privacy protection? This section is dedicated to addressing this question.}
	{	
		\begin{theorem}\label{CDP1}
			Let  the observer's initial values $\hat{x}_{y, i}(0)=\hat{x}_{y^{\prime}, i}(0)$ for adjacent $y(k)$ and $y^{\prime}(k)$ with $m>0$ and $h(k)=\alpha^k$,  $\alpha \in (0,1)$.
			Assume that there exist two matrices $K$ and $L$ such that $\rho(A-LC)<1$, $\rho (I_{N-1} \otimes A-\Lambda \otimes B K)< 1$, and $l_i=\|A-LC-d_iBK\|_1\in(\alpha,1)$  with $\Lambda=\operatorname{diag}\left\{\lambda_{2}, \cdots, \lambda_{N}\right\}$.  Consider $b_i(k)=c_{i} g_{i}^{k}$, i.e., $p_i(k)=g_{i}^{k}$, $g_{i}\in (l_i,1)$. 
			Then, for any desired privacy level $\epsilon^\star$, there exist   $g_{i}$ such that mean square consensus, almost sure consensus, and the preservation of $\epsilon^\star$-differential privacy are simultaneously achieved if the following condition holds:
			\begin{equation}\label{mL1}
				m \|L\|_1 < \epsilon^\star c_i (1 - \alpha)(1 - l_i).
			\end{equation}
		\end{theorem}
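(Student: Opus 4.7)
The plan is to leverage Corollary \ref{CDPP} in the most direct way: it already provides the explicit formula $\epsilon_i = \frac{m g_i \|L\|_1}{c_i(g_i-l_i)(g_i-\alpha)}$ under identical hypotheses on $K$, $L$, $m$, $\alpha$, $c_i$, and on the observer initial values. Since the stabilizing gains $K,L$ (hence $l_i \in (\alpha,1)$) and the initial-noise magnitudes $c_i$ are already fixed, the only remaining design freedom is the decay parameter $g_i \in (l_i,1)$. So the task reduces to showing that, under the sufficient condition \eqref{mL1}, one can choose $g_i$ in this interval making $\epsilon_i \le \epsilon^\star$ simultaneously for every $i$.

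First, for each $i$ I would regard $\epsilon_i$ as a continuous function
\[
f_i(g) := \frac{m g \|L\|_1}{c_i(g-l_i)(g-\alpha)},\qquad g\in(l_i,1).
\]
Because $l_i > \alpha$, the factor $g-l_i$ vanishes as $g\to l_i^+$ while the numerator and the other denominator factor stay bounded away from zero, so $f_i(g)\to +\infty$. On the other end,
\[
\lim_{g\to 1^-} f_i(g) = \frac{m\|L\|_1}{c_i(1-l_i)(1-\alpha)}.
\]
The hypothesis \eqref{mL1} says exactly that this limit value is strictly smaller than $\epsilon^\star$. By continuity of $f_i$ on $(l_i,1)$, there then exists $g_i^\star\in(l_i,1)$ (indeed any $g_i^\star$ sufficiently close to $1$) with $f_i(g_i^\star)\le \epsilon^\star$. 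The choices across agents are independent, so fixing such a $g_i^\star$ per agent is unproblematic.

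Having fixed these $g_i^\star$, I would simply verify that all hypotheses of Corollary \ref{CDPP} still hold with the noise parameters $b_i(k)=c_i(g_i^\star)^k$: the spectral conditions on $K,L$ and the bracket $l_i\in(\alpha,1)$ are inherited from the statement of this theorem, $g_i^\star\in(l_i,1)\subset(\alpha,1)$, and $\sum_k (g_i^\star)^k < \infty$ because $g_i^\star<1$. Corollary \ref{CDPP} then delivers mean-square and almost-sure consensus together with $\epsilon$-differential privacy for $\epsilon=\max_i f_i(g_i^\star)\le \epsilon^\star$, which is exactly $\epsilon^\star$-differential privacy.

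The main obstacle is essentially conceptual rather than computational: one must recognize that, with the observer/controller design decoupled from the noise design by the separation-style conditions in \eqref{17}, the privacy budget is controlled by a single scalar function $f_i(g)$ whose infimum over the feasible interval is dictated by its boundary value at $g=1$. A minor technical point is ensuring $g_i^\star>l_i$, but since $l_i<1$ this is automatic once $g_i^\star$ is taken close enough to $1$. No monotonicity of $f_i$ is needed, only continuity and the limit computation, so the argument is short and relies on Corollary \ref{CDPP} for all the heavy lifting.
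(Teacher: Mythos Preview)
Your proposal is correct and follows essentially the same idea as the paper: both arguments reduce the question to showing, via continuity and an Intermediate Value argument on $(l_i,1)$, that the boundary value of $\epsilon_i$ at $g_i\to 1^-$ lies below $\epsilon^\star$ precisely when \eqref{mL1} holds. The only cosmetic difference is that the paper first clears denominators to rewrite $\epsilon_i=\epsilon^\star$ as a quadratic $f(g_i)=0$ and then checks the signs $f(l_i)<0$, $f(1)>0$, whereas you work directly with the rational function $f_i(g)$; the paper also goes on to argue necessity of \eqref{mL1} (no root in $(l_i,1)$ otherwise), which the theorem statement does not actually require and which you correctly omit.
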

	}

\begin{proof}
{	Rewriting the original equation \eqref{epi} as a quadratic in \(g_i\):
		\[
		\epsilon^\star c_i \, g_i^2 - \big[ \epsilon^\star c_i (\alpha + l_i) + m \|L\|_1 \big] g_i + \epsilon^\star c_i \alpha l_i = 0.
		\]
		Then, the differentially private consensus problem is transformed into finding a solution \( g_i \in (l_i, 1) \) satisfying \( f(g_i) = 0 \).
		Define the quadratic function
		\[
		f(x) = \epsilon^\star c_i \, x^2 - \big[ \epsilon^\star c_i (\alpha + l_i) + m \|L\|_1 \big] x + \epsilon^\star c_i \alpha l_i.
		\]
		Since \(\epsilon^\star c_i > 0\), the graph of \(f(x)\) opens upward, and \(f(0) = \epsilon^\star c_i \alpha l_i > 0\).
		Evaluating the function at critical boundary points, we first examine the lower endpoint of the interval and we obtain:
		$$
		\begin{aligned}
			f(l_i) &= \epsilon^\star c_i l_i^2 - \left[ \epsilon^\star c_i (\alpha + l_i) + m \|L\|_1 \right] l_i + \epsilon^\star c_i \alpha l_i\\
			&= -m \|L\|_1 l_i < 0.
		\end{aligned}
		$$
		At the upper endpoint \(g = 1\), we have:
		\[
		f(1) = \epsilon^\star c_i (1 - \alpha)(1 - l_i) - m \|L\|_1.
		\]}
	
	{	We now demonstrate that the necessary and sufficient condition for the existence of \(g_i \in (l_i, 1)\) is \(m \|L\|_1 < \epsilon^\star c_i (1 - \alpha)(1 - l_i)\). For sufficiency, when this inequality holds, \(f(1) > 0\). Since \(f(l_i) < 0\) and \(f\) is continuous, the Intermediate Value Theorem ensures at least one real root in the open interval. To examine uniqueness, consider the derivative \(f'(x) = 2\epsilon^\star c_i x - \left[ \epsilon^\star c_i (\alpha + l_i) + m \|L\|_1 \right]\), whose zero occurs at \(x_0 = \frac{\epsilon^\star c_i (\alpha + l_i) + m \|L\|_1}{2\epsilon^\star c_i}\). If \(x_0 \leq l_i\), then \(f'(x) \geq 0\) on the interval, guaranteeing monotonic increase and a unique root. If \(x_0 > l_i\), then the combination of negative lower boundary value and upward-opening parabolic shape still yields a unique root. For necessity, if \(m \|L\|_1 \geq \epsilon^\star c_i (1 - \alpha)(1 - l_i)\), then \(f(1) \leq 0\). Together with \(f(l_i) < 0\) and the upward-opening nature, this implies \(f(x) < 0\) throughout the interval when \(f(1) < 0\), or vanishing only at the endpoint \(x = 1\) when \(f(1) = 0\). Since the endpoint is excluded, no valid solution exists in the open interval.}
	
	{		In conclusion, the necessary and sufficient condition for the existence of a unique \(g_i \in (l_i, 1)\) is:
		\[
		m \|L\|_1 < \epsilon^\star c_i (1 - \alpha)(1 - l_i).
		\]
		This completes the proof.}
\end{proof}

\
	
	{	
		\begin{remark}
			\label{rmk:design_freedom}
			The parameters $m$ and $c_i$ in condition (\ref{mL1}) offer complementary design freedom to fulfill the privacy and consensus requirements.
			Crucially, for any predefined privacy level $\epsilon^\star > 0$, one can always ensure the condition $m \|L\|_1 < \epsilon^\star c_i (1 - \alpha)(1 - l_i)$ holds for all agents $i$ by either:
			\begin{itemize}
				\item Employing a noise sequence with a sufficiently large initial amplitude $c_i$, or
				\item For a given noise sequence (fixed $c_i$), ensuring the initial deviation $m$ between adjacent datasets is sufficiently small.
			\end{itemize}
			This flexibility provides a clear design pathway: by appropriately scaling the initial noise intensity $c_i$ relative to the system's inherent sensitivity $m$, one can simultaneously achieve the desired $\epsilon^\star$-differential privacy guarantee while maintaining consensus.
	\end{remark}}
	{	\begin{remark}
			For the proposed observer-based differentially private consensus algorithm, we have analyzed the computational complexity of each component. For each agent, one iteration mainly consists of the observer update with complexity $O(n^{2})$, the distributed control computation with complexity $O(d_i n)$, and the generation of the Laplace noise vector with complexity $O(n)$. Therefore, the overall per-agent computational complexity is $O(n^{2})$. Since all computations and communications are performed locally, the algorithm exhibits good scalability. The total computational burden of the entire network is $O(N n^{2})$, which does not increase significantly with the number of agents $N$.
	\end{remark}}

	\section{ Differentially private consensus based on the reduced-order observer}\label{Sreduce}
	
	Compared with the full-order observer, the reduced-order observer can reduce the dimension of the observer system and simplify the computation. Without loss of generality,  the original system \eqref{mainagent} is assumed to be the following form
	\begin{equation}\label{resys1}
		\begin{aligned}
			& \left\{\begin{aligned}
				\left[\begin{array}{l}
					\overline{x}_{1 i}(k+1)\\
					\overline{x}_{2 i}(k+1)
				\end{array}\right] =&\left[\begin{array}{ll}
					\overline{A}_{11} &  \overline{A}_{12}\\
					\overline{A}_{21} & \overline{A}_{22}
				\end{array}\right]\left[\begin{array}{l}
					\overline{x}_{1 i}(k) \\
					\overline{x}_{2 i}(k)
				\end{array}\right]\\
				& +\left[\begin{array}{l}
					\overline{B}_1 \\
					\overline{B}_2
				\end{array}\right] u_i(k), \\
				y_i(k) =& \left[\begin{array}{ll}
					0 & I_q
				\end{array}\right]\left[\begin{array}{l}
					\overline{x}_{1i}(k) \\
					\overline{x}_{2i}(k)
				\end{array}\right],
			\end{aligned}\right.
		\end{aligned}
	\end{equation}
	where $\overline{x}_{1i}(k) \in \mathbb{R}^{n-q}$, $\overline{x}_{2i}(k) \in \mathbb{R}^q$, $\overline{A}_{11}\in \mathbb{R}^{(n-q) \times (n-q)}$, $\overline{A}_{12}\in \mathbb{R}^{(n-q) \times q}$, $\overline{A}_{21}\in \mathbb{R}^{q \times (n-q)}$, $\overline{A}_{22}\in \mathbb{R}^{q \times q}$, $\overline{B}_{1}\in \mathbb{R}^{(n-q) \times r}$, and $\overline{B}_{2}\in \mathbb{R}^{q \times r}$. This is equivalent to
	\begin{equation}\label{resys2}
		\begin{aligned}
			& \left\{\begin{array}{l}
				\overline{x}_{1 i}(k+1)=\overline{A}_{11} \overline{x}_{1 i}(k)+\overline{A}_{12} \overline{x}_{2 i}(k)+\overline{B}_1 u_i(k) \\
				\overline{x}_{2 i}(k+1)=\overline{A}_{21} \overline{x}_{1 i}(k)+\overline{A}_{22} \overline{x}_{2 i}(k)+\overline{B}_2 u_i(k) \\
				y_i(k)=\overline{x}_{2i}(k).
			\end{array}\right.
		\end{aligned}
	\end{equation}
	
	Letting $\overline{u}(k)=\overline{A}_{12} \overline{x}_{2 i}(k)+\overline{B}_1 u_i(k)$ and $\overline{y}(k)=\overline{A}_{21} \overline{x}_{1i}(k)=\overline{x}_{2 i}(k+1)-\overline{A}_{22} \overline{x}_{2 i}(k)-\overline{B}_2 u_i(k)$, we have
	$$
	\left\{\begin{array}{l}
		\overline{x}_{1 i}(k+1)=\overline{A}_{11} \overline{x}_{1 i}(k)+\overline{u}(k) \\
		\overline{y}(k)=\overline{A}_{21} \overline{x}_{1 i}(k).
	\end{array}\right.
	$$
	The reduced-order observer is given as follows
	\begin{equation}\label{obser2}
		\begin{aligned}
			\left\{\begin{array}{l}\hat{\overline{x}}_{1 i}(k+1)=\overline{A}_{11} \hat{\overline{x}}_{1 i}(k)+\overline{u}(k)+\overline{L} \left(\overline{y}(k)-\hat{\overline{y}}(k)\right), \\
				\hat{\overline{y}}(k)=\overline{A}_{21} \hat{\overline{x}}_{1 i}(k) ,\end{array}\right.
		\end{aligned}
	\end{equation}
	where $ \overline{L} \in \mathbb{R}^{n \times q}$ is the state estimation gain.
	
	Letting $\hat{x}_i(k)=\left[\begin{array}{l}\hat{\overline{x}}_{1 i}(k) \\ \overline{x}_{2 i}(k)\end{array}\right]$, the information that the Agent $i$ sent to other agents is generated as
	\begin{equation}\label{addlap2}
		{\theta}_{i}(k)=\hat{x}_i(k)+\eta_{i}(k)=\left[\begin{array}{l}\hat{\overline{x}}_{1 i}(k) \\ \overline{x}_{2 i}(k)\end{array}\right]+\eta_{i}(k),
	\end{equation}
	where $\eta_{i}(k) \sim \operatorname{Lap}(0, b_i(k))$, and the observer-based controller can be rewritten as follows
	\begin{equation}\label{u2}
		u_i(k)=K \sum_{j \in \mathcal{N}_i} a_{i j}\left({\theta}_j(k)-\left[\begin{array}{l}\hat{\overline{x}}_{1 i}(k) \\ \overline{x}_{2 i}(k)\end{array}\right]\right),
	\end{equation}
	where $K=\left[\begin{array}{ll}K_1 & K_2\end{array}\right]$, $K_1 \in \mathbb{R}^{r \times (n-q)}$, $K_2 \in \mathbb{R}^{r \times q}$.
	
	\
	
	Next, we will derive the sufficient conditions for achieving consensus and differential privacy based on the observer \eqref{obser2}, the noise addition mechanism \eqref{addlap2}, and the controller \eqref{u2}.
	\subsection{Consensus}
	Similar to the case of  full-order observer in Section \ref{Sfull}, we first give the following  mean square and almost sure consensus theorem.
	\begin{theorem}\label{consen2}
		Consider  $b_i(k)=c_{i} p_{i}(k)$ with $c_{i}>0$, $p_{i}(k)>0$, and $\sum_{k=0}^{\infty}p_{i}(k) < \infty$. If  there exist matrices $K$ and $\overline{L}$ such that $\rho(I_{N-1} \otimes A-\Lambda \otimes B K)<1$ and $\rho(\overline{A}_{11}-\overline{L} \overline{A}_{21})<1$ with $\Lambda=\operatorname{diag}\left\{\lambda_{2}, \cdots, \lambda_{N}\right\}$, then both the mean  square consensus and the  almost sure  consensus are achieved for  the MAS \eqref{resys1} with the observer \eqref{obser2} and the control \eqref{u2}.
	\end{theorem}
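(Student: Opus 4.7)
The plan is to parallel the proof of Theorem \ref{consen1}, exploiting the fact that the reduced-order observer error satisfies an autonomous decaying recursion that is decoupled from both the control input and the injected Laplace noise. First I would define the observer error $\bar{e}_{1i}(k)=\bar{x}_{1i}(k)-\hat{\bar{x}}_{1i}(k)$ and, using \eqref{resys2} and \eqref{obser2}, derive its closed-loop dynamics as $\bar{e}_{1i}(k+1)=(\overline{A}_{11}-\overline{L}\,\overline{A}_{21})\bar{e}_{1i}(k)$, where the terms involving $\overline{A}_{12}\bar{x}_{2i}(k)$, $\overline{B}_{1}u_{i}(k)$ and $\bar{u}(k)$ cancel by construction of the observer. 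Since $\rho(\overline{A}_{11}-\overline{L}\,\overline{A}_{21})<1$, this yields $\bar{e}_{1i}(k)\to 0$ exponentially; together with $e_{i}(k)=[\bar{e}_{1i}(k)^{\mathrm T},\,0]^{\mathrm T}$, we obtain $\lim_{k\to\infty}\|x_{i}(k)-\hat{x}_{i}(k)\|=0$ both in mean square and almost surely, independently of the noise schedule.

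Next, stacking the $N$ agents and substituting \eqref{u2} into \eqref{resys1} produces a closed-loop state equation of exactly the same structural form as \eqref{comp1}:
$$x(k+1)=\bigl[(I_{N}\otimes A)-(L_{G}\otimes BK)\bigr]x(k)+(L_{G}\otimes BK)e(k)+(A_{G}\otimes BK)\eta(k).$$
Defining the consensus error $\delta(k)=[(I_{N}-J_{N})\otimes I_{n}]x(k)$ and applying the orthogonal change of coordinates induced by $\Psi=[\mathbf{1}_{N}/\sqrt{N},\phi_{2},\ldots,\phi_{N}]$, the system reduces to the triangular form
$$\xi(k+1)=R_{1}\xi(k)+R_{2}\psi(k)+\widetilde{M}\widetilde{\eta}(k),\qquad \psi(k+1)=\bar{R}_{3}\psi(k),$$
where $R_{1}=I_{N-1}\otimes A-\Lambda\otimes BK$, $R_{2}=\Lambda\otimes BK$, and $\bar{R}_{3}$ is the block matrix obtained by lifting $\overline{A}_{11}-\overline{L}\,\overline{A}_{21}$ (with zero blocks accounting for the directly measured coordinates). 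By hypothesis $\rho(\bar{R}_{3})<1$, so $\psi(k)$ decays geometrically and in particular $\sum_{k=0}^{\infty}\|\psi(k)\|^{2}<\infty$.

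From this point on, the argument replays the proof of Theorem \ref{consen1} verbatim. For mean square consensus, I would Jordan-triangularize $R_{1}$ via $P_{1}^{-1}R_{1}P_{1}=J_{R}$, obtain the semi-decoupled scalar recursion analogous to \eqref{zetaq}, and use the backstepping method together with $\sum_{k=0}^{\infty}p_{i}(k)<\infty$ and the exponential decay of $Z_{q,j}(k)$ driven by $\psi(k)$, to conclude inductively that $\lim_{k\to\infty}E\|\zeta_{q,j}(k)\|^{2}=0$ for every block and index, hence $\lim_{k\to\infty}E\|\xi(k)\|^{2}=0$. For almost sure consensus, I would take $V(k)=\xi^{\mathrm T}(k)P\xi(k)$ with $P$ solving $R_{1}^{\mathrm T}PR_{1}-P+Q=0$ and choose $\varepsilon_{1}>0$ small so that $\varepsilon_{1}R_{1}^{\mathrm T}PR_{1}<Q$; then I would derive an almost-supermartingale inequality $E\{V(k+1)\mid\mathcal{F}_{k}\}\le V(k)+v(k)-\omega(k)$ where $v(k)$ absorbs the $\psi(k)$-dependent and noise-dependent terms and is summable by the preceding bounds, and invoke Lemma \ref{semi} together with $\lim_{k\to\infty}E\{V(k)\}=0$ to obtain $V(k)\to 0$, and therefore $\xi(k)\to 0$, almost surely.

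The main obstacle will be the careful book-keeping required to verify that, although only the first $n-q$ coordinates are being estimated, the lifted closed-loop dynamics still admits the same triangular $(\xi,\psi)$ decomposition as in the full-order case, so that the spectral condition on $\overline{A}_{11}-\overline{L}\,\overline{A}_{21}$ (rather than on an $n\times n$ matrix $A-LC$) is the right replacement for the observer condition in Theorem \ref{consen1}; once this structural correspondence is established, the mean-square and almost-sure analyses present no new analytical difficulty and can be imported directly.
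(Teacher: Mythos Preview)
Your proposal is correct and follows essentially the same route as the paper. The only cosmetic difference is that the paper keeps the observer error in its reduced $(n-q)$-dimensional form and writes the coupling term as $(L_{G}\otimes BK_{1})\bar{e}_{1}(k)$, whereas you lift it to the full $n$-dimensional $e(k)=[\bar{e}_{1i}(k)^{\mathrm T},0]^{\mathrm T}$ and write $(L_{G}\otimes BK)e(k)$; since $BK\,e_{i}(k)=BK_{1}\bar{e}_{1i}(k)$ these are identical, and thereafter both proofs invoke the Theorem~\ref{consen1} argument without modification.
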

	\begin{proof}
		Let $\overline{e}_{1 i}(k)=\overline{x}_{1 i}(k)-\hat{\overline{x}}_{1 i}(k)$. Then, we have from \eqref{resys1} and \eqref{obser2} that
		$$
		\overline{e}_{1i}(k+1)=(\overline{A}_{11}-\overline{L}\overline{A}_{21}) \overline{e}_{1i}(k).
		$$
		Letting $\overline{e}_1(k)=\left[\overline{e}_{11}^{\mathrm{T}}(k), \overline{e}_{12}^{\mathrm{T}}(k), \cdots, \overline{e}_{1N}^{\mathrm{T}}(k)\right]$, one can find
		$$
		\overline{e}_1(k+1)=\left[I_N \otimes\left(\overline{A}_{11}-\overline{L} \overline{A}_{21}\right)\right] \overline{e}_1(k).
		$$
		If there exists matrix $\overline{L}$ such that $\rho(\overline{A}_{11}-\overline{L} \overline{A}_{21})<1$, then it follows that $$\lim\limits_{k  \rightarrow \infty} \overline{e}_1(k)=0.$$
		
		From \eqref{mainagent}, \eqref{addlap2}, and \eqref{u2}, we have
		$$
		\begin{aligned}
			x_i(k&+1)= A x_i(k)+B K \sum_{j \in \mathcal{N}_i} a_{i j}\left({\theta}_j(k)-\left[\begin{array}{c}
				\hat{\overline{x}}_{1 i}(k) \\
				\overline{x}_{2 i}(k)
			\end{array}\right]\right) \\
			=& A x_i(k)-B K \sum_{j=1}^N\left[L_G\right]_{i j} x_j(k)\\
			&+B K_1 \sum_{j=1}^N\left[L_G\right]_{i j} \overline{e}_{1 j}(k)-B K  \sum_{j=1, j \neq i}^N\left[L_G\right]_{i j} \eta_{j}(k).
		\end{aligned}
		$$
		We can rewrite the above equation in a compact form as follows
		\begin{equation}\label{comp2}
			\begin{aligned}
				x(k+1)=&\left[\left(I_N \otimes A\right)-\left(L_G \otimes B K\right)\right] x(k)\\
				&+\left(L_G \otimes B K_1\right) \overline{e}_{1}(k)+\left(A_G \otimes B K\right) \eta(k).
			\end{aligned}
		\end{equation}
		It can be seen that \eqref{comp2} has a same form of \eqref{comp1}.  If  there exist matrices $K$ and $\overline{L}$ such that $\rho(I_N \otimes A-\Lambda \otimes B K)<1$ and $\rho(\overline{A}_{11}-\overline{L} \overline{A}_{21})<1$, then both mean  square and almost sure consensus are achieved for the system \eqref{mainagent} with the observer \eqref{obser2} by applying similar procedures to Theorem \ref{consen1}.
	\end{proof}
	
	\
	
	The following conclusion immediately follows.
	\begin{corollary}\label{coro6}
		Consider  $b_i(k)=c_{i} g_{i}^{k}$ with $c_{i}>0$ and $0<g_{i}<1$, i.e., $p_i(k)=g_{i}^{k}$. If  there exist matrices $K$ and $\overline{L}$ such that $\rho(I_N \otimes A-\Lambda \otimes B K)<1$ and $\rho(\overline{A}_{11}-\overline{L} \overline{A}_{21})<1$ with $\Lambda=\operatorname{diag}\left\{\lambda_{2}, \cdots, \lambda_{N}\right\}$, then both the mean  square consensus and the  almost sure  consensus are achieved for  the MAS \eqref{resys1} with the observer \eqref{obser2} and the control \eqref{u2}.
	\end{corollary}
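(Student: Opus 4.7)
The plan is to derive Corollary \ref{coro6} as a direct specialization of Theorem \ref{consen2}. Since Theorem \ref{consen2} already gives mean square and almost sure consensus under the more general hypothesis that $b_i(k)=c_i p_i(k)$ with $c_i>0$, $p_i(k)>0$ and $\sum_{k=0}^{\infty} p_i(k)<\infty$, the only thing I need to do is verify that the specific choice $p_i(k)=g_i^k$ with $0<g_i<1$ fits inside this hypothesis; the spectral radius conditions are identical in both statements.

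First I would observe that $p_i(k)=g_i^k>0$ holds automatically for all $k\in\mathbb{Z}_{\geq 0}$ because $g_i>0$, and $c_i>0$ is given. Next, since $0<g_i<1$, the geometric series
\begin{equation*}
\sum_{k=0}^{\infty} p_i(k) = \sum_{k=0}^{\infty} g_i^k = \frac{1}{1-g_i} < \infty
\end{equation*}
converges, so the summability assumption on $p_i(k)$ in Theorem \ref{consen2} is satisfied for every $i\in\{1,\ldots,N\}$.

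Finally, the remaining hypotheses on $K$ and $\overline{L}$, namely $\rho(I_{N-1}\otimes A-\Lambda\otimes BK)<1$ and $\rho(\overline{A}_{11}-\overline{L}\,\overline{A}_{21})<1$, are assumed verbatim in the corollary. Therefore all conditions of Theorem \ref{consen2} hold, and its conclusion yields both the mean square consensus and the almost sure consensus for the MAS \eqref{resys1} under the reduced-order observer \eqref{obser2} and the control \eqref{u2}. Since each step is either a direct substitution or an elementary geometric-series computation, no genuine obstacle arises; the only minor point to be careful about is noting that the exponential decay $g_i^k$ automatically gives summability, which is the weakest nontrivial requirement on $p_i(k)$ imposed by Theorem \ref{consen2}.
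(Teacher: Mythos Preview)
Your proposal is correct and follows exactly the same approach as the paper: the paper's own proof simply notes that $p_i(k)=g_i^{k}$ with $0<g_i<1$ satisfies $p_i(k)>0$ and $\sum_{k=0}^{\infty}p_i(k)<\infty$, and then invokes Theorem~\ref{consen2}. Your write-up is in fact more explicit, since you spell out the geometric-series computation $\sum_{k=0}^{\infty}g_i^{k}=1/(1-g_i)$ that the paper leaves implicit.
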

	
	\begin{proof}
		It is easy to prove that $p_i(k)=g_{i}^{k}$ with $0<g_{i}<1$ satisfies conditions $p_{i}(k)>0$ and $\sum_{k=0}^{\infty}p_{i}(k) < \infty$, so we obtain the above corollary.
	\end{proof}
	
	\
	
	Following the similar methods as in Corollary \ref{corostab}, we can derive the following observer-based stabilization result for a single linear system ($N=1$) with
	\begin{equation}\label{addlap251}
		u(t)=K(\hat{x}(k)+\eta(k))=K\left[\begin{array}{l}\hat{\overline{x}}_{1}(k) \\ \overline{x}_{2}(k)\end{array}\right]+\eta(k).
	\end{equation}
	\begin{corollary}
		Consider  $b(k)=c p^{k}$ with $c>0$ and $0<p<1$. If  there exist matrices $K$ and $\overline{L}$ such that $\rho( A-  B K)<1$ and $\rho(\overline{A}_{11}-\overline{L} \overline{A}_{21})<1$,  then both the mean  square stabilization and the  almost sure  stabilization are achieved for  the single system with the observer \eqref{obser2} and the control \eqref{addlap251}.
	\end{corollary}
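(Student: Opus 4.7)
\emph{Proof Proposal.} The plan is to specialize the multi-agent reduced-order argument of Theorem~\ref{consen2} to the single-agent ($N=1$) case. The key observation is that the coupled dynamics of the reduced observer error and the plant state admit a two-level analysis: deterministic exponential decay of the observer error, followed by Jordan-form backstepping in mean square and a supermartingale step for the almost-sure conclusion, exactly as in Theorem~\ref{consen1}.

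First, I would derive the reduced-order error $\overline e_1(k)=\overline x_1(k)-\hat{\overline x}_1(k)$. Combining \eqref{resys2} with \eqref{obser2} specialised to $N=1$ gives
\[
\overline e_1(k+1)=(\overline A_{11}-\overline L\,\overline A_{21})\,\overline e_1(k),
\]
so the hypothesis $\rho(\overline A_{11}-\overline L\,\overline A_{21})<1$ yields $\|\overline e_1(k)\|\le C_1\mu^k$ for some $\mu\in(0,1)$; in particular $\overline e_1(k)\to 0$ deterministically and $\sum_{k=0}^{\infty}\|\overline e_1(k)\|^2<\infty$. This step mirrors the opening computation of the proof of Theorem~\ref{consen2}.

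Next, I would close the plant loop. Since $\hat x(k)=[\hat{\overline x}_1(k)^{\mathrm T},\overline x_2(k)^{\mathrm T}]^{\mathrm T}$ and $x(k)-\hat x(k)=[\overline e_1(k)^{\mathrm T},0]^{\mathrm T}$, substituting \eqref{addlap251} into \eqref{singles} produces a recursion of the form
\[
x(k+1)=\widetilde A\,x(k)+BK_1\,\overline e_1(k)+BK\,\eta(k),
\]
where $\widetilde A$ is the closed-loop matrix whose spectral radius is less than $1$ by hypothesis. This is structurally identical to the $\xi$-equation whose Jordan-block form is \eqref{zetaq}, with the observer coupling $R_2\psi(k)$ replaced by the exponentially decaying deterministic forcing $BK_1\overline e_1(k)$, and the Laplace term $\widetilde M\widetilde\eta(k)$ replaced by $BK\eta(k)$, whose component variances equal $2c^2p^{2k}$ and are thus summable. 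Mean-square stability then follows via the Jordan-form backstepping argument carried out between \eqref{zetaq} and \eqref{sq23}: on each block one obtains a bound of the shape $r_1^{k+1}\|\cdot\|^2+\sum_{l=0}^{k}r_1^{k-l}(\cdots)$, and the two forcing terms vanish by exponential decay and by summability, respectively.

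For almost-sure stabilization, I would pick $P\succ 0$ with $\widetilde A^{\mathrm T}P\widetilde A-P=-Q$, $Q\succ 0$, set $V(k)=x(k)^{\mathrm T}Px(k)$, and repeat the Lyapunov computation of the almost-sure part of Theorem~\ref{consen1}: an $\varepsilon_1$-weighted Young inequality on the cross term $x^{\mathrm T}\widetilde A^{\mathrm T}P(BK_1\overline e_1)$ together with $E[\eta(k)\eta(k)^{\mathrm T}]=2c^2 p^{2k}I_n$ produces
\[
E\{V(k+1)\mid \mathcal F_k\}\le V(k)+v(k)-\omega(k),
\]
with $v(k)=\alpha\|\overline e_1(k)\|^2+\beta c^2 p^{2k}$ summable and $\omega(k)\ge 0$. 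Lemma~\ref{semi} then forces $V(k)\to V^{\star}$ a.s., and $EV(k)\to 0$ from the mean-square step pins $V^{\star}=0$, so $x(k)\to 0$ almost surely. The only real obstacle is administrative---choosing $\varepsilon_1$ small enough that $\varepsilon_1\widetilde A^{\mathrm T}P\widetilde A<Q$ while keeping the cross term summable---but because $\overline e_1(k)$ decays geometrically rather than merely summably (as $\psi(k)$ does in Theorem~\ref{consen1}), this bookkeeping is strictly easier than in the multi-agent proof.
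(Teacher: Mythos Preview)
Your proposal is correct and is essentially the route the paper intends: it provides no explicit proof for this corollary, merely remarking that it follows ``by the similar methods as in Corollary~\ref{corostab}'' (which in turn defers to Theorem~\ref{consen1}), i.e.\ exactly the specialization of the reduced-order argument of Theorem~\ref{consen2} to $N=1$ that you sketch. The only cosmetic point is a sign ambiguity already present in the paper between \eqref{addlap251} and the hypothesis $\rho(A-BK)<1$; your choice to write the closed-loop matrix abstractly as $\widetilde A$ with $\rho(\widetilde A)<1$ sidesteps this harmlessly, since the subsequent norm estimates are insensitive to the sign of the forcing terms.
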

	
	
	\
	
	{The following theorem explicitly characterizes the mean-square convergence rate for the case of exponentially decaying scale parameter $b_i(k)$. }
	{	\begin{theorem}
			Consider $b_i(k)=c_{i} g_{i}^{k}$ with $c_{i}>0$ and $0<g_{i}<1$. Assume that there exist two matrices $K$ and $\overline{L}$ such that $\rho(I_N \otimes A-\Lambda \otimes B K)<1$ and $\rho(\overline{A}_{11}-\overline{L} \overline{A}_{21})<1$ with $\Lambda=\operatorname{diag}\left\{\lambda_{2}, \cdots, \lambda_{N}\right\}$. Then, the mean square convergence rate is
			$$
			\begin{aligned}
				\rho = \max\left( \rho(I_N \otimes A - \Lambda \otimes BK), \rho(\overline{A}_{11}-\overline{L}\overline{A}_{21}), \max_i g_i \right).
			\end{aligned}
			$$
	\end{theorem}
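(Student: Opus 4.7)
The plan is to mirror the argument used for Theorem \ref{mscr1}, replacing the full-order observer error $\psi$ by the reduced-order observer error $\overline{e}_1$. First I would assemble the closed-loop dynamics from \eqref{comp2} together with the recursion $\overline{e}_1(k+1) = [I_N \otimes (\overline{A}_{11}-\overline{L}\overline{A}_{21})]\,\overline{e}_1(k)$, and then apply the same unitary transformation $\Psi = [\mathbf{1}_N/\sqrt{N},\phi_2,\dots,\phi_N]$ used in Section \ref{Sfull} to project onto the consensus-orthogonal subspace. Because the coupling matrix $L_G \otimes BK_1$ also annihilates the consensus direction (since $L_G \mathbf{1}_N = 0$), this projection produces a block-triangular system of the same shape
\[
\xi(k+1) = R_1 \xi(k) + R_2'\,\overline{\psi}(k) + \widetilde{M}\widetilde{\eta}(k), \qquad \overline{\psi}(k+1) = R_3'\,\overline{\psi}(k),
\]
where $R_1 = I_{N-1}\otimes A - \Lambda\otimes BK$, $R_2' = \Lambda \otimes BK_1$ (the reduced-order analogue of $R_2$), and $R_3' = I_{N-1}\otimes(\overline{A}_{11}-\overline{L}\overline{A}_{21})$.

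Next I would solve the recursion explicitly, obtaining
\[
\xi(k) = R_1^k \xi(0) + \sum_{j=0}^{k-1} R_1^{k-1-j} R_2' (R_3')^j \overline{\psi}(0) + \sum_{j=0}^{k-1} R_1^{k-1-j}\widetilde{M}\widetilde{\eta}(j),
\]
and split $\xi(k) = d_\xi(k)+n_\xi(k)$ into deterministic and stochastic parts. Since $\widetilde{\eta}(j)$ is zero-mean and independent of the initial data, $\mathbb{E}\|\xi(k)\|^2 = \|d_\xi(k)\|^2 + \mathbb{E}\|n_\xi(k)\|^2$, which cleanly decouples the two contributions. Writing $\rho_1 = \rho(R_1)$, $\rho_2 = \rho(R_3') = \rho(\overline{A}_{11}-\overline{L}\overline{A}_{21})$, and using the standard bounds $\|R_1^t\|\le \mathcal{C}\rho_1^t$ and $\|(R_3')^t\|\le \mathcal{C}\rho_2^t$, the deterministic part is controlled as in Theorem \ref{mscr1} by $\mathcal{O}(k^2 \max(\rho_1,\rho_2)^{2k})$, while the Laplace noise bound $\mathbb{E}\|\widetilde{\eta}(j)\|^2 \le 2(\max_i c_i)^2 (\max_i g_i)^{2j}\cdot\text{const}$ gives $\mathbb{E}\|n_\xi(k)\|^2 = \mathcal{O}\bigl([\max(\rho_1^2,\max_i g_i^2)]^k\bigr)$.

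Combining both bounds and taking the $1/(2k)$-th root in the definition of $\rho$ then yields
\[
\rho \le \max\bigl(\rho_1,\rho_2,\max_i g_i\bigr) = \max\bigl(\rho(I_{N-1}\otimes A - \Lambda \otimes BK),\,\rho(\overline{A}_{11}-\overline{L}\overline{A}_{21}),\,\max_i g_i\bigr),
\]
and the polynomial factor $k^2$ that arises when $\rho_1 = \rho_2$ is harmless under this root. The subscript $N$ in $I_N \otimes A - \Lambda \otimes BK$ in the theorem statement is understood as $I_{N-1}$ after projection, and this substitution does not change the spectrum of interest. The matching lower bound is obtained by choosing initial conditions aligned with the slowest eigenmode of $R_1$ or $R_3'$ to saturate the deterministic rate, and by observing that the variance of the stochastic term realizes $\max_i g_i$ exactly whenever that value dominates $\rho_1$.

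The main obstacle is the tightness argument for the stochastic part. One must ensure that $\widetilde{M}$ does not annihilate the noise direction whose decay rate equals $\max_i g_i$, for otherwise the term could decay faster than claimed and the equality $\rho = \max_i g_i$ would fail when the noise rate dominates. This is handled by using the structure $\widetilde{M} = \Phi^{\mathrm T}(A_G - J_N A_G)\otimes BK$ together with the connectivity of $\mathcal{G}$, which guarantees that every non-consensus noise component passes through $\widetilde{M}$ with a nonzero gain, so the variance genuinely inherits the rate $(\max_i g_i)^{2k}$. Aside from this subtlety, every other step is a direct transcription of the argument used for the full-order case in Theorem \ref{mscr1}.
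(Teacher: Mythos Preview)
Your proposal is correct and follows exactly the route the paper takes: the paper's proof consists solely of the sentence ``The proof is analogous to that of Theorem \ref{mscr1} and is therefore omitted here,'' and your write-up is precisely that analogy spelled out, with the reduced-order substitutions $R_2' = \Lambda\otimes BK_1$ and $R_3' = I_{N-1}\otimes(\overline{A}_{11}-\overline{L}\overline{A}_{21})$ in place of $R_2$ and $R_3$. Your additional remarks on the $I_N$ versus $I_{N-1}$ discrepancy and on the non-degeneracy of $\widetilde{M}$ for the lower bound go slightly beyond what the paper states explicitly, but they are consistent with the paper's treatment of the full-order case.
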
}
	
	\begin{proof}
		The proof is analogous to that of Theorem \ref{mscr1} and is  therefore omitted here.
	\end{proof}
	
	\
	
	{	\begin{remark}
			This theorem establishes the mean-square convergence rate under a reduced-order observer, exhibiting a clear correspondence with the full-order observer case. Compared with Theorem \ref{mscr1}, the convergence rate is similarly determined by three factors: the cooperative control dynamics $\rho(I_N \otimes A - \Lambda \otimes BK)$, the observer error dynamics $\rho(\overline{A}_{11}-\overline{L}\overline{A}_{21})$, and the noise decay rate $\max_i g_i$. The key distinction lies in the observer dynamics term-while the full-order observer considers the spectral radius of the full estimation error system matrix $A-LC$, the reduced-order observer only involves the spectral radius of the reduced subsystem matrix $\overline{A}_{11}-\overline{L}\overline{A}_{21}$. This difference reflects the core advantage of reduced-order observers: by leveraging structural information of the output matrix, the stability condition for observer design is reduced from the full-dimensional space to a lower-dimensional subspace, potentially yielding faster convergence characteristics. However, both cases clearly demonstrate that the decay rate of the privacy-preserving noise directly constrains the ultimate system convergence performance, establishing a unified theoretical framework for balancing privacy and convergence rate under different observer structures.
	\end{remark}}
	
	\subsection{Differential privacy}\label{S43}
	The next result gives some sufficient conditions to preserve the differential privacy based on the reduced-order observer \eqref{obser2}. 
	\begin{theorem}\label{DP2}
		Consider $b_i(k)=c_{i} p_{i}(k)$ with $c_i>0$ and $p_i(k)>0$.
		Let the observer's initial values $\hat{\overline{x}}_{y, 1 i}(0)=\hat{\overline{x}}_{y^{\prime}, 1 i}(0)$ for adjacent $y(k)$ and $y^{\prime}(k)$ with $m>0$ and $h(k)>0$.  If there exist  functions $h(k)$ and $p_{i}(k)$, and matrices $K_1$ and $K_2$ such that
		\begin{equation}\label{}
			\begin{aligned}
				\epsilon_i= m\sum_{k=1}^{\infty} \frac{w_{i} \sum_{l=0}^{k-1} v_{i}^{k-l-1} h(l)}{c_{i}p_{i}(k)}+m \sum_{k=0}^{\infty}\frac{h(k)}{c_{i}p_{i}(k)} < \infty,
			\end{aligned}
		\end{equation}
		then the  $\epsilon$-differential privacy is preserved  with  $\epsilon=\max_{1\le i \le N}\epsilon_i$, where $v_{i}=\left\|\overline{A}_{11}-d_i \overline{B}_1 K_1\right\|_1$ and $w_i=\left\|\overline{A}_{12}-d_i \overline{B}_1 K_2\right\|_1$.
	\end{theorem}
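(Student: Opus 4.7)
The plan is to mirror the template used in the proof of Theorem \ref{DP1}. First, applying the change-of-variable identification $\mathbb{P}\{\varphi(\eta(k),y(k))\in O\}=\mathbb{P}\{\eta(k)\in O-\hat{x}_{y,i}(k)\}$ together with the exponential form of the Laplace density, I would bound the likelihood ratio by $e^{\sum_{i,k}\|\beta_i(k)\|_1/b_i(k)}$, where $\beta_i(k)=\hat{x}_{y,i}(k)-\hat{x}_{y',i}(k)$; this reduces the statement to controlling $\sum_{i,k}\|\beta_i(k)\|_1/(c_ip_i(k))$.

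Next I would use the structural decomposition $\hat{x}_i(k)=[\hat{\overline{x}}_{1i}^{\mathrm{T}}(k),\,y_i^{\mathrm{T}}(k)]^{\mathrm{T}}$ to split $\beta_i(k)$ into a predictor-like block $\beta_i^{(1)}(k)$ and a measurement block $\beta_i^{(2)}(k)=y_i(k)-y'_i(k)$. Since the adjacency notion in Definition \ref{adjacent} only lets $y_{i_0}$ vary, for $i\neq i_0$ the second block vanishes; combining this with the common-$\theta$ argument used in Theorem \ref{DP1} (which yields $u_{y,i}(k)-u_{y',i}(k)=-d_iK\beta_i(k)$) forces a homogeneous recursion for $\beta_i^{(1)}$ starting from $\beta_i^{(1)}(0)=0$. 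Hence $\beta_i\equiv 0$ for all $i\neq i_0$, and only the distinguishing agent contributes to the privacy cost.

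The core computation is the $i_0$ recursion. Substituting $u_{y,i_0}(k)-u_{y',i_0}(k)=-d_{i_0}K_1\beta_{i_0}^{(1)}(k)-d_{i_0}K_2\bigl(y_{i_0}(k)-y'_{i_0}(k)\bigr)$ into the observer-difference equation should collapse it to
\[
\beta_{i_0}^{(1)}(k+1)=(\overline{A}_{11}-d_{i_0}\overline{B}_1K_1)\beta_{i_0}^{(1)}(k)+(\overline{A}_{12}-d_{i_0}\overline{B}_1K_2)\bigl(y_{i_0}(k)-y'_{i_0}(k)\bigr).
\]
With $\beta_{i_0}^{(1)}(0)=0$, iterating and invoking $\|y_{i_0}(l)-y'_{i_0}(l)\|_1\le m\,h(l-k_0)\le m\,h(l)$ after shifting the summation index (as in Theorem \ref{DP1}) yields $\|\beta_{i_0}^{(1)}(k)\|_1\le w_{i_0}\,m\sum_{l=0}^{k-1}v_{i_0}^{k-l-1}h(l)$, while $\|\beta_{i_0}^{(2)}(k)\|_1\le m\,h(k)$ follows straight from adjacency. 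Summing after dividing by $c_{i_0}p_{i_0}(k)$ and taking the maximum over $i_0$ then recovers the claimed $\epsilon_i$ and $\epsilon=\max_i\epsilon_i$.

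The hard part will be the careful reduction of the closed-loop observer dynamics so that the recursion truly collapses to the matrices $\overline{A}_{11}-d_{i_0}\overline{B}_1K_1$ and $\overline{A}_{12}-d_{i_0}\overline{B}_1K_2$ stated in the theorem. The control enters the reduced-order observer both through the predictor part $\overline{B}_1u_i(k)$ and through the innovation $\overline{L}(\overline{y}(k)-\hat{\overline{y}}(k))$; I must verify that under the common-$\theta$ conditioning the control difference reduces exactly to $-d_iK\beta_i$, and that the innovation contribution cancels out of the $y$--$y'$ difference rather than producing additional $\overline{L}$-dependent terms in $v_{i_0}$ and $w_{i_0}$.
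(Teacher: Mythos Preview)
Your proposal is correct and follows essentially the same route as the paper's proof. The innovation-cancellation worry you flag at the end is precisely the one nontrivial step; the paper resolves it by rewriting the reduced-order observer in the form
\[
\hat{\overline{x}}_{1i}(k+1)=(\overline{A}_{11}-d_i\overline{B}_1K_1)\,\hat{\overline{x}}_{1i}(k)+\overline{L}\,\overline{A}_{21}\,\overline{e}_{1i}(k)+(\overline{A}_{12}-d_i\overline{B}_1K_2)\,y_i(k)+\overline{B}_1K\!\sum_{j\in\mathcal{N}_i}a_{ij}\theta_j(k),
\]
noting that the observer error obeys the autonomous recursion $\overline{e}_{1i}(k+1)=(\overline{A}_{11}-\overline{L}\,\overline{A}_{21})\overline{e}_{1i}(k)$, and using the matched initial observer values to conclude $\overline{e}_{y,1i}(k)-\overline{e}_{y',1i}(k)\equiv 0$; hence the $\overline{L}$-term cancels in the $y$--$y'$ difference and the recursion for $\beta_{i_0}^{(1)}$ collapses exactly to the form you wrote, with no $\overline{L}$-dependence in $v_{i_0}$ or $w_{i_0}$.
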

	
	\begin{proof}
		From \eqref{resys2} and Definition \ref{adjacent}, we can see that $y(k)$ and $y^{\prime}(k)$ are adjacent if, there exist $i_0 \in \{1,2,...,N \}$ and $k_{0} \geq 0$ such that for $i \neq i_0$, $y_i(k)=y_i^{\prime}(k), k \in \mathbb{Z}_{\geq 0}$, and
		\begin{equation}\label{adj2}
			\begin{aligned}
				\begin{cases}\overline{x}_{2i_0}(k)=\overline{x}_{2i_0}^{\prime}(k), & k<k_{0} \\ \left\|\overline{x}_{2i_0}(k)-\overline{x}_{2i_0}^{\prime}(k)\right\|_1 \leq m h(k-k_{0}), & k \geq k_{0}.\end{cases}
			\end{aligned}
		\end{equation}
		
		Based on the reduced-order observer \eqref{obser2}, define function $\overline{\varphi}(\eta, y)={\theta}$ and  an observation $\bar{O}=\left\{{\theta}_{i}(k), i\in \mathcal{V}\right\}$. It is obvious that
		$$
		\begin{aligned}
			P\{\overline{\varphi}(\eta(k), y(k)) \in \bar{O}\}&=P\{\hat{x}_{y, i}(k)+\eta_{i}(k) \in \bar{O}\}\\
			&=P\{\eta_{i}(k) \in \bar{O}-\hat{x}_{y,i}(k)\}.
		\end{aligned}
		$$
		Letting
		$$
		\begin{aligned}
			\overline{\beta}_{i}(k)&=\hat{{x}}_{y, i}(k)-\hat{{x}}_{y^\prime, i}(k)
			&=\left[\begin{array}{l}
				\hat{\overline{x}}_{y, 1 i}(k)-\hat{\overline{x}}_{y^{\prime}, 1 i}(k) \\
				\overline{x}_{y, 2 i}(k)-\overline{x}_{y^{\prime}, 2 i}(k)
			\end{array}\right],
		\end{aligned}
		$$
		we have
		$$
		\begin{aligned}
			P\{\overline{\varphi}(\eta(k), y^{\prime}(k)) \in \bar{O}\}&=P\{\hat{x}_{y^\prime, i}(k)+\eta_{i}(k) \in \bar{O}\}\\
			&=P\{\hat{x}_{y, i}(k)- \overline{\beta}  _{i}(k)+\eta_{i}(k) \in \bar{O}\}\\
			&=P(\eta_{i}(k)- \overline{\beta}  _{i}(k) \in \bar{O}-\hat{x}_{y,i}(k)\}.
		\end{aligned}
		$$
		Thus, we obtain
		$$
		\begin{aligned}
			&\frac{\mathbb{P}\left\{\boldsymbol{\eta}(k) \in \Omega \mid \overline{\varphi} (\boldsymbol{\eta}(k), y(k)) \in \bar{O}\right\}}{ \mathbb{P}\left\{\boldsymbol{\eta}(k) \in \Omega \mid \overline{\varphi} (\boldsymbol{\eta}(k), y^{\prime}(k)) \in \bar{O}\right\}} \\
			&\quad =\prod_{i\in \mathcal{V}}\prod_{k=0}^{\infty}\frac{\int_{\bar{O}-\hat{x}_{y,i}(k)}L(\eta_i(k),0,b_i(k))d\eta_i(k)}{\int_{\bar{O}-\hat{x}_{y,i}(k)}L(\eta_i(k),- \overline{\beta}  _i(k),b_i(k))d\eta_i(k)}   \\
			& \quad=\prod_{i\in \mathcal{V}}\prod_{k=0}^{\infty}e^{ -\frac{\|\eta_i(k) \|_1}{b_i(k)}+\frac{\| \eta_i(k)+ \overline{\beta}  _i(k)\|_1}{b_i(k)}  }\\
			&\quad \leq \prod_{i\in \mathcal{V}}\prod_{k=0}^{\infty}e^{ \frac{\| \overline{\beta}  _i(k) \|_1}{b_i(k)} }\\
			& \quad= e^{\sum_{i\in \mathcal{V}}\sum_{k=0}^{\infty}\frac{\| \overline{\beta}  _i(k) \|_1}{b_i(k)}}.
		\end{aligned}
		$$
		
		Next, we aim to calculate $\| \overline{\beta}  _i(k) \|_1$.  From \eqref{obser2} and \eqref{u2}, it follows that
		$$
		\begin{aligned}
			\hat{\overline{x}}_{1 i}(k+&1)=\overline{A}_{11} \hat{\overline{x}}_{1 i}(k)+\overline{A}_{12} \overline{x}_{2 i}(k)+\overline{L}(\overline{y}-\hat{\overline{y}})\\
			&+\overline{B}_1\left[K \sum_{j \in \mathcal{N}_i} a_{i j}\left({\theta}_j(k)-\left[\begin{array}{c}
				\hat{\overline{x}}_{1 i}(k) \\
				\overline{x}_{2 i}(k)
			\end{array}\right]\right)\right] \\
			=&\left(\overline{A}_{11}-\overline{L} \overline{A}_{21}\right) \hat{\overline{x}}_{1i}(k)+\overline{L} \overline{A}_{21} \overline{x}_{1i}(k)+\overline{A}_{12} \overline{x}_{2 i}(k) \\
			& -c_i \overline{B}_1\left(K_1 \hat{\overline{x}}_{1i}(k)+K_2 \overline{x}_{2 i}(k)\right)\hspace{-0.08cm}+\hspace{-0.08cm}\overline{B}_1 K \sum_{j \in \mathcal{N}_i} a_{i j} \overline{\theta_j}(k) \\
			=&\left(\overline{A}_{11}-t_i \overline{B}_1 K_1\right) \hat{\overline{x}}_{1i}(k)+\overline{L}\overline{A}_{21} \overline{e}_{1 i}(k)\\
			&+\left(\overline{A}_{12}-t_i \overline{B}_1 K_2\right) \overline{x}_{2 i}(k)+\overline{B}_1 K \sum_{j \in \mathcal{N}_i} a_{i j} {\theta}_j(k).
		\end{aligned}
		$$
		Since $\bar{O}=\left\{{\theta}_i(k), i\in \mathcal{V}\right\}$ for $y$ and $y^\prime$ are the some, we have
		$$
		\begin{aligned}
			\hat{\overline{x}}_{y, 1 i}(k+1)-&\hat{\overline{x}}_{y^{\prime}, 1 i}(k+1)\\
			= & \left(\overline{A}_{11}-t_i \overline{B}_1 K_1\right)\left[\hat{\overline{x}}_{y, 1 i}(k)-\hat{\overline{x}}_{y^{\prime}, 1 i}(k)\right] \\
			&+\overline{L} \overline{A}_{21}(\overline{e}_{y, 1 i}(k)-\overline{e}_{y^{\prime}, 1 i}(k)) \\
			&+\left(\overline{A}_{12}-t_i \overline{B}_1 K_2\right)\left[\overline{x}_{y, 2 i}(k)-\overline{x}_{y^{\prime}, 2 i}(k)\right].
		\end{aligned}
		$$
		Letting $\hat{\overline{x}}_{y, 1 i}(0)=\hat{\overline{x}}_{y^{\prime}, 1 i}(0)$, we obtain $$\overline{e}_{y, 1 i}(0)-\overline{e}_{y^\prime, 1 i}(0)=0$$ and
		$$
		\begin{aligned}
			\hat{\overline{x}}_{y, 1 i}(k&+1)- \hat{\overline{x}}_{y^{\prime}, 1 i}(k+1)\\
			=&\sum_{b=0}^k\left(\overline{A}_{11}-t_i \overline{B}_1 K_1\right)^{k-b}\{[\overline{A}_{2 1}\left(\overline{e}_{y, 1 i}(b)-\overline{e}_{y^{\prime}, 1 i}(b)\right)\\
			&+\left(\overline{A}_{12}-t_i \overline{B}_1 K_2\right)\left[\overline{x}_{y, 2 i}(b)-\overline{x}_{y^{\prime}, 2 i}(b)\right]\}\\
			=&\sum_{b=0}^k\left(\overline{A}_{11}-t_i \overline{B}_1 K_1\right)^{k-b} \\
			&  \times\left\{\overline{L} \overline{A}_{21}\left(\overline{A}_{11}-\overline{L} \overline{A}_{21}\right)^b \left[\overline{e}_{y, 1 i}(0)-\overline{e}_{y^\prime, 1 i}(0)\right]\right.\\
			&\quad+\left. \left(\overline{A}_{12}-t_i \overline{B}_1 K_2\right)\left[\overline{x}_{y, 2 i}(b)-\overline{x}_{y^\prime, 2 i}(b)\right] \right\} \\
			=&\sum_{b=0}^k\left(\overline{A}_{11}-t_i \overline{B}_1 K_1\right)^{k-b} \\
			&  \times \left(\overline{A}_{12}-t_i \overline{B}_1 K_2\right)\left[\overline{x}_{y, 2 i}(b)-\overline{x}_{y^\prime, 2 i}(b)\right]\\
			=&\sum_{b=0}^k v_i^{k-b} w_i \left[\overline{x}_{y, 2 i}(b)-\overline{x}_{y^\prime, 2 i}(b)\right].
		\end{aligned}
		$$
		For $i=i_0$, we have
		$$
		\begin{aligned}
			\|\hat{\overline{x}}_{y, 1 i_0}(k&+1)-\hat{\overline{x}}_{y^{\prime},1 i_0}(k+1)\|_1 \\
			=&\sum_{b=0}^k v_{i_0}^{k-b}w_{i_0} \|\overline{x}_{y, 2 i_0}(b)-\overline{x}_{y^\prime, 2 i_0}(b)\|_1 \\
			& \leq w_{i_0} \sum_{b=k_0}^k v_{i_0}^{k-b}  m h(b-k_0)\\
			& \leq m w_{i_0}  \sum_{b=0}^k v_{i_0}^{k-b} h(b) .\\
		\end{aligned}
		$$
		This together with \eqref{adj2} implies
		$$
		\begin{aligned}
			\|\overline{\beta}_{i_0}(k)\|_1 \hspace{-0.08cm}&\leq \hspace{-0.08cm}\left\|\hat{\overline{x}}_{y, 1 i_0}(k)\hspace{-0.08cm}-\hspace{-0.08cm}\hat{\overline{x}}_{y^\prime,1 i_0}(k)\right\|_1\hspace{-0.08cm}+\hspace{-0.08cm}\| \overline{x}_{y, 2 i_0}(k)\hspace{-0.08cm}-\hspace{-0.08cm}\overline{x}_{y^\prime, 2 i_0}(k) \|_1\\
			& \leq m w_{i_0}  \sum_{l=0}^{k-1} v_{i_0}^{k-l-1} h(l) + m h(k).
		\end{aligned}
		$$
		For $i \ne i_0$, we have $$\|\hat{\overline{x}}_{y, 1 i}(k+1)-\hat{\overline{x}}_{y,1 i}(k+1)\|_1=0 $$ and $$ \| \overline{\beta}  _{i}(k) \|_1=0.$$
		Then, we have
		$$
		\begin{aligned}
			&\frac{\mathbb{P}\left\{\boldsymbol{\eta}(k) \in \Omega \mid \overline{\varphi} (\boldsymbol{\eta}(k), y(k)) \in \bar{O}\right\}}{ \mathbb{P}\left\{\boldsymbol{\eta}(k) \in \Omega \mid \overline{\varphi} (\boldsymbol{\eta}(k), y^{\prime}(k)) \in \bar{O}\right\}}\\
			& \quad\leq e^{m  \sum_{k=1}^{\infty} \frac{w_{i_0} \sum_{l=0}^{k-1} v_{i_0}^{k-l-1} h(l)}{c_{i_0}p_{i_0}(k)}+m \sum_{k=0}^{\infty}\frac{h(k)}{c_{i_0}p_{i_0}(k)}} < \infty.
		\end{aligned}
		$$
		Letting  $\epsilon_i=m  \sum_{k=1}^{\infty} \frac{w_{i} \sum_{l=0}^{k-1} v_{i}^{k-l-1} h(l)}{c_{i}p_{i}(k)}+m \sum_{k=0}^{\infty}\frac{h(k)}{c_{i}p_{i}(k)}$, the  $\epsilon$-differential privacy can be preserved with $\epsilon=\max_{1\le i \le N}\epsilon_i$. The proof is completed.
	\end{proof}
	
	\
	
	Then, we have the following corollaries. 
	The proof follows similar procedures to those in Corollaries \ref{con2} and \ref{con3} and is therefore omitted here.
	
	\begin{corollary}\label{con4}
		Consider  $b_i(k)=c_{i}\frac{1}{(k+1)^2}$, i.e., $p_i(k)=\frac{1}{(k+1)^2}$. Let the observer's initial values $\hat{x}_{y, i}(0)=\hat{x}_{y^{\prime}, i}(0)$ for adjacent $y(k)$ and $y^{\prime}(k)$ with $m>0$ ,  $h(k)>0$, and
		\begin{equation}\label{hbb2}
			\sum_{k=0}^{\infty}h(k)k^2 < \infty.
		\end{equation}
		If there exist function $h(k)$ and matrix $K_1$ such that $v_i<1$, then the  $\epsilon$-differential privacy is preserved  with  $\epsilon=\max_{1\le i \le N}\epsilon_i$, where $\epsilon_i= m w_i {c_i}^{-1}(1 - v_{i})^{-3}\sum_{b=0}^{\infty} h(b) \left[ (b + 2)^2 - (2b^2 + 6b + 3) v_{i} + (b + 1)^2 v_{i}^2 \right]+m{c_i}^{-1}\sum_{b=0}^{\infty}h(b)(b+1)^2$.
	\end{corollary}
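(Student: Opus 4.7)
The plan is to reduce Corollary \ref{con4} directly to Theorem \ref{DP2}, since Theorem \ref{DP2} already expresses the differential-privacy budget $\epsilon_i$ for the reduced-order observer in terms of the generic scale sequence $p_i(k)$, the adjacency bound $h(k)$, and the contraction constants $v_i,w_i$. The only thing that needs to be done for the corollary is to substitute the specific choice $p_i(k)=1/(k+1)^2$ and verify that both summations defining $\epsilon_i$ converge under the stated hypotheses $v_i<1$ and $\sum_{k=0}^\infty h(k)k^2<\infty$, and then to read off a closed-form bound.

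First I would recall from Theorem \ref{DP2} that
\[
\epsilon_i \;=\; m\sum_{k=1}^{\infty}\frac{w_i\sum_{l=0}^{k-1} v_i^{\,k-l-1}h(l)}{c_i p_i(k)}\;+\;m\sum_{k=0}^{\infty}\frac{h(k)}{c_i p_i(k)},
\]
and plug in $p_i(k)=1/(k+1)^2$, turning the denominators into factors $(k+1)^2$ in the numerators. The second sum then becomes $m c_i^{-1}\sum_{k=0}^\infty h(k)(k+1)^2$, which is finite by the hypothesis \eqref{hbb2} (after noting $(k+1)^2 \le 2(k^2+1)$ and $\sum h(k)<\infty$ is itself implied by $\sum h(k)k^2<\infty$ for the tail plus finitely many small-$k$ terms).

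For the first, more delicate sum, I would swap the order of summation to rewrite
\[
\sum_{k=1}^{\infty}(k+1)^2\sum_{l=0}^{k-1} v_i^{k-l-1}h(l)
\;=\;\sum_{l=0}^{\infty}h(l)\sum_{k=l+1}^{\infty}(k+1)^2 v_i^{\,k-l-1},
\]
and then evaluate the inner geometric-type series in closed form using $v_i<1$. A direct reindexing $j=k-l-1$ reduces the inner sum to $\sum_{j\ge 0}(j+l+2)^2 v_i^{j}$, which splits into the three standard sums $\sum v_i^j$, $\sum j v_i^j$, $\sum j^2 v_i^j$. Collecting terms yields exactly the rational factor $\bigl[(l+2)^2-(2l^2+6l+3)v_i+(l+1)^2 v_i^2\bigr]/(1-v_i)^3$ that appears in the statement. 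This is also precisely the identity used in Corollary \ref{con2}, so I can invoke it rather than re-derive it. Combining the two evaluated sums produces the claimed $\epsilon_i$, and taking $\epsilon=\max_{1\le i\le N}\epsilon_i$ finishes the argument.

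The main technical step, and the only place where something could go wrong, is justifying the interchange of the two infinite summations and then verifying that the resulting series $\sum_l h(l)[(l+2)^2-(2l^2+6l+3)v_i+(l+1)^2 v_i^2]$ converges. Tonelli's theorem takes care of the exchange (all terms are nonnegative), and the polynomial-in-$l$ numerator is bounded by a constant times $l^2+1$, so convergence follows from $\sum_l h(l) l^2<\infty$. Aside from this bookkeeping, no new ideas beyond Theorem \ref{DP2} and the algebraic identity already established in the proof of Corollary \ref{con2} are required.
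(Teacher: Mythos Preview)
Your proposal is correct and follows essentially the same approach the paper intends: the paper omits the proof of Corollary~\ref{con4} and states that it ``follows similar procedures to those in Corollaries~\ref{con2} and~\ref{con3},'' which is precisely what you do---substitute $p_i(k)=(k+1)^{-2}$ into the general bound of Theorem~\ref{DP2}, reuse the summation identity already derived in the proof of Corollary~\ref{con2} for the double sum, and verify convergence via the hypothesis $\sum_k h(k)k^2<\infty$. Your added remarks on Tonelli and on bounding $(k+1)^2$ are just extra care beyond what the paper spells out.
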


	\begin{corollary}\label{con5}
		Consider  $b_i(k)=c_{i} g_{i}^{k}$, i.e., $p_i(k)=g_{i}^{k}$,  $0<g_i<1$. Let the observer's initial values $\hat{x}_{y, i}(0)=\hat{x}_{y^{\prime}, i}(0)$ for adjacent $y(k)$ and $y^{\prime}(k)$ with $m>0$ and  $h(k)>0$. Assume that there exist constant $\mathcal{C}$ and real number $\alpha < g_i$ such that
		\begin{equation}\label{ca2}
			h(k) \leq \mathcal{C} \alpha^k
		\end{equation}
		for all sufficiently large $k$. If there exist matrices $K$ and $L$ such that $v_{i}\in(\alpha, g_i)$, then the  $\epsilon$-differential privacy is preserved  with  $\epsilon=\max_{1\le i \le N}\epsilon_i$, where $\epsilon_i=\frac{m (w_i+g_i-v_i)  }{c_{i}(g_{i} - v_{i})} \sum_{b=0}^{\infty} h(b) \left( \frac{1}{g_{i}} \right)^b $.
	\end{corollary}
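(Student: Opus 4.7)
The plan is to specialize Theorem \ref{DP2} to the exponential scale parameter $p_i(k)=g_i^k$, mirroring the derivation of Corollary \ref{con3} in the full-order case. Substituting $p_i(k)=g_i^k$ into the bound of Theorem \ref{DP2} yields two nonnegative series: the observer-induced double sum $\tfrac{m w_i}{c_i}\sum_{k=1}^\infty g_i^{-k}\sum_{l=0}^{k-1} v_i^{k-l-1} h(l)$, plus the direct output-contribution sum $\tfrac{m}{c_i}\sum_{k=0}^\infty g_i^{-k} h(k)$ that is present in the reduced-order setting (and absent in the full-order one).

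The main computational step is to interchange the order of summation in the double sum. Since all summands are nonnegative, Tonelli's theorem justifies the swap. After the swap, the inner sum over $k\geq l+1$ is a geometric series with common ratio $v_i/g_i$, whose summability is guaranteed precisely by the hypothesis $v_i<g_i$. Evaluating this geometric series gives $\tfrac{1}{g_i^l (g_i-v_i)}$, so the double sum collapses to $\tfrac{m w_i}{c_i(g_i-v_i)}\sum_{l=0}^\infty h(l) g_i^{-l}$. Adding the second term and factoring the common series $\sum_{l=0}^\infty h(l) g_i^{-l}$ produces the coefficient $\tfrac{w_i}{g_i-v_i}+1=\tfrac{w_i+g_i-v_i}{g_i-v_i}$, which is exactly the prefactor appearing in the stated $\epsilon_i$.

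The final step is to verify finiteness of the remaining power series $\sum_{l=0}^\infty h(l) g_i^{-l}$. By the assumption $h(k)\leq \mathcal{C}\alpha^k$ for all sufficiently large $k$ with $\alpha<g_i$, the tail of this series is dominated by the convergent geometric series $\sum \mathcal{C}(\alpha/g_i)^l$; the finitely many initial terms contribute only a finite amount. Hence $\epsilon_i<\infty$ and Theorem \ref{DP2} delivers $\epsilon$-differential privacy with $\epsilon=\max_{1\leq i\leq N}\epsilon_i$.

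Compared with the full-order proof, no essentially new obstacle appears; the only additional bookkeeping is the extra single sum coming from the direct dependence of $\overline{x}_{2i}(k)$ on the output, which merely adds a constant coefficient $1$ that combines cleanly with $w_i/(g_i-v_i)$ into $(w_i+g_i-v_i)/(g_i-v_i)$. The one place requiring care is the summation interchange together with the joint use of the two hypotheses $v_i<g_i$ and $\alpha<g_i$, the former to sum the geometric series in $v_i/g_i$ and the latter to guarantee convergence of the remaining power series in $h$.
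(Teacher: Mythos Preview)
Your proposal is correct and takes essentially the same approach that the paper indicates: the paper omits the proof of this corollary, stating only that it ``follows similar procedures to those in Corollaries \ref{con2} and \ref{con3},'' and your argument is precisely the reduced-order analogue of the proof of Corollary \ref{con3}, with the expected extra single sum $\tfrac{m}{c_i}\sum_k h(k)g_i^{-k}$ coming from the direct $\overline{x}_{2i}$-contribution in Theorem \ref{DP2}. Your justification of the summation interchange via Tonelli and your explicit bookkeeping of how the two terms combine into the factor $(w_i+g_i-v_i)/(g_i-v_i)$ are slightly more careful than the paper's own presentation in Corollary \ref{con3}, but the route is the same.
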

	
	\subsection{Differentially private consensus}

	\begin{theorem}\label{}
		Consider   $b_i(k)=c_{i} p_{i}(k)$ with $c_{i}>0$ and $p_i(k)>0$. Let  the observer's initial values $\hat{x}_{y, i}(0)=\hat{x}_{y^{\prime}, i}(0)$ for adjacent $y(k)$ and $y^{\prime}(k)$ with $m>0$ and  $h(k)>0$.
		Assume that there exist functions $h(k)$ and $p_{i}(k)$ and matrices $K$ and $\overline{L}$ such that
		\begin{equation}\label{}
			\begin{cases}
				\rho(\overline{A}_{11}-\overline{L} \overline{A}_{21})<1,\\
				\rho(I_{N-1} \otimes A-\Lambda \otimes B K)<1,\\
				\sum_{k=0}^{\infty}p_{i}(k) < \infty, \\
				\sum_{k=1}^{\infty} \frac{w_{i} \sum_{l=0}^{k-1} v_{i}^{k-l-1} h(l)}{c_{i}p_{i}(k)}+ \sum_{k=0}^{\infty}\frac{h(k)}{c_{i}p_{i}(k)} < \infty
			\end{cases}
		\end{equation}
		with $\Lambda=\operatorname{diag}\left\{\lambda_{2}, \cdots, \lambda_{N}\right\}$. Then, mean square consensus, almost sure consensus, and the preservation of $\epsilon$-differential privacy are simultaneously achieved  with  $\epsilon=\max_{1\le i \le N}\epsilon_i$, where
		$$\epsilon_i= m\sum_{k=1}^{\infty} \frac{w_{i} \sum_{l=0}^{k-1} v_{i}^{k-l-1} h(l)}{c_{i}p_{i}(k)}+m \sum_{k=0}^{\infty}\frac{h(k)}{c_{i}p_{i}(k)}.$$
	\end{theorem}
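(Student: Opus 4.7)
The plan is to recognize that the final statement is a composite result: it asserts the simultaneous validity of three previously separated conclusions (mean square consensus, almost sure consensus, and $\epsilon$-differential privacy) under the reduced-order observer setup of Section \ref{Sreduce}. Hence, rather than constructing a fresh argument, I would show that the four listed hypotheses decouple cleanly into the premises of Theorem \ref{consen2} and Theorem \ref{DP2}, and then invoke those two theorems in turn on the same closed-loop system \eqref{resys1}--\eqref{u2}. The essential observation is that the scale-parameter design and the gain design enter consensus and privacy through non-conflicting channels, so no joint feasibility argument is required beyond what is already encoded in the four displayed conditions.

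First, I would note that the spectral conditions $\rho(\overline{A}_{11}-\overline{L}\overline{A}_{21})<1$ and $\rho(I_{N-1}\otimes A - \Lambda\otimes BK)<1$, together with $b_i(k)=c_i p_i(k)$ satisfying $c_i>0$, $p_i(k)>0$, and $\sum_{k=0}^{\infty}p_i(k)<\infty$, are exactly the hypotheses of Theorem \ref{consen2}. Applying Theorem \ref{consen2} directly to the MAS \eqref{resys1} under the reduced-order observer \eqref{obser2} and the controller \eqref{u2} yields both $\lim_{k\to\infty} E\|x_j(k)-x_i(k)\|^2 = 0$ with $\lim_{k\to\infty} E\|x_i(k)-\hat{x}_i(k)\|^2 = 0$, and the corresponding almost-sure limits, for all $i,j$ and all initial conditions.

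Next, for the privacy part, I would verify that the fourth hypothesis, namely
\[
\sum_{k=1}^{\infty} \frac{w_i \sum_{l=0}^{k-1} v_i^{k-l-1} h(l)}{c_i p_i(k)} + \sum_{k=0}^{\infty}\frac{h(k)}{c_i p_i(k)} < \infty,
\]
combined with the matched initial observer states $\hat{\overline{x}}_{y,1i}(0)=\hat{\overline{x}}_{y',1i}(0)$ and the adjacency specification $(m,h(k))$, is precisely the premise of Theorem \ref{DP2}. Invoking Theorem \ref{DP2}, the mechanism preserves $\epsilon_i$-differential privacy for each agent $i$ with the stated $\epsilon_i$, and the composite privacy level is obtained as $\epsilon=\max_{1\le i \le N}\epsilon_i$ by taking the maximum over agents inside the exponent of the likelihood-ratio bound.

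Finally, I would close the argument by noting that the consensus conclusions of Theorem \ref{consen2} and the privacy conclusion of Theorem \ref{DP2} are statements about the same sample paths of the same closed-loop system, so their joint validity requires no additional work beyond observing that all four hypotheses are assumed to hold simultaneously. I do not expect any genuine obstacle here; the only thing worth being careful about is to confirm that the noise sequence $b_i(k)=c_i p_i(k)$ used for privacy is identical to the one used to establish the mean-square and almost-sure convergence (so that $\sum_k p_i(k)<\infty$ and the summability condition in (iv) refer to the same $p_i$), and that the observer gain $\overline{L}$ and feedback gain $K=[K_1\ K_2]$ appearing implicitly in the quantities $v_i,w_i$ are the same gains that stabilize $\overline{A}_{11}-\overline{L}\overline{A}_{21}$ and $I_{N-1}\otimes A-\Lambda\otimes BK$. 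Once this consistency is stated, the theorem follows immediately by combining Theorems \ref{consen2} and \ref{DP2}.
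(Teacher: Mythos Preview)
Your proposal is correct and matches the paper's approach: the paper treats this theorem as an immediate consequence of Theorems \ref{consen2} and \ref{DP2} and does not even supply a separate proof, so your decomposition into the consensus hypotheses (first three conditions) and the privacy hypothesis (fourth condition plus matched initial observer states) is exactly the intended argument.
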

	
	\
	
	Similarly to Corollary \ref{CDP1} based on the full-order state observer, we can obtain the joint design for achieving consensus and preservation of  $\epsilon$-differential privacy and the proof is omitted here.
	\begin{corollary}\label{CDPPP}
		Let  the observer's initial values $\hat{\overline{x}}_{y, 1 i}(0)=\hat{\overline{x}}_{y^{\prime}, 1 i}(0)$ for adjacent $y(k)$ and $y^{\prime}(k)$ with $m>0$ and $h(k)=\alpha^k$, $\alpha \in (0,v_i)$. Consider $b_i(k)=c_{i} g_{i}^{k}$ with $v_i<g_{i}<1$.  Assume there exist matrices $K$ and $\overline{L}$ such that $ \rho(I_{N-1} \otimes A-\Lambda \otimes B K)<1$ and $ \rho(\overline{A}_{11}-\overline{L} \overline{A}_{21})<1$.
		Then, mean square consensus, almost sure consensus, and the preservation of the $\epsilon$-differential privacy are  achieved  with  $\epsilon=\max_{1\le i \le N}\epsilon_i$, where $\epsilon_i=\frac{ m  g_{i} (w_{i}+g_{i}-v_{i})}{c_{i}\left(g_{i}-v_{i}\right)\left(g_{i}-\alpha\right)}$,  $v_{i}=\left\|\overline{A}_{11}-d_i \overline{B}_1 K_1\right\|_1$, and $w_i=\left\|\overline{A}_{12}-d_i \overline{B}_1 K_2\right\|_1$.
	\end{corollary}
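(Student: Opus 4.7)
The plan is to recognize that this corollary is a clean specialization of two previously established results: Corollary~\ref{coro6} (consensus under the reduced-order observer with exponentially decaying $p_i(k)$) and Corollary~\ref{con5} (differential privacy under the reduced-order observer with exponentially decaying $h(k)$ and $p_i(k)$). Since the hypotheses of Corollary~\ref{CDPPP} are essentially the union of the hypotheses of these two results, the proof should proceed by verifying each set of hypotheses and then performing one explicit geometric-series evaluation.

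First, for the consensus part, I would observe that $b_i(k)=c_i g_i^k$ with $0<g_i<1$ automatically satisfies $p_i(k)=g_i^k>0$ and $\sum_{k=0}^{\infty}p_i(k)=1/(1-g_i)<\infty$. Combined with the assumed spectral conditions $\rho(I_{N-1}\otimes A-\Lambda\otimes BK)<1$ and $\rho(\overline{A}_{11}-\overline{L}\,\overline{A}_{21})<1$, Corollary~\ref{coro6} directly yields both the mean-square and almost-sure consensus claims. No additional argument is required for this half.

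Next, for the privacy part, I would invoke Corollary~\ref{con5} with the specific choices $h(k)=\alpha^k$ and $p_i(k)=g_i^k$. The hypotheses of Corollary~\ref{con5} require $v_i\in(\alpha,g_i)$ and $h(k)\le \mathcal{C}\alpha^k$ for large $k$; both are immediate since $\alpha\in(0,v_i)$, $v_i<g_i<1$, and $h(k)=\alpha^k$ trivially satisfies the tail bound with $\mathcal{C}=1$. Corollary~\ref{con5} then gives
\begin{equation*}
\epsilon_i \;=\; \frac{m(w_i+g_i-v_i)}{c_i(g_i-v_i)} \sum_{b=0}^{\infty} h(b)\Bigl(\frac{1}{g_i}\Bigr)^{b}
\;=\; \frac{m(w_i+g_i-v_i)}{c_i(g_i-v_i)} \sum_{b=0}^{\infty} \Bigl(\frac{\alpha}{g_i}\Bigr)^{b}.
\end{equation*}
Since $\alpha/g_i<1$, the geometric series sums to $g_i/(g_i-\alpha)$, yielding exactly
\begin{equation*}
\epsilon_i \;=\; \frac{m\,g_i\,(w_i+g_i-v_i)}{c_i(g_i-v_i)(g_i-\alpha)},
\end{equation*}
which matches the stated expression. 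Taking $\epsilon=\max_{1\le i\le N}\epsilon_i$ then gives the $\epsilon$-differential privacy conclusion.

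There is essentially no serious obstacle in this argument, since the heavy lifting has already been done in Theorem~\ref{consen2}, Theorem~\ref{DP2}, and their exponentially decaying specializations Corollaries~\ref{coro6} and~\ref{con5}. The only point deserving minor care is the verification that the interval conditions are mutually compatible, namely that $\alpha<v_i<g_i<1$ is consistent with the assumed ranges so that both the series $\sum(\alpha/g_i)^b$ converges and the privacy bound derived from Corollary~\ref{con5} is applicable; this is immediate from the hypothesis $\alpha\in(0,v_i)$ together with $g_i\in(v_i,1)$. Hence the proof reduces to assembling the two prior corollaries and simplifying one geometric sum, justifying the author's remark that the proof can be omitted.
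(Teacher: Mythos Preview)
Your proposal is correct and matches the paper's intended approach: the paper omits the proof entirely, remarking only that it proceeds ``similarly to'' the full-order case (Corollary~\ref{CDPP}), whose one-line proof is to substitute $h(k)=\alpha^k$ and $p_i(k)=g_i^k$ into the general conditions and evaluate the resulting geometric series. Your slightly more explicit route via Corollaries~\ref{coro6} and~\ref{con5}, together with the verification $\alpha<v_i<g_i<1$ and the evaluation $\sum_{b\ge 0}(\alpha/g_i)^b=g_i/(g_i-\alpha)$, is exactly this.
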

	
	
	\subsection{ {$\epsilon^\star$-differentially private consensus}}
{	\begin{theorem}\label{CDP2}
		Let  the observer's initial values $\hat{\overline{x}}_{y, 1 i}(0)=\hat{\overline{x}}_{y^{\prime}, 1 i}(0)$ for adjacent $y(k)$ and $y^{\prime}(k)$ with $m>0$ and $h(k)=\alpha^k$, $\alpha \in (0,v_i)$. Consider $b_i(k)=c_{i} g_{i}^{k}$, $g_{i}\in (v_i,1)$.  Assume there exist matrices $K$ and $\overline{L}$ such that $ \rho(I_{N-1} \otimes A-\Lambda \otimes B K)<1$ and $ \rho(\overline{A}_{11}-\overline{L} \overline{A}_{21})<1$.
		Then, for any desired privacy level $\epsilon^\star > 0$,  there exist  $g_{i}$ such that mean square consensus, almost sure consensus, and $\epsilon^\star$-differential privacy are simultaneously achieved if the following condition holds:
		\begin{equation}
			m(w_i + 1 - v_i) < \epsilon^\star c_i (1 - \alpha)(1 - v_i),
		\end{equation}
		where $v_{i}=\left\|\overline{A}_{11}-d_i \overline{B}_1 K_1\right\|_1$ and $w_i=\left\|\overline{A}_{12}-d_i \overline{B}_1 K_2\right\|_1$.
\end{theorem}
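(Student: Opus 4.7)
The plan is to mirror the strategy of Theorem~\ref{CDP1}, with the privacy formula from Corollary~\ref{CDPP} replaced by the one derived in Corollary~\ref{CDPPP} for the reduced-order observer. Once mean-square and almost-sure consensus are secured by Theorem~\ref{consen2} under the spectral hypotheses on $K$ and $\overline{L}$ together with the choice $g_i \in (v_i, 1)$ and $\alpha \in (0, v_i)$, the task reduces to establishing, for each agent $i$, the existence of some $g_i \in (v_i, 1)$ satisfying
\[
\epsilon_i(g_i) = \frac{m\,g_i\,(w_i + g_i - v_i)}{c_i\,(g_i - v_i)(g_i - \alpha)} \le \epsilon^\star.
\]

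First I would observe that both factors $g_i - v_i$ and $g_i - \alpha$ are strictly positive on $(v_i, 1)$ since $\alpha < v_i$, so clearing the denominator is legitimate and yields the equivalent polynomial condition $f(g_i) \ge 0$ with
\[
f(g) = \epsilon^\star c_i\,(g - v_i)(g - \alpha) - m\,g\,(w_i + g - v_i).
\]
I would then evaluate $f$ at the endpoints of the feasible interval:
\[
f(v_i) = -\,m\,v_i\,w_i < 0, \qquad f(1) = \epsilon^\star c_i (1 - v_i)(1 - \alpha) - m(w_i + 1 - v_i).
\]
The stated hypothesis $m(w_i + 1 - v_i) < \epsilon^\star c_i (1 - \alpha)(1 - v_i)$ is precisely $f(1) > 0$. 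Since $f$ is a polynomial and hence continuous, the Intermediate Value Theorem supplies a root $g_i^{\ast} \in (v_i, 1)$, and any choice of $g_i$ in a right-neighbourhood of $g_i^{\ast}$ (where $f$ stays positive) delivers $\epsilon_i(g_i) \le \epsilon^\star$. Performing this selection for every $i$ and appealing to the identity $\epsilon = \max_{1 \le i \le N} \epsilon_i$ furnished by Corollary~\ref{CDPPP} then yields simultaneous consensus and $\epsilon^\star$-differential privacy.

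The main obstacle is that, unlike the full-order case of Theorem~\ref{CDP1}, the quadratic $f$ here has leading coefficient $\epsilon^\star c_i - m$ whose sign is \emph{not} fixed by the hypotheses: when the injected noise is weak relative to the sensitivity ($\epsilon^\star c_i < m$), the parabola opens downward, so the uniform upward-convexity shortcut used in Theorem~\ref{CDP1} to extract both existence and uniqueness of $g_i$ is no longer available. Fortunately, uniqueness is not required for the statement of Theorem~\ref{CDP2}; only existence of a feasible $g_i$ is needed, and the Intermediate Value Theorem applied to the endpoint values above supplies this regardless of the sign of the leading coefficient, so the difficulty dissolves once one recognizes that the theorem is phrased purely as an existence result.
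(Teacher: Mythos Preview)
Your proposal is correct and takes essentially the same approach as the paper: both form the same quadratic $f(g) = \epsilon^\star c_i (g - v_i)(g - \alpha) - m\,g\,(w_i + g - v_i)$, evaluate $f(v_i) = -m v_i w_i < 0$ and $f(1) = \epsilon^\star c_i (1-\alpha)(1-v_i) - m(w_i + 1 - v_i) > 0$ under the hypothesis, and invoke the Intermediate Value Theorem on $(v_i,1)$. The paper frames the task as locating a root of $f$ (i.e., achieving $\epsilon_i = \epsilon^\star$) and additionally asserts uniqueness of that root by a case split on the sign of the leading coefficient $\epsilon^\star c_i - m$, whereas you target the inequality $\epsilon_i \le \epsilon^\star$ and correctly note that uniqueness is not required for the statement; both routes are valid and the algebraic core is identical.
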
}

	\begin{proof}
	{	The privacy preservation problem reduces to finding $g_i \in (v_i, 1)$ satisfying:
		\[
		\epsilon^\star = \frac{m g_i (w_i + g_i - v_i)}{c_i (g_i - v_i)(g_i - \alpha)}
		\]}
	
	{	Rearranging terms yields the quadratic equation:
		\[
		f(x) = (\epsilon^\star c_i - m)x^2 - [\epsilon^\star c_i (\alpha + v_i) + m(w_i - v_i)]x + \epsilon^\star c_i \alpha v_i = 0.
		\]}
	
	{We first establish that $f(v_i) < 0$. Evaluating at $x = v_i$, we have
		\[
		\begin{aligned}
			f(v_i) &= (\epsilon^\star c_i - m)v_i^2 \hspace{-0.1cm}-\hspace{-0.1cm} [\epsilon^\star c_i (\alpha + v_i)\hspace{-0.05cm} + \hspace{-0.05cm} m(w_i - v_i)]v_i \hspace{-0.05cm}+\hspace{-0.05cm} \epsilon^\star c_i \alpha v_i \\
			&= \epsilon^\star c_i v_i^2 \hspace{-0.05cm}-\hspace{-0.05cm} m v_i^2 \hspace{-0.1cm}- \hspace{-0.1cm}\epsilon^\star c_i (\alpha + v_i)v_i \hspace{-0.05cm}-\hspace{-0.05cm} m(w_i - v_i)v_i \hspace{-0.05cm}+ \hspace{-0.05cm}\epsilon^\star c_i \alpha v_i \\
			&= -m w_i v_i < 0.
		\end{aligned}
		\]
		Next, evaluating at $x = 1$, we have
		\[
		\begin{aligned}
			f(1) &= (\epsilon^\star c_i - m) - [\epsilon^\star c_i (\alpha + v_i) + m(w_i - v_i)] + \epsilon^\star c_i \alpha v_i \\
			&= \epsilon^\star c_i (1 - \alpha)(1 - v_i) - m(w_i + 1 - v_i)
		\end{aligned}
		\]}
	
	{Assume $m(w_i + 1 - v_i) < \epsilon^\star c_i (1 - \alpha)(1 - v_i)$. Then, $f(1) > 0$. Since $f(v_i) < 0$, $f(1) > 0$, and $f(x)$ is continuous, the Intermediate Value Theorem guarantees at least one real root in $(v_i, 1)$. Uniqueness follows from analyzing the quadratic's behavior: regardless of the sign of $(\epsilon^\star c_i - m)$, the opposite signs at the endpoints ensure exactly one root in the interval.
		This completes the proof.}
\end{proof}

	\
	
	{\begin{remark}
			A notable distinction in Theorem \ref{CDP2} is the absence of the observer gain $\overline{L}$ in the privacy condition, contrasting with its presence in Theorem \ref{CDP1} through the term $\|L\|_1$. This fundamental difference arises from the intrinsic structure of the reduced-order observer. The observer is designed to estimate only the unmeasurable portion of the states, while the outputs $y(k)$ itself constitutes the directly accessible information that requires privacy protection.  The estimation process of the internal states, governed by $\overline{L}$, does not directly expose sensitive output information and therefore does not influence the privacy guarantee. This reveals an inherent advantage of reduced-order observers in privacy-preserving control systems: the mechanisms for state estimation and privacy-preserving can be decoupled.
	\end{remark}}
	
	{	\begin{remark}
			Compared with the full-order observer, the reduced-order observer significantly decreases the computational burden. 
			Since only the reduced state $\hat{\overline{x}}_{1 i}(k)\in\mathbb{R}^{n-q}$ is estimated, the observer update requires 
			$O((n-q)^{2})$ operations, instead of the $O(n^{2})$ operations in the full-order case. 
			The control computation $O(d_i n)$ and the $O(n)$ Laplace noise generation remain unchanged. 
			Therefore, the per-agent complexity decreases from $O(n^{2})$ (full-order) to 
			\[
			O((n-q)^{2}) + O(d_i n) + O(n),
			\]
			achieving a strictly lower computational cost. 
			This reduction becomes especially significant when $q$ is large, offering improved scalability for high-dimensional systems.
	\end{remark}}
	
	{	\begin{remark}
			Our framework holds potential for extension to directed graph scenarios.  Although the detailed design of such algorithms is beyond the scope of this paper, one can follow the core ideas presented herein-particularly regarding noise adding and observer design-to develop  differentially private consensus protocols for directed networks. In the case of directed graphs, it is crucial to note that the system converges to a weighted average consensus rather than  a dynamic  average consensus. This stems from the spectral properties of the asymmetric Laplacian matrix, which cause the system to converge to a weighted average. The steady-state weights for this average are determined by the corresponding left eigenvector \cite{erfen}.
		\end{remark}
	}	
	{\begin{remark}
			It is worth mentioning that random delays and packet losses make the closed-loop networked system history-dependent and generally {non-Markovian}, which prevents a direct extension of our current convergence (consensus) analysis to such settings \cite{PL}. Therefore, establishing differentially private consensus under non-ideal communications with delays and packet losses calls for new privacy-preserving algorithms and corresponding analysis tools. Neither communication delays nor packet losses weaken the privacy protection, since the differential privacy noise is injected locally before transmission. This is consistent with the fact that differential privacy is resilient to postprocessing (see Theorem 1 in \cite{LeNy2014}). We leave this important direction for future research.
	\end{remark}}
	\section{Simulation}\label{S5}
	{	In this section, we consider the MAS \eqref{mainagent} with the communication topology depicted in Fig. 1.} The system matrices are given by
	$$
	\begin{gathered}
		A=\left[\begin{array}{cc}
			1.2 & 0 \\
			0 & 0.5
		\end{array}\right], B=\left[\begin{array}{cc}
			1 & 0 \\
			0 & 1
		\end{array}\right], C=\left[\begin{array}{ll}
			1 & 0 
		\end{array}\right],
	\end{gathered}
	$$
	The simulation results are presented separately based on the full-order observer and the reduced-order observer.
	\begin{figure}[!t]
		\centerline{\includegraphics[width=88mm]{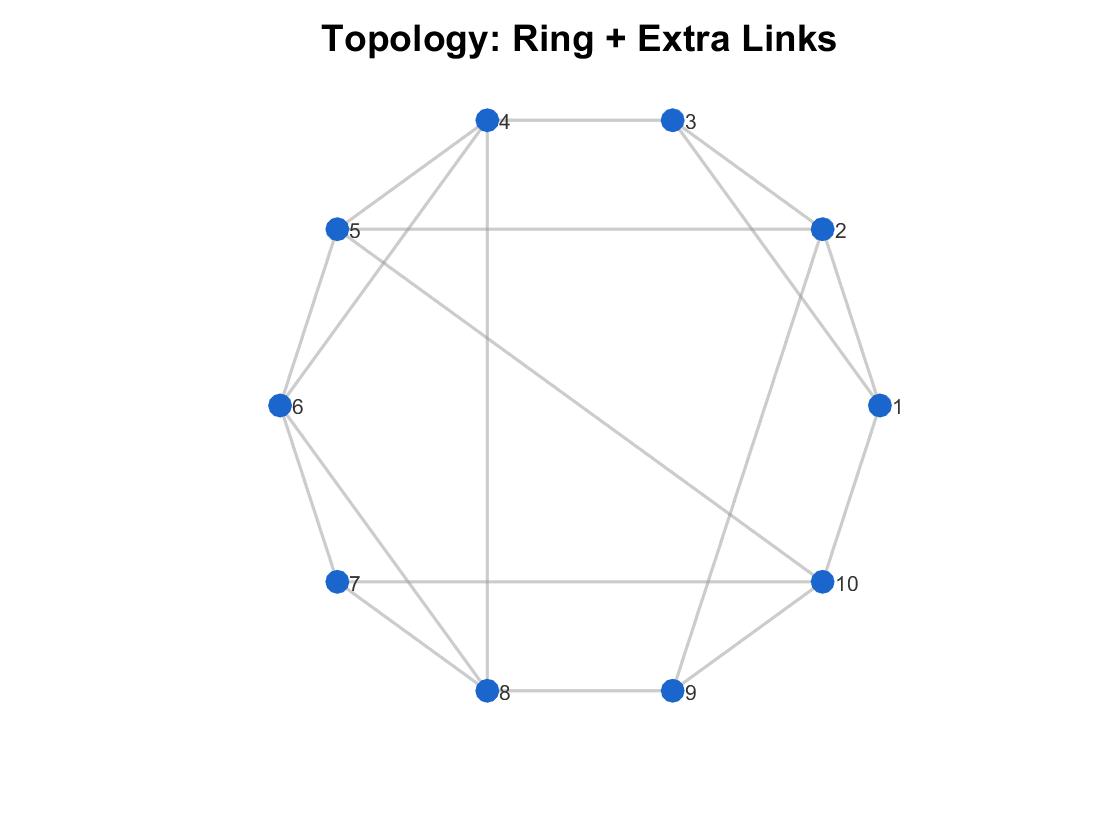}}
		\caption{ {The communication topology of MAS (1).}}
		\label{To}
	\end{figure}

	Example 1 (Full-order observer):
	Selecting
	$$
	\begin{gathered}
		L=\left[\begin{array}{ll}
			0.5 \\
			0.45
		\end{array}\right],
		K=\left[\begin{array}{ll}
			0.18 & 0 \\
			0 & 0
		\end{array}\right]
	\end{gathered}
	$$
	such that $\rho(A-L C)<1$ and $\rho(I_{N-1} \otimes A-\Lambda \otimes B K)< 1$.

	To avoid resonance phenomena, the observer gain parameters $g_i$ and $c_i$ are initialized as random values uniformly distributed within the intervals $[0.9, 0.95]$ and $[1.2, 1.24]$, respectively, where $i = 1, 2, \ldots, 10$.
	
	Denote $\theta_i^a(k)$ as the first component of $\theta_i(k)$. For adjacent $y(k)$ and $y^\prime(k)$, denote $\theta_i^a(k)$ and $(\theta_i^a(k))^\prime$ as the message that Agent $i$ sent to other agents at time $k$, respectively. It can be computed that  $max(l_i)=0.86<g_i$.  Let $h(k)=m \alpha^k$ with $m=0.5$, $\alpha=0.5<min(l_i)=0.68$, $y_1^\prime(k)=y_1(k)+m\alpha^k$, and $y_i^\prime(k)=y_i(k)$ for $i=2,3,\cdots,10$.
	
	Therefore, all conditions of Corollary \ref{CDP1} are satisfied. The mean square consensus, almost sure consensus, and the preservation of $\epsilon$-differential privacy are simultaneously achieved with $\epsilon=23.4$. As shown in Fig 2  and Fig 3, the mean square and almost sure consensus are achieved.  We carried out $1000$ experiments and the histogram of $\theta_1^a(k)$ and $(\theta_1^a(k))^\prime$ at $k = 2$ is given in Fig 4.
	\begin{figure}[!t]
		\centerline{\includegraphics[width=88mm]{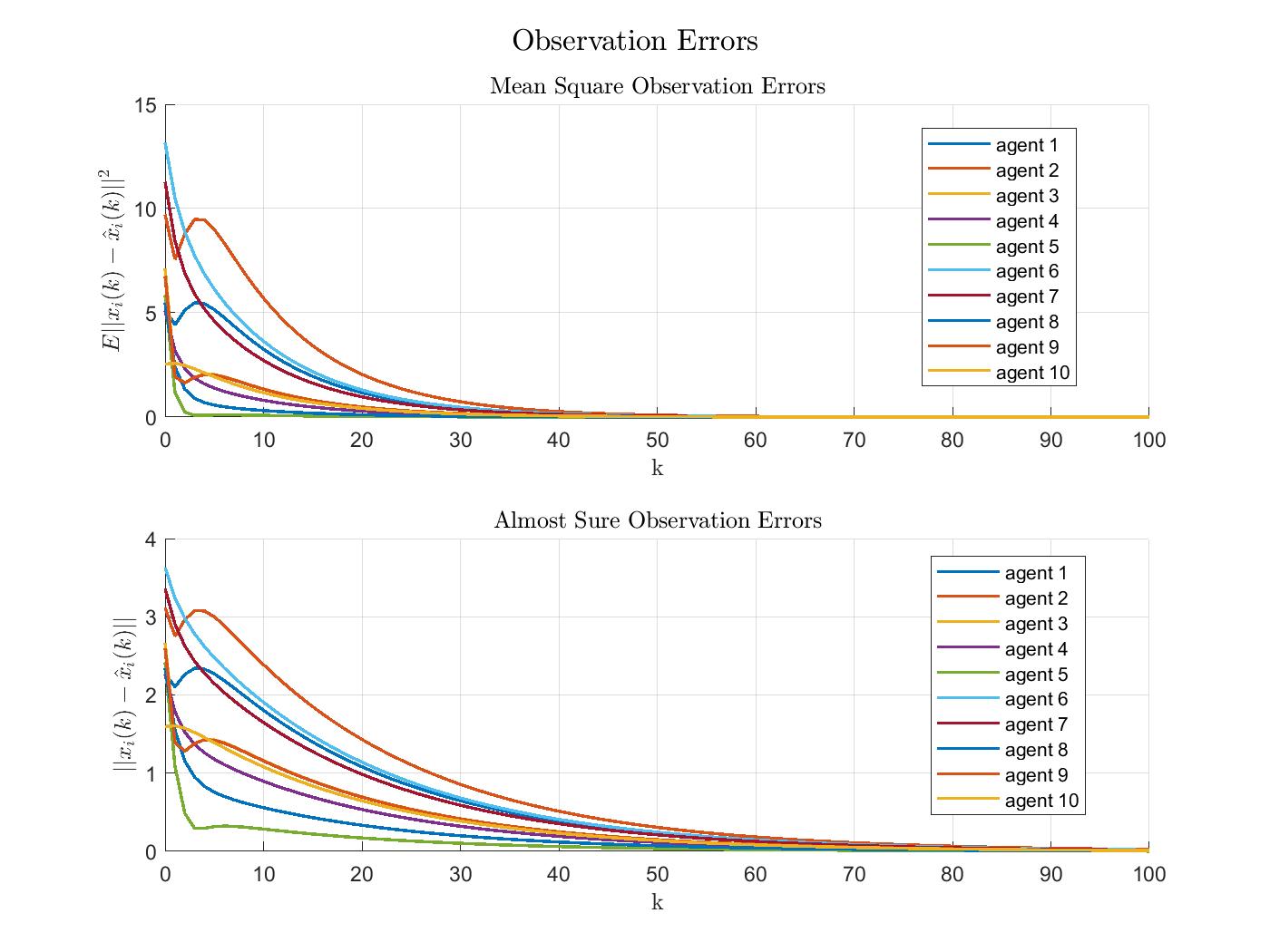}}
		\caption{The trajectory of observation errors (Full-order Observer).}
		\label{1error1}
	\end{figure}
	
	\begin{figure}[!t]
		\centerline{\includegraphics[width=88mm]{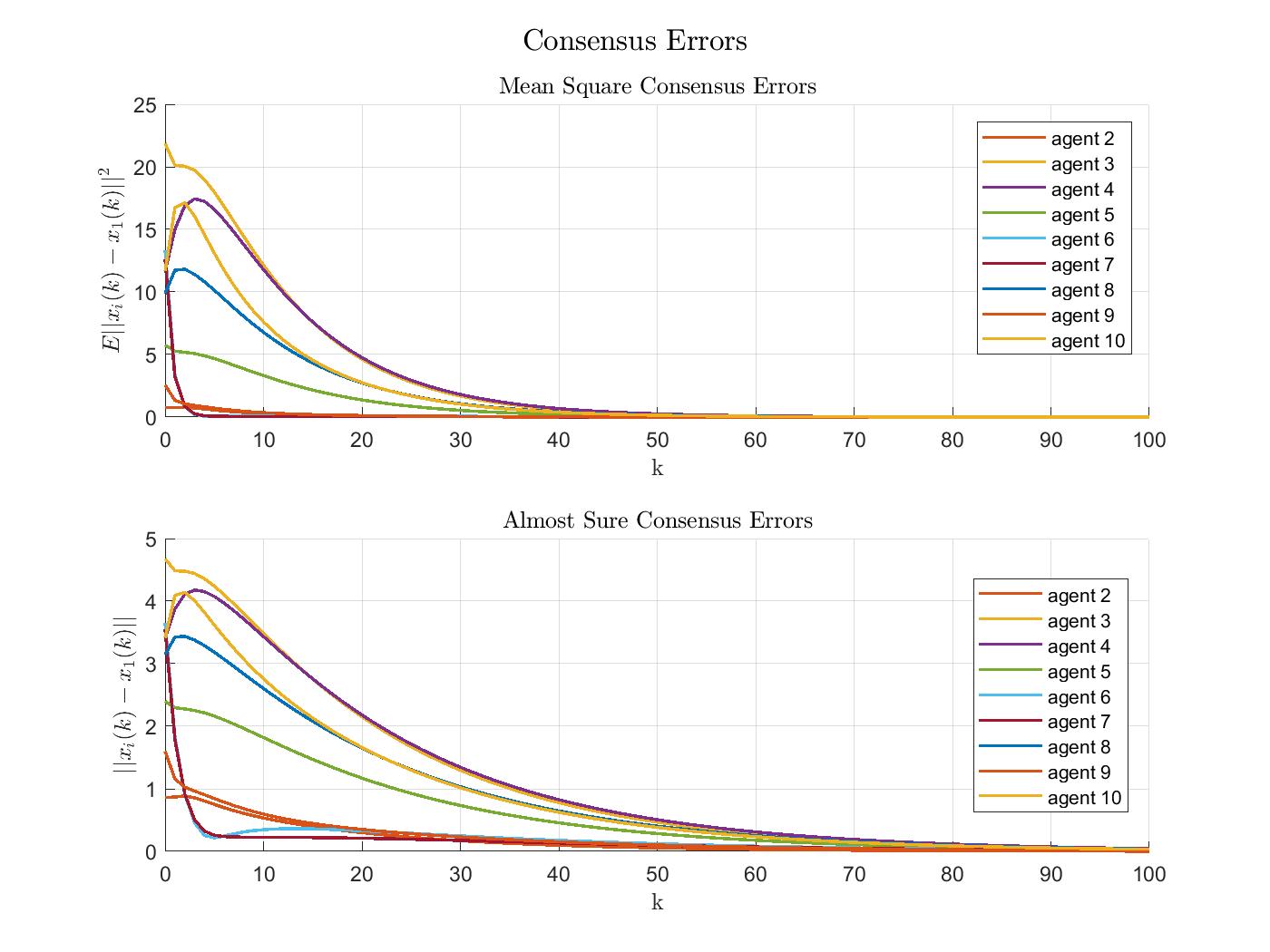}}
		\caption{The trajectory of consensus errors (Full-order Observer).}
		\label{2error2}
	\end{figure}

	\begin{figure}[!t]
		\centerline{\includegraphics[width=88mm]{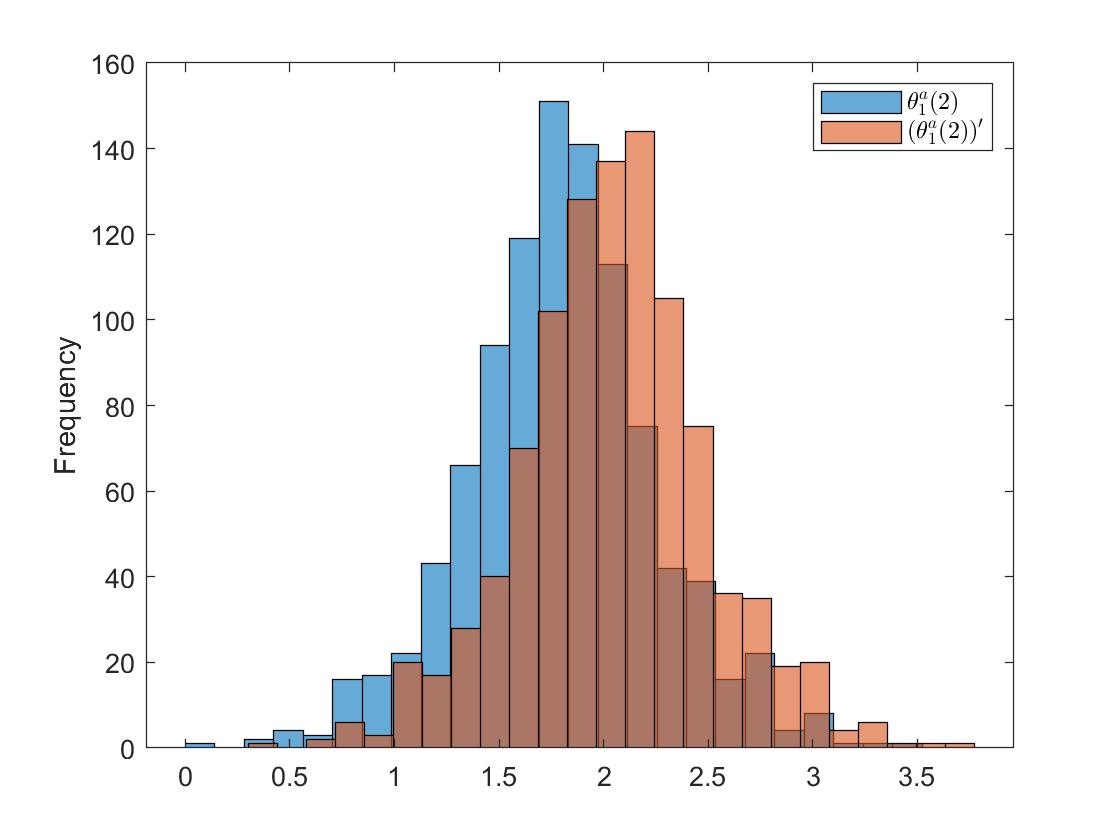}}
		\caption{The histograms of $\theta_1^a(k)$ and $(\theta_1^a(k))^\prime$ at $k = 2$ (Full-order Observer).}
		\label{2consensusb}
	\end{figure}
	
	\
	
	Example 2 (Reduced-order observer):
	Selecting
	$$
	\begin{gathered}
		P=\left[\begin{array}{ll}
			1 & 1 \\
			1 & 0
		\end{array}\right],
	\end{gathered}
	$$
	we have
	$$
	\begin{gathered}
		\overline{A}=PAP^{-1}=\left[\begin{array}{cc}
			0.5 & 1 \\
			0 & 1.5
		\end{array}\right], \\
		\overline{B}=PB=\left[\begin{array}{cc}
			1 & 1 \\
			1 & 0
		\end{array}\right], \overline{C}=CP^{-1}=\left[\begin{array}{ll}
			0 & 1
		\end{array}\right].
	\end{gathered}
	$$
	Select $\overline{L}=0.03$ such that $\rho(\overline{A}_{11}-\overline{L}\overline{A}_{21})=0.5<1$.
	
	Select
	$$
	\begin{gathered}
		K=\left[\begin{array}{ll}
			0.05 & 0 \\
			0 & -0.02
		\end{array}\right]
	\end{gathered}
	$$
	such that $\rho(I_{N-1} \otimes A-\Lambda \otimes B K)< 1$.
	
	To avoid resonance phenomena, the observer gain parameters $g_i$ and $c_i$ are initialized as random values uniformly distributed within the intervals $[0.9, 0.95]$ and $[0.5, 0.54]$, respectively, where $i = 1, 2, \ldots, N$.
	Therefore, all conditions of Corollary \ref{coro6} are satisfied.
	As shown in Fig 5 and Fig 6, the mean square and almost sure consensus are achieved.
	
	Denote $\theta_i^a(k)$ as the first component of $\theta_i(k)$. For adjacent $y(k)$ and $y^\prime(k)$, denote $\theta_i^a(k)$ and $(\theta_i^a(k))^\prime$ as the message that Agent $i$ sent to other agents at time $k$, respectively. It can be  computed that  $max(v_{i})=0.1$.  Let $h(k)=m \alpha^k$ with  $m=0.5$, $\alpha=0.5$, $y_1^\prime(k)=y_1(k)+m\alpha^k$, and $y_i^\prime(k)=y_i(k)$ for $i=2,3,\cdots,10$. It can be seen that $\max(\alpha, v_i)<g_{i}<1$. From Corollary \ref{con5}, the $\epsilon$-differential privacy is preserved with $\epsilon=4.4$. We carried out $1000$ experiments and the histogram of $\theta_1^a(k)$ and $(\theta_1^a(k))^\prime$ at $k = 4$ is given in Fig 7.
	\begin{figure}[!t]\label{ReObservation}
		\centerline{\includegraphics[width=88mm]{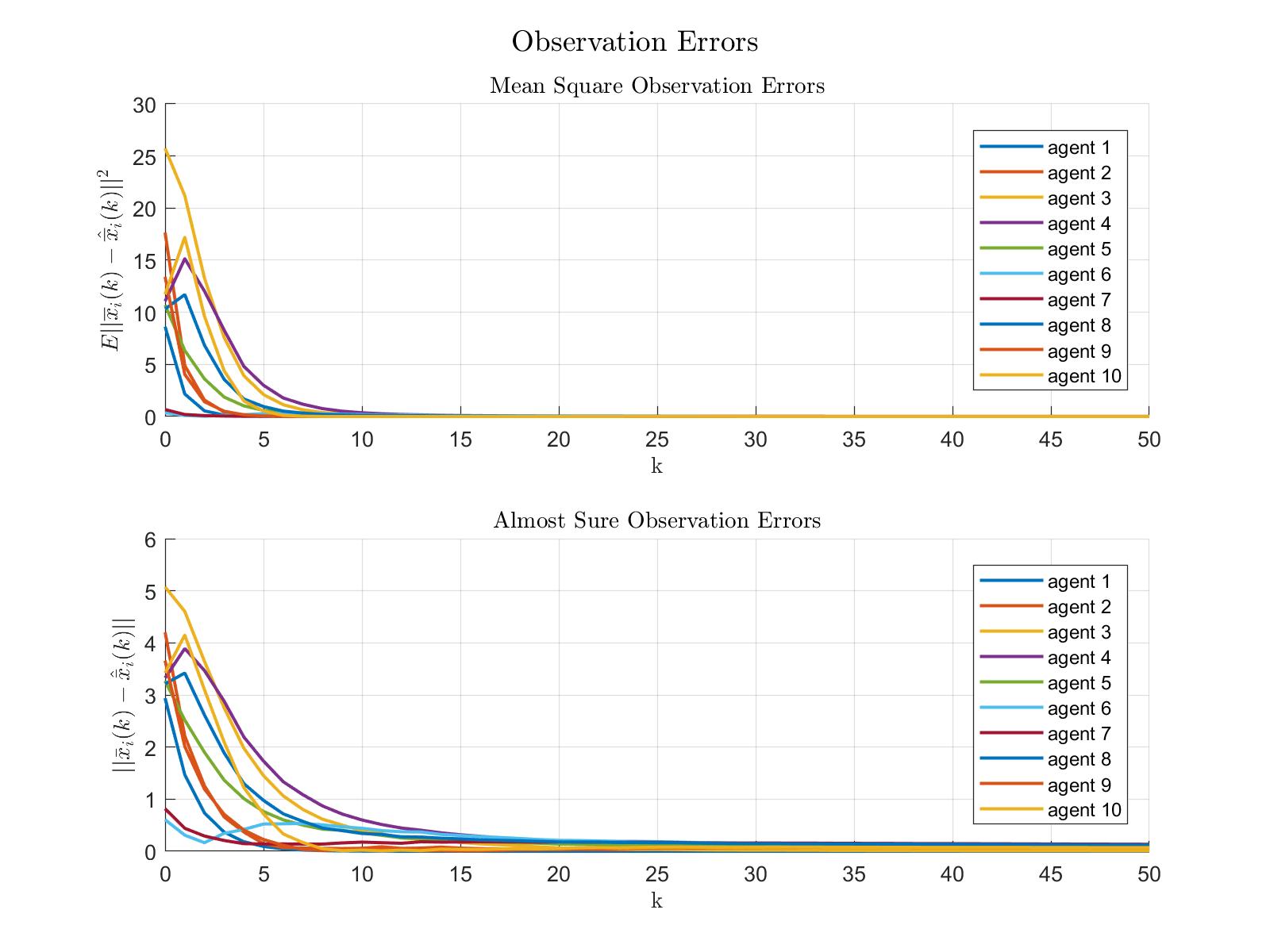}}
		\caption{The trajectory of observation errors (Reduced-order Observer).}
		\label{remserror}
	\end{figure}
	
	\begin{figure}[!t]\label{ReConsensus}
		\centerline{\includegraphics[width=88mm]{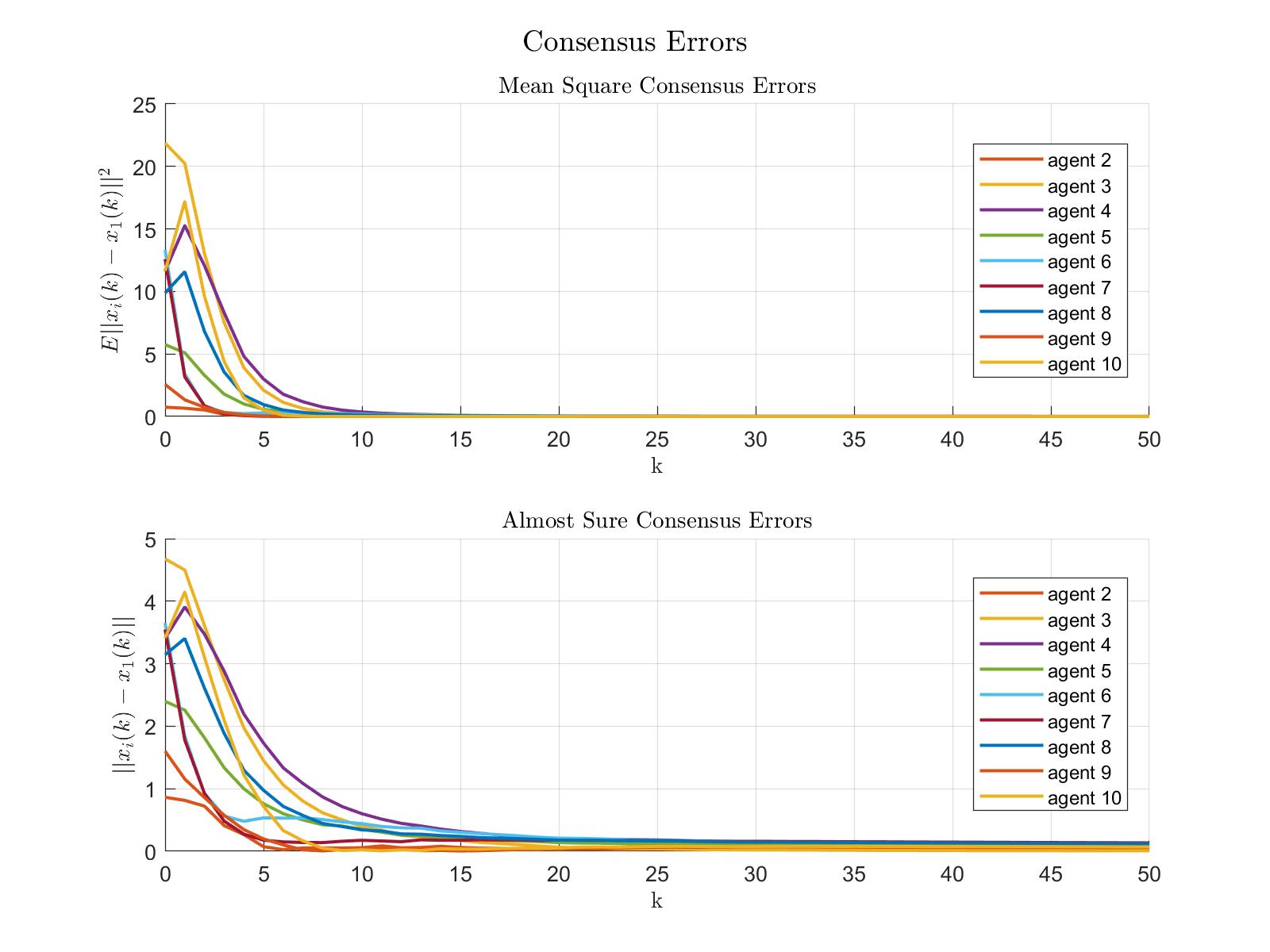}}
		\caption{The trajectory of consensus errors (Reduced-order Observer).}
		\label{remsconsensus}
	\end{figure}

	\begin{figure}[!t]	\centerline{\includegraphics[width=88mm]{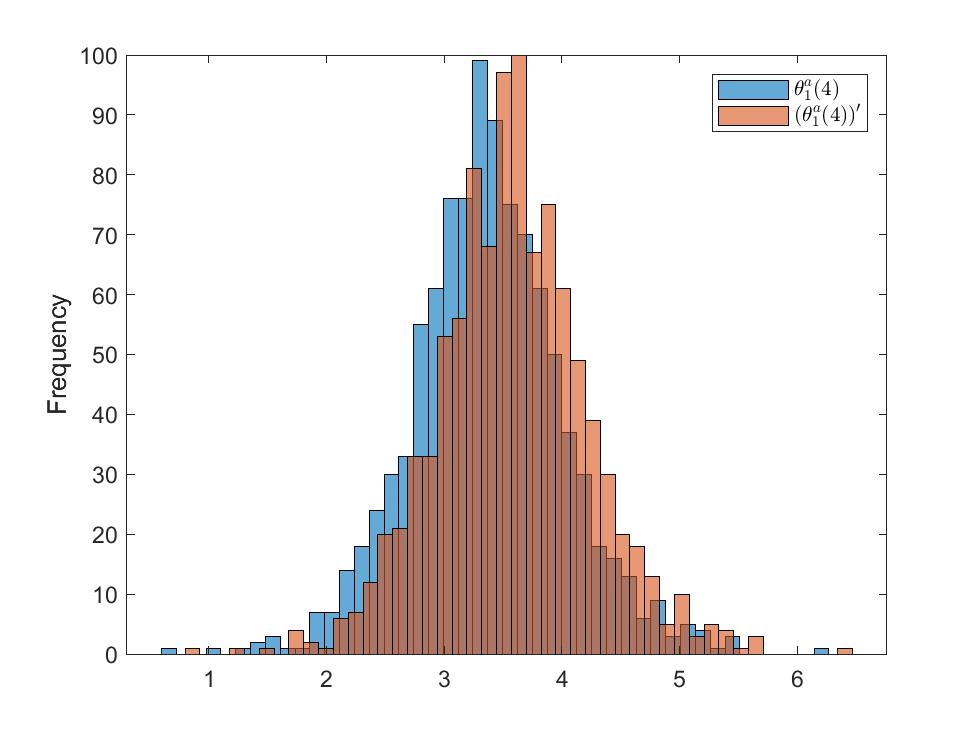}}
		\caption{The histograms of $\theta_1^a(k)$ and $(\theta_1^a(k))^\prime$ at $k = 4$ (Reduced-order Observer).}
		\label{Laplace2}
	\end{figure}
	
	{\begin{remark}
			It is worth noting that the proposed framework admits natural scalability to large-scale networks. As established by our theoretical analysis, such as Corollary 5 and Corollary 10, the validity of the differentially private consensus results does not depend on the number of agents, and larger networks with dozens or even hundreds of agents are expected to exhibit the same privacy-preserving consensus behavior. 
	\end{remark}}
	
	{	\begin{remark}
			This paper adopts differential privacy via noise injection to provide a preservation of output information. It avoids key management and incurs low computational and communication overhead, making it easy to deploy in MASs. Its limitation is the inherent privacy-performance trade-off: injected noise may slow convergence or reduce steady-state accuracy. By contrast, homomorphic encryption can offer stronger, cryptographic (deterministic) privacy but at the cost of heavy computation and communication and key distribution. Affine masking and state-decomposition methods typically preserve accuracy better without adding noise, yet rely on specific structural or invertibility assumptions, which restrict their applicability. 
		\end{remark}
	}
	\section{Concluding remarks}\label{S6}
	This paper investigates differentially private consensus in general linear multi-agent systems using both full-order and reduced-order observers. To protect the potentially vulnerable output information, Laplace noise is introduced into the inter-agent communication process. The sufficient conditions to ensure differentially private consensus based on full-order and reduced-order observers are derived, respectively. The results not only deepen the theoretical understanding of privacy-preserving consensus in MASs but also provide a practical framework for protecting sensitive information in distributed cooperative control.
	
	{	The present study focuses on static communication topologies. Extending these results to switching topologies represents a meaningful direction for future work. In such dynamic settings, protocol stability and convergence would likely be influenced by network connectivity and switching patterns. While differential privacy mechanisms, depending on noise statistics, may exhibit inherent robustness to topological changes, the network structure could still indirectly shape the privacy-performance trade-off. For example, weaker connectivity might prolong convergence, potentially increasing the exposure of intermediate states \cite{ST,ST1}.}

	{	In the future,  implementing differentially private consensus control within a fully distributed architecture presents a promising research avenue. One might explore the use of local rules, such as Metropolis weights, in combination with time-varying noise mechanisms. Such an approach could potentially achieve both mean-square consensus and privacy protection over an infinite time horizon under fixed topologies \cite{Fully}.}
	
	{	Extending the differential privacy consensus framework from existing linear systems to more general nonlinear scenarios may require leveraging a series of advanced analytical tools. For instance, one could explore integrating Lyapunov stability theory with sector-bounded conditions to address complex nonlinear dynamics in the system. The co-design of control and noise injection mechanisms might also be investigated using tools such as linear matrix inequalities \cite{Non}.}
	
	
	Although Laplace and Gaussian noise are widely employed for privacy protection, they often come at the cost of compromising the convergence rate. Future research may focus on developing more efficient noise injection strategies, such as adaptive noise mechanisms\cite{meng2020} and hybrid noise mechanisms\cite{MK2023}, and other  protocols, such as event-triggered protocol \cite{LDLS2024}. The idea of network augmentation is also very interesting\cite{RAK2024}.

	\begin{IEEEbiography}[{\includegraphics[width=1in,height=1.25in,clip,keepaspectratio]{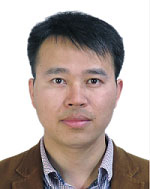}}]
		{Xiaofeng Zong}(Member, IEEE) received the B.S. degree in mathematics from Anqing Normal University, Anhui, China, in 2009, and the Ph.D. degree in probability theory and mathematical statistics from the School of Mathematics and Statistics, Huazhong University of Science and Technology, Hubei, China, in 2014. He was a Postdoctoral Researcher with the Academy of Mathematics and Systems Science (AMSS), Chinese Academy of Sciences (CAS), Beijing, China, from July 2014 to July 2016, and a Visiting Assistant Professor with the Department of Mathematics, Wayne State University, Detroit, MI, USA, from September 2015 to October 2016. Since October 2016, he has been a Professor with the School of Automation, China University of Geosciences, Wuhan, Hubei, China, where now he is an Associate Dean with the School of Automation. His current research interests include stochastic approximations, stochastic systems, delay systems, and multiagent systems.
	\end{IEEEbiography}
	
	\begin{IEEEbiography}[{\includegraphics[width=1in,height=1.25in,clip,keepaspectratio]{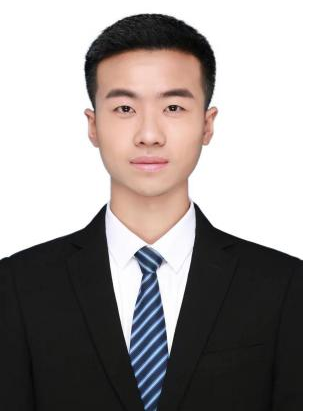}}]
		{Ming-Yu Wang } received the B.E. degree in automation from Central South University in 2022. Now, he  is pursuing a Ph.D.  in automation from China University of Geosciences, Wuhan, Hubei, China. His current research interests include multi-agent systems, differential privacy and stochastic systems.
	\end{IEEEbiography}
	
	\begin{IEEEbiography}[{\includegraphics[width=1in,height=1.25in,clip,keepaspectratio]{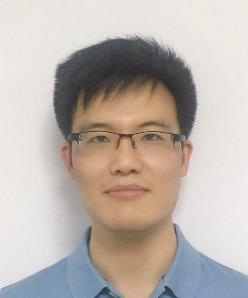}}]
		{Jimin Wang} (Member, IEEE) received the B.S. degree in mathematics from Shandong Normal University, China, in 2012 and the Ph.D. degree from the School of Mathematics, Shandong University, China, in 2018. From May 2017 to May 2018, he was a joint Ph.D. student with the School of Electrical Engineering and Computing, The University of Newcastle, Australia. From July 2018 to December 2020, he was a postdoctoral researcher in the Institute of Systems Science (ISS), Chinese Academy of Sciences (CAS), China. He is currently an associate professor in the School of  Automation and Electrical Engineering, University of Science and Technology Beijing. His current research interests include privacy and security in cyber-physical systems, stochastic systems and networked control systems. He was a recipient of Shandong University's excellent doctoral dissertation.
	\end{IEEEbiography}
	
	\begin{IEEEbiography}[{\includegraphics[width=1in,height=1.25in,clip,keepaspectratio]{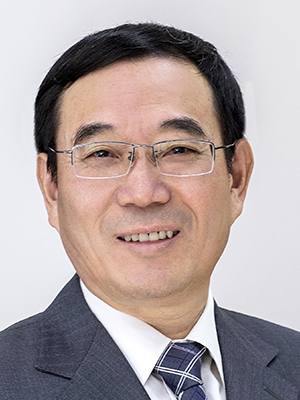}}]
		{Ji-Feng Zhang} (Fellow, IEEE) received the B.S. degree in mathematics from Shandong University, China, in 1985, and the Ph.D. degree from the Institute of Systems Science, Chinese Academy of Sciences (CAS), China, in 1991. Now he is with the School of Automation and Electrical Engineering, Zhongyuan University of Technology; and the State Key Laboratory of Mathematical Sciences, Academy of Mathematics and Systems Science, CAS. His current research interests include system modeling, adaptive control, stochastic systems, and multi-agent systems.
		
		He is an IEEE Fellow, IFAC Fellow, CAA Fellow, SIAM Fellow, member of the European Academy of Sciences and Arts, and Academician of the International Academy for Systems and Cybernetic Sciences. He received the Second Prize of the State Natural Science Award of China in 2010 and 2015, respectively. He was a Vice-President of the Chinese Association of Automation, the Chinese Mathematical Society and the Systems Engineering Society of China. He was a Vice-Chair of the IFAC Technical Board, member of the Board of Governors, IEEE Control Systems Society; Convenor of Systems Science Discipline, Academic Degree Committee of the State Council of China. He served as Editor-in-Chief, Deputy Editor-in-Chief or Associate Editor for more than 10 journals, including Science China Information Sciences, IEEE Transactions on Automatic Control and SIAM Journal on Control and Optimization etc.
	\end{IEEEbiography}

\end{document}